\DeclareMathOperator*{\pr}{\mathbb{P}}
\DeclareMathOperator*{\E}{\mathbb{E}}
\DeclareMathOperator*{\prob}{\overset{p}{\to}}
\DeclareMathOperator*{\dist}{\overset{\mathcal{D}}{\to}}
\DeclareMathOperator*{\iid}{\overset{\text{i.i.d.}}{\sim}}
\newcommand{\R}{\mathbb{R}}
\newcommand{\N}{\mathbb{N}}
\newtheorem{thm}{Theorem}[section]
\theoremstyle{definition}
\newtheorem{prop}[thm]{Proposition}
\newtheorem{lem}[thm]{Lemma}
\newtheorem{cor}[thm]{Corollary}
\theoremstyle{remark}
\title{Tight Distribution-Free Confidence Intervals for Local Quantile Regression}
\author{Jayoon Jang\thanks{Department of Statistics, Stanford University} \and Emmanuel Cand\`es\thanks{Department of Statistics and Mathematics, Stanford University}}
\date{\today}
\begin{document}
	
	\maketitle
	%\vspace{.25in}
	
	\begin{abstract}
		It is well known that it is impossible to construct useful confidence intervals (CIs) about the mean or median of a response $Y$ conditional on features $X = x$ without making strong assumptions about the joint distribution of $X$ and $Y$. 
		This paper introduces a new framework for reasoning about problems of this kind by casting the conditional problem at different levels of resolution, ranging from coarse to fine localization. In each of these problems, we consider
		local quantiles defined as the marginal quantiles of $Y$ when $(X,Y)$
		is resampled in such a way that samples $X$ near $x$ are up-weighted
		while the conditional distribution $Y \mid X$ does not change. We then introduce
		the Weighted Quantile method, which
		asymptotically produces the uniformly most accurate confidence intervals for these local quantiles no matter the (unknown) underlying
		distribution. Another method, namely, the Quantile Rejection method, achieves finite sample validity under no assumption whatsoever.  We conduct
		extensive numerical studies demonstrating that both of these methods are
		valid. In particular, we show that the Weighted Quantile procedure achieves nominal
		coverage as soon as the effective sample size is in the range of 10 to 20.
	\end{abstract}
	
	\section{Introduction} 
	\subsection{Motivation} \label{subsec:motivation}
	Consider the following real dataset from \cite{efron1991compliance}: 164 male doctors were treated with cholestryamine, a drug known to decrease cholesterol level. In addition to the cholesterol level, the compliance of each subject, defined as the proportion of the drug actually taken from the intended dose, is also available. Figure \ref{fig:compliance} shows a scatter plot of decrease in cholesterol versus compliance. 
	
	\begin{figure}[h]
		\centering
		\includegraphics[width=0.8\linewidth]{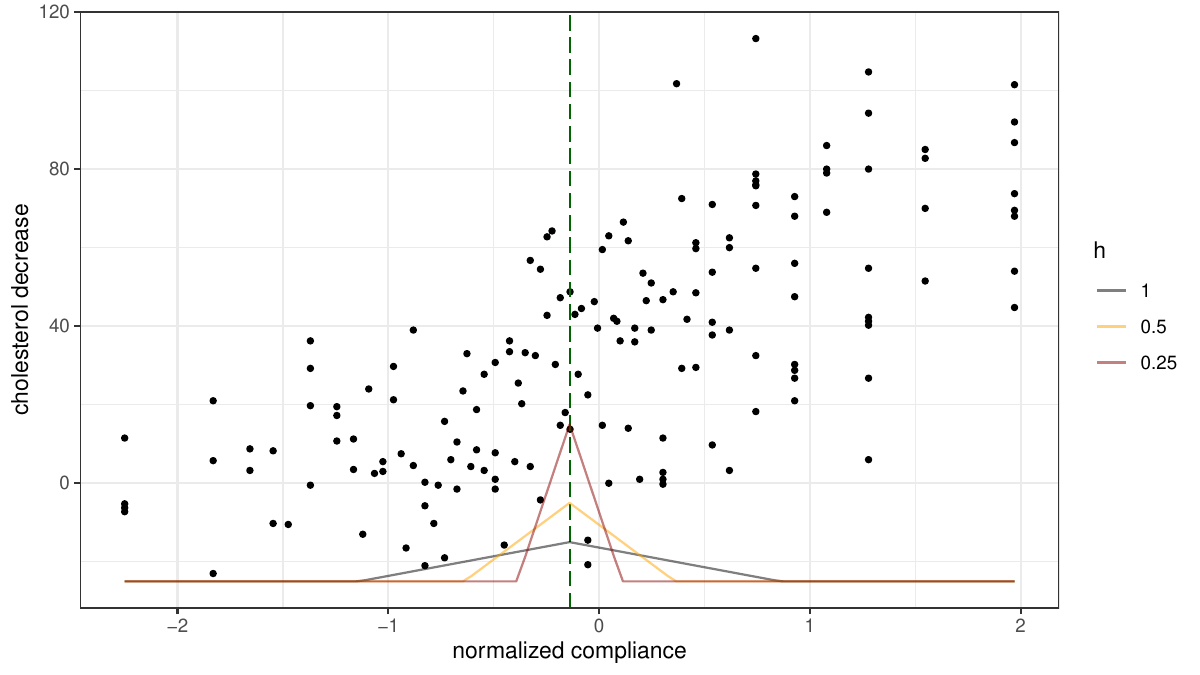}
		\caption{Scatter plot of cholestryamine data and weighting kernels with different values of the bandwidth $h$.}
		\label{fig:compliance}
	\end{figure}
	
	To understand the relationship between a set of predictors $X$ and an outcome $Y$, statisticians often report inferences about the conditional mean $\E[Y|X=x]$ or the conditional median $\text{Med}[Y|X=x]$. Inference problems of these kinds are traditionally modeled by using a surface plus noise model $Y = f(X) + \text{noise}$. To provide inference based on the fitted model, we usually require assumptions such as smoothness of $f$ and some structure on the noise such as sub-Gaussianity. These assumptions can be problematic since they are \textit{empirically unverifiable} as pointed out in \cite{Donoho1988}.
	Moreover, it is unclear what we even mean by the conditional distribution of $Y$ at $X$ exactly being $x$ since it is hard to understand if there actually exists a population of people having a level of compliance $x$ equal to 50\% whatever the level of precision.

	Keeping this in mind, we suggest a shift in emphasis from these classical objects of inference. Our main idea is instead to consider the distribution of $Y$ when $X$ is \textit{near a point of interest}: more specifically, we aim to determine the quantile of $Y$ when $X$ is near $x_0$.
	
	Imagine a researcher comes up with a kernel $K$ and a bandwidth $h$ that models the weights and degree of nearness to the point $x_0$. This yields a reweighted covariate distribution $Q_X^{x_0}$ defined as \begin{align*}
		Q_X^{x_0} \propto K\left( \frac{x_0-\cdot}{h}\right)P_X,
	\end{align*} where $P_X$ is the original covariate distribution. If the kernel is non-increasing with respect to the distance from 0, $Q_X^{x_0}$ gives more weight to samples with covariates close to $x_0$ and less weight to samples that are farther away from $x_0$. We take our new object of inference to be \begin{align*}
		\theta = \text{Med}(Y; (X,Y) \sim Q_X^{x_0} \times P_{Y|X}),
	\end{align*} i.e., the median of the marginal distribution of $Y$ where $X$ is localized to the space near $x_0$. 
	
	Going back to the compliance example, say we are using a triangular kernel of bandwidth $h = 0.5$ for reweighting. Then, our object of inference at compliance $50\%$ is the median of the decrease in cholesterol level for people who comply between $49.5\%$ and $50.5\%$ with more weight towards $50\%$ compliance; this value comes from a more interpretable population than one in which some people have a compliance exactly equal to $50\%$. Additionally, by adjusting $h$, we have the freedom to change the resolution at which we localize the covariate space. If we take the kernel bandwidth to be small, we are smoothing the outcome over a small neighborhood. Figure \ref{fig:compliance} plots the triangular kernel centered at $x_0 = 50\%$ (corresponding to -0.14 for normalized compliance) as the bandwidth $h$ varies.
	
	By rethinking the object of inference, in a way that considers the resolution of the observation of the underlying covariate distribution, we are able to provide reliable inference without making any modeling assumptions.
	This is in stark contrast to classical nonparametric methods that impose unverifiable assumptions on the underlying distribution to offer confidence intervals for the conditional mean or quantile. Additionally, these methods often produce confidence intervals that depend on unknown constants. We provide a discussion on nonparametric methods in Section \ref{subsec: past works}.

	\subsection{Problem statement}
	%We state the setting of our problem more precisely.
	Consider i.i.d.~data $Z_1, \ldots, Z_n \sim P$, where each $Z_i = (X_i, Y_i)$ is a random vector in $\R^d \times \R$, $Y_i$ is a response variable of interest and $X_i$ is a $d$-dimensional vector of covariates. Suppose $P$ has an unknown density $p(x, y) = p(y|x)p(x)$. \footnote{$p$ can be the Radon-Nikodym derivative with respect to any arbitrary measure, but we will use the term ``density" for simplicity.}
	We are interested in the local distribution of the response variable $Y$ at a point $x_0 \in \R^d$. Localization is done by reweighting the distribution of the covariates with some uniformly bounded kernel $K$ that is non-negative and obeys $\int_{\R^d} K(u) du = 1$. Define the covariate-shifted distribution (depending on $K, x_0, h$) as 
	
	\begin{align*}
		\dfrac{dQ}{dP}(x,y) \propto K\left( \frac{x_0-x}{h}\right).
	\end{align*} 
	
	We are interested in making claims about the quantiles of $Y$ with respect to the marginal distribution $Q_Y$ of $Y$, that is, the distribution of $Y$ when $(X,Y) \sim  Q$. 
	
	In the interest of lightening the notation, the distributions defined above do not explicitly show the dependence on $p$, $K$ $x_0$ and $h$.\footnote{The covariate-shifted joint probability distribution function will be denoted as $q(x, y)$, and the marginal probability distribution function and the cumulative distribution function will be denoted as $q_Y(y)$ and $Q_Y(y)$, respectively.}
	
	Our new object of inference is 
	\begin{align}
		\theta_{p} =  Q_{Y}^{-1}(p)= \inf\{y: Q_Y(y) \ge p\},\label{eq:object} 
	\end{align}
	which is the $p$-th quantile of the distribution $Q_{Y}$. This parameter has an intuitive  interpretation; as we shall see, it can also be reliably inferred without making any unverifiable assumptions on the underlying distribution.

	\subsection{Summary of key results}
	%We propose a new object of inference $\theta_p$ that allows us to understand the outcome in a local manner by controlling the resolution. 
	
	This paper builds distribution-free confidence intervals for $\theta_p$ that have length vanishing to 0 as the sample size goes to infinity. 
	The first method we propose is the Weighted Quantile method (Algorithm \ref{alg:wq}) introduced in Section \ref{sec:wq}. This method achieves asymptotically valid coverage and provides an optimal confidence interval in the sense that it is asymptotically uniformly most accurrate unbiased.
	The second is the Quantile Rejection method (Algorithm \ref{alg:rjq}). The method uses rejection sampling to achieve finite sample coverage. 
	
	In Section \ref{sec:Simulation},
	we provide various numerical experiments, which empirically confirm that both of our proposed methods have valid coverage. 
	In Section \ref{sec:realdata}, we apply our proposed methods to the compliance example from Section \ref{subsec:motivation} and a California housing dataset. 
	Finally, we show in Section \ref{Sec:indistinguishable} that it is impossible to design shorter intervals.
	
	The two methods proposed in the paper are implemented in \url{https://github.com/Jayoon/resolution_paper}. The R code for reproducing the experiments and analysis is also available in the same repository.

	\subsection{Past/related works} \label{subsec: past works}
	
	\paragraph{Distribution-free inference}
	Much interest has been devoted to constructing distribution-free \textit{prediction intervals} for $Y_{n+1}$ given exchangeable training data $\{(X_1, Y_1), \ldots, (X_n, Y_n)\}$. Often, this is done using techniques known as conformal inference \citep{vovk2005algorithmic}. Conformal prediction intervals achieve distribution-free marginal coverage $\mathbb{P}(Y_{n+1} \in \hat{C}_n(X_{n+1})) \ge 1-\alpha$. Also, it is known to be impossible to construct a bounded prediction interval that satisfies a stronger notion of conditional coverage $\mathbb{P}(Y_{n+1} \in \hat{C}_n(X_{n+1})|X_{n+1}=x) \ge 1-\alpha$ as shown in \cite{vovk2012conditional, LeiWasserman2014}. Prediction intervals differ from confidence intervals in that they have to capture the inherent randomness of $Y$ and so cannot have width converging to 0 even as the sample size goes to infinity. For non-exchangeable data, \cite{tibshirani2019conformal} provides a method for computing a valid prediction interval when the test set is a covariate-shifted distribution of the training set. While our paper also considers a covariate-shifted distribution as a target, the difference lies in that \cite{tibshirani2019conformal} constructs a prediction interval for $Y_{n+1}$, while our paper constructs a confidence interval for a fixed parameter.

	Distribution-free confidence intervals that are finite-sample valid for parameters such as the mean $\mathbb{E}[Y]$, quantile $\text{Quantile}_p[Y]$, conditional mean $\mathbb{E}[Y|X]$, and conditional quantiles $\text{Quantile}_p[Y|X]$ have been extensively studied. A classical result of \cite{bahadur1956nonexistence} shows that it is impossible to get a bounded confidence interval for the mean without restricting the distribution class. The basic idea is that if the distribution class is too large, there always exist two almost indistinguishable distributions that have an arbitrarily large mean difference. Fortunately, constructing finite-sample confidence interval for quantiles is possible since the probability of a quantile being between any observations can be bounded in terms of binomial probabilities \citep{noether1972distribution}. However, it is impossible to construct a non-trivial distribution-free confidence interval for the {\em conditional} quantiles.
	\cite{Barber2020} showed that in a binary regression setting where $Y \in \{0,1\}$ with nonatomic joint distribution $P$ (meaning a distribution without point masses), the expected length of any confidence interval for the conditional label probability $\pr_P[Y=1|X]$ has a non-vanishing lower bound.  Similarly, \cite{MedarametlaCandes2021} show that there does not exist any distribution-free confidence interval for the conditional median that has vanishing width as the sample size goes to infinity. Note that both results have assumed the distribution is nonatomic. 
	When this assumption is removed, \cite{lee2021distribution} characterize a regime where non-trivial distribution-free inference for $\mathbb{E}[Y|X]$ is possible in the case where the response variable $Y$ is bounded. They introduce the concept of \textit{effective support size} of $P_X$ and show that if it is smaller than the square of the sample size, one can construct a confidence interval of vanishing length.

	\paragraph{Nonparametric statistics} 
	In nonparametric inference, the goal is to make as few assumptions on the model as possible. This is usually done by assuming only that the regression function $\mathbb{E}[Y|X] = f(X)$ belongs to a restricted function class and the covariate distribution is continuous. The function class usually characterizes the smoothness of $f$ by positing the existence of a certain number of bounded derivatives. Examples of widely used function classes are H\"older classes and Besov classes \citep{Tybakov2008}. Such assumptions are crucial to make the estimators of $f(X)$ have desirable properties such as consistency or optimality. Moreover, in order to obtain uniform confidence intervals over a function class for a linear functional of the regression function such as $f(x_0)$, assuming knowledge about the smoothness constant cannot be avoided \citep{Low1997, armstrong2018optimal}. 
	The hardness of constructing nonparametric confidence intervals comes from the difficulty of measuring the bias of the regression function estimate. For instance, smoothing methods using kernels or local polynomials require a choice of a bandwidth $h$, which corresponds to the amount of smoothing. Theoretically, based on the assumed smoothness, the bandwidth is chosen to set the rate of convergence of bias and variance to be the same. In practice, we can obtain a data-driven bandwidth by minimizing some criterion such as a generalized cross-validation error or use a plug-in bandwidth, which is obtained by replacing unknown components of the optimal bandwidth with estimates.
	For more literature on nonparametric inference, see for example, \cite{wasserman2006all, gine2021mathematical} and references therein.

	Instead of choosing a single optimal smoothing parameter, \cite{chaudhuri2000scale} consider a family of smooth curve estimates by varying the smoothing parameter or the bandwidth $h$. This has some similarity to our work in the sense that they turn their attention to $\mathbb{E}[\hat{f}_h(x)]$, which is a smoothed version of the regression function $f(x)$. However, their inference focuses only on identifying statistically significant local maximizers and minimizers of $\mathbb{E}[\hat{f}_h(x)]$.

	\section{Weighted Quantile method} \label{sec:wq}
	We now outline our first of two methods.
	First, we introduce the method in Section \ref{subsec:wq}. Then we show that the proposed confidence interval is efficient in Section \ref{subsec:optimality}
	We provide proofs for all the results from this section in Appendix \ref{app:proofs}.
	
	\subsection{Method} \label{subsec:wq}
	
	Suppose we have i.i.d.~samples $Y_1, \ldots Y_n$ from some distribution $P$. Then, a natural estimate of the distribution function of $P$ is its empirical cumulative distribution function $\hat{F}_n(y) = \sum_{i=1}^n n^{-1}I(Y_i \le y)$, which weighs each sample equally. In our setting, we are interested in estimating a shifted distribution $Q$ given samples from $P$.
	Noting that the likelihood ratio of the joint shifted distribution $Q$ and the original distribution $P$ is proportional to $K\left( \frac{x_0-\cdot}{h}\right)$, one possible way to estimate the distribution function of $Q_Y$ is to reweight each sample from $P$ proportional to the likelihood ratio. 
	We denote the reweighted distribution as \begin{align} \label{eq:tildeqn}
		\tilde{Q}_n(y) = \dfrac{1}{n}\sum_{i=1}^n \dfrac{L_i}{\sum_j L_j}I(Y_i \le y),
	\end{align} where $L_i = K\left( \frac{x_0-X_i}{h}\right)$. Without loss of generality, assume that $\sum_{j=1}^n L_j$ is strictly positive. (If $\sum_{j=1}^n L_j = 0$, we only have samples outside the region of interest and, therefore, cannot do meaningful inference.)

	\begin{lem} \label{lem:tildeqn}
		Let $\tilde{Q}_n$ be defined as (\ref{eq:tildeqn}). Then the following holds:
		\begin{enumerate}[(a)]
			\item $\tilde{Q}_n(y)$ is right continuous and monotonically increasing in $y$.
			\item Suppose that $Q_Y(y)$ is differentiable at $y = \theta_p$. Then, for every sequence $\{a_n\}$ such that $a_n = O(n^{-1/2})$, \begin{align*}
				\sqrt{n}\left[(Q(\theta_p + a_n)-Q(\theta_p)) - (\tilde{Q}_n(\theta_p + a_n)-\tilde{Q}_n(\theta_p)) \right] \prob 0
			\end{align*} as $n \to \infty$.
			\item We have \begin{align*}
				\sqrt{n}(\tilde{Q}_n(y)-Q(y)) \dist \mathcal{N}(0, \sigma_p^2(y)), \quad \sigma_p^2(y) = \dfrac{\mathbb{E}_P[L^2(I(Y \le y)-Q(y))^2]}{\mathbb{E}_P[L]^2}.
			\end{align*} 
		\end{enumerate}
	\end{lem}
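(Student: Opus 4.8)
The plan is to work throughout with the self-normalized form $\tilde{Q}_n(y) = \frac{\sum_{i=1}^n L_i\, I(Y_i \le y)}{\sum_{j=1}^n L_j} = U_n(y)/B_n$, where $U_n(y) = n^{-1}\sum_i L_i\, I(Y_i \le y)$ and $B_n = n^{-1}\sum_j L_j$. The one structural fact I will use repeatedly is that, since $dQ/dP \propto L$ with normalizing constant $\E_P[L]$, we have $\E_P[L\, g(Y)] = \E_P[L]\,\E_Q[g(Y)]$ for every integrable $g$; in particular $\E_P[L\, I(Y \le y)] = \E_P[L]\, Q(y)$. Part (a) is then immediate: for each $i$ the map $y \mapsto I(Y_i \le y)$ is nondecreasing and right-continuous, the weights $L_i/\sum_j L_j$ are nonnegative because $K \ge 0$, and a nonnegative combination of nondecreasing right-continuous functions inherits both properties.

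Part (b) is where the real work lies, and I expect it to be the main obstacle: because the summands depend on $n$ through $a_n$, this is a triangular-array (stochastic-equicontinuity) statement rather than a single CLT. Writing $\Delta_n = \tilde{Q}_n(\theta_p + a_n) - \tilde{Q}_n(\theta_p)$ and $\delta_n = Q(\theta_p + a_n) - Q(\theta_p)$, I would express $\Delta_n = B_n^{-1}\, n^{-1}\sum_i W_i$ with $W_i = L_i\, I(\theta_p < Y_i \le \theta_p + a_n)$ (using the interval $(\theta_p + a_n, \theta_p]$ when $a_n < 0$). The crucial estimate is a variance bound: from $L \le \|K\|_\infty$ and the structural identity, $\E_P[W^2] \le \|K\|_\infty\, \E_P[L\, I(\theta_p < Y \le \theta_p + a_n)] = \|K\|_\infty\, \E_P[L]\, \delta_n$, while differentiability of $Q_Y$ at $\theta_p$ forces $\delta_n = O(a_n) = O(n^{-1/2})$. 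Hence $\mathrm{Var}(n^{-1}\sum_i W_i) \le n^{-1}\E_P[W^2] = O(n^{-3/2})$, so Chebyshev gives $n^{-1}\sum_i W_i - \E_P[L]\,\delta_n = o_P(n^{-1/2})$ --- it is precisely the shrinking probability of the indicator that makes the fluctuations beat the $n^{-1/2}$ scale. Combining this with $B_n \as \E_P[L] > 0$ through the identity $\delta_n - \Delta_n = B_n^{-1}\big[(\E_P[L]\,\delta_n - n^{-1}\sum_i W_i) + \delta_n(B_n - \E_P[L])\big]$, the first bracketed term is $o_P(n^{-1/2})$ by the variance bound and the second is $O(n^{-1/2})\cdot O_P(n^{-1/2}) = O_P(n^{-1})$, so multiplying by $\sqrt{n}$ yields $\sqrt{n}(\delta_n - \Delta_n) \prob 0$.

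For part (c) I would linearize the ratio instead of invoking the delta method. With $q = Q(y)$, write $\tilde{Q}_n(y) - q = (U_n(y) - q B_n)/B_n$ and note the numerator is the empirical mean $n^{-1}\sum_i \xi_i$ of the i.i.d.\ terms $\xi_i = L_i(I(Y_i \le y) - q)$. These are mean zero by the structural identity, since $\E_P[\xi] = \E_P[L\, I(Y \le y)] - q\,\E_P[L] = 0$, and square-integrable because $K$ is uniformly bounded; hence the classical CLT gives $\sqrt{n}(U_n(y) - qB_n) \dist \mathcal{N}\big(0, \E_P[L^2(I(Y \le y) - q)^2]\big)$. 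Since $B_n \as \E_P[L] > 0$, Slutsky's theorem then delivers $\sqrt{n}(\tilde{Q}_n(y) - q) \dist \mathcal{N}(0, \sigma_p^2(y))$, the factor $\E_P[L]^{-2}$ in $\sigma_p^2(y)$ arising from division by $B_n^2$; this also recovers the correct variance from a bivariate CLT for $(U_n(y), B_n)$ with the gradient of $(u,b)\mapsto u/b$, but the Slutsky route avoids computing the cross terms.
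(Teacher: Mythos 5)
Your proof is correct, and it runs on the same fuel as the paper's: part (a) is identical, and for (b) both arguments hinge on the same key estimate — boundedness of the kernel plus the change-of-measure identity $\E_P[L\,g(Y)] = \E_P[L]\,\E_Q[g(Y)]$, which makes the variance of the weighted count of the shrinking interval of smaller order than $n^{-1}$. The organization, however, differs in two places, both in the direction of economy. For (b), the paper first proves the claim for the \emph{oracle-normalized} estimator $\tilde{Q}_{n,LR}(y) = \sum_i L_i I(Y_i \le y)/(n\E_P[L])$, computing its second moment exactly (diagonal plus cross terms), and then transfers to the self-normalized $\tilde{Q}_n = \bigl(n\E_P[L]/\sum_j L_j\bigr)\tilde{Q}_{n,LR}$ in a separate step; you never leave the self-normalized form, and your identity $\delta_n - \Delta_n = B_n^{-1}\bigl[(\E_P[L]\delta_n - n^{-1}\sum_i W_i) + \delta_n(B_n - \E_P[L])\bigr]$ (which telescopes correctly) absorbs that transfer into a single line, with one application of Chebyshev. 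Your version is also more quantitative — you use $|\delta_n| = O(n^{-1/2})$ from differentiability, giving $\mathrm{Var}(n^{-1}\sum_i W_i) = O(n^{-3/2})$ and hence a deviation of order $O_P(n^{-3/4}) = o_P(n^{-1/2})$, where the paper only needs the $Q$-measure of the interval to vanish — and it sidesteps a sloppy spot in the paper's transfer step, where $1 - n\E_P[L]/\sum_i L_i$ is asserted to be $o_p(n^{-1/2})$ when it is really only $O_p(n^{-1/2})$ (which still suffices there, since it multiplies an $O_p(n^{-1/2})$ factor). For (c), the paper applies a bivariate CLT to $(L_i I(Y_i \le y), L_i)$ followed by the delta method with $f(a,b) = (a - bQ(y))/(b+\E_P[L])$; your centering at $q = Q(y)$ makes the numerator an i.i.d.\ mean-zero sum, so a univariate CLT plus Slutsky gives the same limit and the same variance $\sigma_p^2(y)$ with no covariance matrix to compute. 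Both routes are valid; yours is the leaner write-up, while the paper's bivariate formulation keeps the joint law of numerator and denominator explicit, which is what its delta-method bookkeeping trades on.
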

	Above, $\mathbb{E}_P[\cdot]$ denotes expectation over samples from $P$. 

		By taking appropriate quantiles of the reweighted distribution function $\tilde{Q}_n$ as the endpoints, we can construct confidence intervals that have valid asymptotic coverage.
		
		Let  \begin{align} \label{eq:wqCI}
			\widehat{C}_{n}^{\text{ wq}}(x_0) = [\mathsf{Q}(\tilde{Q}_n ; \  \hat{p}_1), \mathsf{Q}(\tilde{Q}_n ; \  \hat{p}_2)],
		\end{align}
		where $\mathsf{Q}(P; \alpha)$ denotes the $\alpha$-th quantile of the distribution $P$.
		
		\begin{prop} \label{prop:covgGen} For any $\hat{p}_1$ and $\hat{p}_2$ that satisfy 
			\begin{align} \label{eq:quantileCondition}
				\hat{p}_1 = p + \frac{z_{\alpha_1}}{\sqrt{n}}\sigma_p(\theta_p) + o_p(n^{-1/2}) \text{ and } \hat{p}_2 = p + \frac{z_{1-\alpha_2}}{\sqrt{n}}\sigma_p(\theta_p) + o_p(n^{-1/2}),
			\end{align} the asymptotic coverage of $\theta_{p}$ from (\ref{eq:wqCI}) is equal to $1-\alpha_1-\alpha_2$ for $0 \le \alpha_1, \alpha_2 < 1$ and $0 < \alpha_1 + \alpha_2 < 1$. 
		\end{prop}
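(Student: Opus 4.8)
The plan is to translate the geometric coverage event into an equivalent statement about the single scalar random variable $\tilde{Q}_n(\theta_p)$, and then to read off the coverage from the central limit theorem of Lemma~\ref{lem:tildeqn}(c) together with Slutsky's theorem. The key structural input is Lemma~\ref{lem:tildeqn}(a): since $\tilde{Q}_n$ is right-continuous and non-decreasing it is a genuine distribution function, so its quantile $\mathsf{Q}(\tilde{Q}_n;\alpha)=\inf\{y:\tilde{Q}_n(y)\ge\alpha\}$ obeys the Galois duality $\mathsf{Q}(\tilde{Q}_n;\alpha)\le t \iff \alpha\le\tilde{Q}_n(t)$ for every $t$. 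This lets me avoid analyzing the random endpoints of $\widehat{C}_n^{\text{ wq}}(x_0)$ directly; I only ever need to control $\tilde{Q}_n$ at the fixed point $\theta_p$, so Lemma~\ref{lem:tildeqn}(b) is not required here.

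First I would apply the duality at $t=\theta_p$. The lower endpoint contributes $\{\mathsf{Q}(\tilde{Q}_n;\hat p_1)\le\theta_p\}=\{\hat p_1\le\tilde{Q}_n(\theta_p)\}$ exactly, while the upper endpoint, using that $\tilde{Q}_n$ is a step function, gives $\{\theta_p\le\mathsf{Q}(\tilde{Q}_n;\hat p_2)\}=\{\hat p_2>\tilde{Q}_n(\theta_p^-)\}$, where $\tilde{Q}_n(\theta_p^-)$ is the left limit. Hence
\begin{align*}
\{\theta_p\in\widehat{C}_n^{\text{ wq}}(x_0)\}=\{\hat p_1\le\tilde{Q}_n(\theta_p)\}\cap\{\tilde{Q}_n(\theta_p^-)<\hat p_2\}.
\end{align*}
The gap between $\tilde{Q}_n(\theta_p)$ and $\tilde{Q}_n(\theta_p^-)$ is the reweighted mass placed exactly at $\theta_p$; since $Q_Y$ has no atom at $\theta_p$ (which I also use to identify $Q_Y(\theta_p)=p$), this jump, and likewise the distinction between strict and weak inequality at the boundary, contributes $o(1)$ to the coverage probability. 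I would therefore reduce the target to computing $\pr(\hat p_1\le\tilde{Q}_n(\theta_p)\le\hat p_2)+o(1)$.

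Next I would center and scale. Writing $\sigma_p:=\sigma_p(\theta_p)$ and $W_n:=\sqrt{n}\,(\tilde{Q}_n(\theta_p)-p)$, Lemma~\ref{lem:tildeqn}(c) at $y=\theta_p$ (with $Q(\theta_p)=p$) gives $W_n\dist\mathcal{N}(0,\sigma_p^2)$. Subtracting $p$ and multiplying by $\sqrt{n}$ rewrites the reduced event as $\sqrt{n}\,(\hat p_1-p)\le W_n\le\sqrt{n}\,(\hat p_2-p)$, and hypothesis~\eqref{eq:quantileCondition} identifies the two bounds as $z_{\alpha_1}\sigma_p+o_p(1)$ and $z_{1-\alpha_2}\sigma_p+o_p(1)$. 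Since $W_n$ converges in distribution while the bounds converge in probability to the constants $z_{\alpha_1}\sigma_p$ and $z_{1-\alpha_2}\sigma_p$, Slutsky's theorem yields
\begin{align*}
\pr\big(\theta_p\in\widehat{C}_n^{\text{ wq}}(x_0)\big)\to\pr\big(z_{\alpha_1}\sigma_p\le\sigma_p Z\le z_{1-\alpha_2}\sigma_p\big)=\Phi(z_{1-\alpha_2})-\Phi(z_{\alpha_1})=(1-\alpha_2)-\alpha_1,
\end{align*}
where $Z\sim\mathcal{N}(0,1)$ and I cancel the positive factor $\sigma_p$, giving the claimed $1-\alpha_1-\alpha_2$.

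The steps requiring genuine care, rather than routine bookkeeping, are the two boundary issues in the second paragraph: showing that the atom/left-limit gap and the strict-versus-weak inequality at the endpoints are asymptotically negligible. Both follow from the continuity of the limiting law, since $W_n$ converges to a $\mathcal{N}(0,\sigma_p^2)$ variable that places no mass on any single boundary point, but they should be argued rather than asserted. I also need $\sigma_p>0$ for the final cancellation to be legitimate; this is the one nondegeneracy condition I would flag explicitly, as it is what makes the standardization well defined and the asymptotic coverage exactly $1-\alpha_1-\alpha_2$.
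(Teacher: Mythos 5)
Your proof is correct and follows essentially the same route as the paper's: invert the random quantile via the duality $\mathsf{Q}(\tilde{Q}_n;\alpha)\le\theta_p \iff \alpha\le\tilde{Q}_n(\theta_p)$, then center and scale, and conclude with Lemma \ref{lem:tildeqn}(c) plus Slutsky's theorem. The only difference is one of care, not of method: the paper computes the lower tail explicitly and dismisses the upper endpoint as a ``similar calculation,'' whereas you spell out the left-limit/atom issue and flag the nondegeneracy conditions $\sigma_p(\theta_p)>0$ and $Q_Y(\theta_p)=p$ — a refinement of the same argument rather than a different one.
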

		
		In Algorithm \ref{alg:wq}, we summarize the Weighted Quantile method with specific $\hat{p}_1$ and $\hat{p}_2$ that satisfies equation (\ref{eq:quantileCondition}). 
		We note that for the case of $p = 0.5$, the asymptotic variance of $\tilde{Q}_n(y)$ at $y = \theta_{0.5}$ simplifies to $\sigma_{0.5}^2(\theta_{0.5}) = \dfrac{\E_P[L^2]}{4\E_P[L]^2}$ and we could use $\hat{\sigma}_{0.5}^2(\theta_{0.5}) = \dfrac{n^{-1}\sum_{i=1}^n L_i^2}{4(n^{-1}\sum_{i=1}^n L_i)^2}$ in Step 4 of Algorithm \ref{alg:wq}.
		
		\begin{algorithm} 
			\KwInput{Data $\mathcal{D} = \{Z_1, \ldots, Z_n\}$, point of interest $x_0$, kernel $K$, bandwidth $h$, significance level $\alpha \in (0,1)$, lower significance level $\alpha_1 \in [0, \alpha]$,  quantile $p \in (0,1)$.}
			\textbf{Process: }
			\begin{enumerate}
				\item Compute $L_i = K\left(\frac{x_0-X_i}{h}\right)$ for $i = 1, \ldots, n$.
				\item Set $\tilde{Q}_n(y) = \sum_{i=1}^n \dfrac{L_i}{\sum_{j=1}^n L_j} I(Y_i \le y)$.
				\item Compute $\tilde{\theta}_p = \tilde{Q}_n^{-1}(p)$.
				\item Compute $\hat{\sigma}_p^2(\theta_p) = \dfrac{n^{-1}\sum_{i=1}^n L_i^2(I(Y_i \le \tilde{\theta}_p)-p)^2}{(n^{-1}\sum_{i=1}^n L_i)^2}$
				\item Compute $	\hat{p}_1 = p + \frac{z_{\alpha_1}}{\sqrt{n}}\hat{\sigma}_p(\theta_p)$  and  $\hat{p}_2 = p + \frac{z_{1-\alpha+\alpha_1}}{\sqrt{n}}\hat{\sigma}_p(\theta_p)$.
				\item Set $\widehat{C}_{n,K,h,p}^{\text{ wq}}(x_0) = [\mathsf{Q}(\tilde{Q}_n ; \  \hat{p}_1), \mathsf{Q}(\tilde{Q}_n ; \  \hat{p}_2)]$.
			\end{enumerate}
			\KwOutput{$\widehat{C}_{n,K,h,p}^{\text{ wq}}(x_0; \mathcal{D})$: confidence interval for ${\theta}_{p}$ in (\ref{eq:object}).}
			\caption{Weighted Quantile Algorithm}
			\label{alg:wq}
		\end{algorithm}  
		
		Our next Theorem shows that this yields exact asymptotic coverage. 
		\begin{thm} \label{thm:wqCorrCovg}
			For all distributions $P$ on $\R^d \times \R$ such that $Q_Y$ is differentiable at $y={\theta}_{p}$, the output of the Weighted Quantile Algorithm has asymptotic $1-\alpha$ coverage, \begin{align*}
				\lim_{n \to \infty} \mathbb{P}\left({\theta}_{p} \in \widehat{C}_{n,K,h,p}^{\mathrm{\ wq}}(x_0; \mathcal{D}) \right) = 1-\alpha.
			\end{align*}
		\end{thm}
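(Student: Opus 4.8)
The plan is to reduce the theorem to Proposition \ref{prop:covgGen}. The algorithm sets $\hat p_1 = p + z_{\alpha_1}\hat\sigma_p(\theta_p)/\sqrt n$ and $\hat p_2 = p + z_{1-\alpha+\alpha_1}\hat\sigma_p(\theta_p)/\sqrt n$, which match the form required in \eqref{eq:quantileCondition} with the identification $\alpha_2 = \alpha-\alpha_1$ (so that $z_{1-\alpha_2} = z_{1-\alpha+\alpha_1}$), \emph{provided} the plug-in standard deviation can be swapped for the truth up to $o_p(1)$. Concretely, if $\hat\sigma_p(\theta_p) = \sigma_p(\theta_p) + o_p(1)$, then $z_{\alpha_1}\hat\sigma_p(\theta_p)/\sqrt n = z_{\alpha_1}\sigma_p(\theta_p)/\sqrt n + o_p(n^{-1/2})$, and similarly for $\hat p_2$; the hypotheses of Proposition \ref{prop:covgGen} are then met and give asymptotic coverage $1-\alpha_1-\alpha_2 = 1-\alpha$. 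Thus everything comes down to consistency of the variance estimator.

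To establish $\hat\sigma_p^2(\theta_p)\prob\sigma_p^2(\theta_p)$, I would treat numerator and denominator separately. Since $K$ is uniformly bounded, $L$ has all moments finite, so the law of large numbers gives $n^{-1}\sum_i L_i \prob \E_P[L]$ and hence the denominator converges to $\E_P[L]^2 > 0$. For the numerator, note that differentiability of $Q_Y$ at $\theta_p$ forces $Q_Y$ to be continuous there, so $Q(\theta_p) = p$ and the population numerator equals $\E_P[L^2(I(Y\le\theta_p)-p)^2]$. Expanding $(I(Y\le t)-p)^2 = (1-2p)I(Y\le t)+p^2$, the empirical numerator is an affine function of the weighted quantity $g_n(t) := n^{-1}\sum_i L_i^2 I(Y_i\le t)$ evaluated at the data-dependent threshold $t = \tilde\theta_p$. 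The task therefore reduces to showing $g_n(\tilde\theta_p)\prob \E_P[L^2 I(Y\le\theta_p)]$.

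The main obstacle is precisely this random threshold $\tilde\theta_p$ inside the indicator, and I would dispatch it in two steps. First, establish $\tilde\theta_p\prob\theta_p$: a weighted Glivenko--Cantelli argument yields $\sup_y|\tilde Q_n(y)-Q_Y(y)|\prob 0$ (the bounded weights keep the weighted empirical process uniformly close to $Q_Y$), and combined with continuity of $Q_Y$ at $\theta_p$ this gives consistency of the inverse $\tilde\theta_p = \tilde Q_n^{-1}(p)$. Second, because $t\mapsto \E_P[L^2 I(Y\le t)]$ is continuous at $\theta_p$ (again from continuity of $Q_Y$ and boundedness of $L$) and $g_n$ converges uniformly to it near $\theta_p$ (a uniform law of large numbers over the VC-type class $\{L^2 I(Y\le t)\}_{t\in\R}$), evaluating $g_n$ at the consistent sequence $\tilde\theta_p$ still lands at the population value. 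The delicate point is controlling the mass of observations lying between $\tilde\theta_p$ and $\theta_p$; since this gap shrinks in probability and the limiting function is continuous, that contribution is $o_p(1)$. I emphasize that only consistency of $\tilde\theta_p$, not a $\sqrt n$-rate, is needed, so positivity of the derivative of $Q_Y$ is not required.

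Combining the two limits by Slutsky's theorem gives $\hat\sigma_p^2(\theta_p)\prob\sigma_p^2(\theta_p)$, hence $\hat\sigma_p(\theta_p) = \sigma_p(\theta_p)+o_p(1)$. Feeding this back into the first paragraph verifies condition \eqref{eq:quantileCondition} with $\alpha_2 = \alpha-\alpha_1$, and Proposition \ref{prop:covgGen} then delivers $\lim_{n\to\infty}\pr\!\left(\theta_p\in\widehat C^{\mathrm{\ wq}}_{n,K,h,p}(x_0;\mathcal D)\right) = 1-\alpha$, as claimed.
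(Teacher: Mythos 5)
Your strategy is the same as the paper's: reduce to Proposition \ref{prop:covgGen} by proving $\hat\sigma_p^2(\theta_p) \prob \sigma_p^2(\theta_p)$ (with $\alpha_2 = \alpha - \alpha_1$), handle the denominator with the law of large numbers, and reduce the numerator to the behavior of $Z_n(t) = n^{-1}\sum_{i=1}^n L_i^2 I(Y_i \le t)$ at the random threshold $t = \tilde\theta_p$. Your uniform-LLN-over-a-VC-class treatment of that threshold is a legitimate alternative to the paper's sandwich of $Z_n(\tilde\theta_p)$ between $Z_n(\theta_p-\delta)$ and $Z_n(\theta_p+\delta)$ via monotonicity. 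The genuine gap is the step you yourself flag as delicate: consistency of $\tilde\theta_p$. You assert that $\sup_y |\tilde Q_n(y) - Q_Y(y)| \prob 0$ together with continuity of $Q_Y$ at $\theta_p$ yields $\tilde Q_n^{-1}(p) \prob \theta_p$, and you conclude that positivity of $Q_Y'(\theta_p)$ can be dispensed with. That implication is false: consistency of a generalized inverse requires the quantile to be well separated, i.e.\ $Q_Y(\theta_p + \epsilon) > p$ for every $\epsilon > 0$, and continuity --- indeed differentiability --- at the single point $\theta_p$ does not rule out $Q_Y$ being flat at level $p$ on an interval to the right of $\theta_p$ (this is compatible with $Q_Y'(\theta_p) = 0$).

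Concretely, take $p \ne 1/2$ and a distribution for which, on the event $L > 0$, the conditional law of $Y$ puts no mass in $(\theta_p, \theta_p+1)$, has an atom at $\theta_p + 1$, and has a density vanishing smoothly as $y \uparrow \theta_p$, so that $Q_Y$ is differentiable at $\theta_p$ with $Q_Y'(\theta_p) = 0$ and $Q_Y \equiv p$ on $[\theta_p, \theta_p + 1)$. By Lemma \ref{lem:tildeqn}(c), $\sqrt n (\tilde Q_n(\theta_p) - p) \dist \mathcal N(0, \sigma_p^2(\theta_p))$ with $\sigma_p^2(\theta_p) > 0$, so $\pr(\tilde Q_n(\theta_p) < p) \to 1/2$; on this event $\tilde Q_n$ stays below $p$ on all of $[\theta_p, \theta_p+1)$ (it is flat there), hence $\tilde\theta_p \ge \theta_p + 1$, and $Z_n(\tilde\theta_p)$ picks up the atom's weight, so $\hat\sigma_p^2(\theta_p)$ converges to the wrong value with asymptotic probability $1/2$. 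Thus your argument genuinely needs $Q_Y'(\theta_p) > 0$ (or some well-separation condition). The paper obtains consistency of $\tilde\theta_p$ by invoking Proposition \ref{prop:wqexp}, whose hypothesis is exactly differentiability with a \emph{positive} derivative; the theorem's hypothesis as literally stated is arguably too weak for either proof, but the paper never claims positivity is dispensable, whereas your write-up affirmatively --- and incorrectly --- does.
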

		%The proof is in Appendix \ref{proof:thm:wqCorrCovg}.

	\paragraph{On choosing the bandwidth}
	We recommend using the method only when the effective sample size $n_{\text{eff}} = \dfrac{(\sum_{i=1}^{n} L_i)^2}{\sum_{i=1}^{n} L_i^2}$ is sufficiently large. If the effective sample size is too low, there are not sufficiently many samples near the point of interest to guarantee coverage. Numerical studies in Section \ref{sec:Simulation} show that the method usually attains the desired coverage when $n_{\text{eff}} \ge 10$ for $p \in \{0.2, 0.5, 0.7\}$. For extreme quantiles, we would need more samples as is typically the case with any inference methods for extreme quantiles.
	
	\subsection{Optimality} \label{subsec:optimality}
	%[This section shows that we are providing an optimal confidence interval in a local asymptotic minimax (LAM) sense.
	%The width of the confidence interval from our proposed algorithm 
	%attains the semiparametric lower bound in terms of width.]
	
	The Weighted Quantile method guarantees asymptotic coverage of $\theta_p$ for any distribution $Q_Y$ that has a derivative at $y = \theta_p$. We now study the efficiency of the confidence interval by showing that it is asymptotically uniformly most accurate unbiased.
	
	First, we begin by reformulating the problem as an M-estimation task. Let $\rho_p$ be the pinball loss, defined as \begin{align*}
		\rho_p(x) = \begin{cases}
			px, & x \ge 0, \\
			-(1-p)x, & x < 0.
		\end{cases}
	\end{align*}
	Then, the $p$-th quantile, $\xi_p$, of a distribution $F$ can be written as \begin{align*}
		\xi_p = \arg\min_{\theta} \mathbb{E}_{Y \sim F}[\rho_p(Y-\theta)].
	\end{align*}
	
	Therefore, the object of inference can be understood as a solution to the minimization problem \begin{align*}
		\theta_p &= \arg\min_{\theta}\mathbb{E}_{Y\sim Q_Y}[\rho_p(Y-\theta)] \\ 
		%&= \arg\min_{\theta} \int_\R \int_{\mathcal{X}} \rho_p(Y-\theta) p(y|x) \dfrac{K\left(\dfrac{x_0-x}{h}\right)p_X(x)}{\int_{\mathcal{X}} K\left(\dfrac{x_0-u}{h}\right)p_X(u) du} dx dy \\
		%&= \arg\min_{\theta} \int_\R \int_{\mathcal{X}} \rho_p(Y-\theta) K\left(\dfrac{x_0-x}{h}\right) p(y|x) p_X(x)  dx dy \\
		&= \arg\min_{\theta}\mathbb{E}_{(X,Y)\sim P} \left[\rho_p(Y-\theta) K\left(\dfrac{x_0-X}{h}\right)\right],
	\end{align*}
	which is a locally weighted quantile. 
	In fact, $\tilde{\theta}_p = \tilde{Q}_n^{-1}(p) = \arg\min_{\theta} \sum_{i=1}^n \rho_p(Y_i-\theta) K\left(\dfrac{x_0-X_i}{h}\right)$ is an M-estimator. 
	
	A notion of semiparametric optimality is adequate in this setting since we have an infinite-dimensional model and are interested in estimating ${\theta}_p \in \R$.
	Semiparametric efficiency bounds stem from an ingenious idea of Stein \citep{stein1956efficient} and have been expanded by \cite{koshevnik1976non, bickel1993efficient} and many others. The idea is to consider one-dimensional parametric submodels containing $P$, compute the local asymptotic minimax bound, and take the supremum over all possible submodels. More formally, let $\mathscr{P}$ be a collection of smooth one-dimensional submodels $\{P_t\}$ where $P_{0} = P$ and $\theta(P_t)$ is differentiable at $t=0$. Then by taking the supremum over all possible submodels' local asymptotic minimax bound, we obtain the
	semiparametric efficiency bound for $\theta_p$ as follows:
	\begin{align}
		\sup_{\{P_t\} \in \mathscr{P}} \lim_{C \to \infty} \liminf_{n \to \infty} \sup_{|t| \le C/\sqrt{n}} \mathbb{E}_{P_t} [(\sqrt{n}(\hat{\theta}_n - \theta_p(P_t)))^2] \ge  \sup_{\{P_t\} \in \mathscr{P}}\dfrac{\frac{d}{dt}\theta_p(P_t)^2|_{t = 0}}{I(0)}. \label{eq:semilb}
	\end{align} The denominator on the right-hand side is the Fisher information in the parametric submodel at $t = 0$. Any estimator that attains the lower bound is called \textit{semiparametrically efficient}. We call $\phi$ the Efficient Influence Function (EIF) for estimating $\theta(P)$ if an estimator $\hat{\theta}_n$ is asymptotically linear with influence function $\phi$, and attains the lower bound in (\ref{eq:semilb}). For more details on semiparametric models and asymptotic efficiency, see Chapter 25 of \cite{van2000asymptotic} and references therein. 
	
	In order to compute the semiparametric efficiency bound for $\theta_p$, we introduce some notation. Let $m(\theta, Z) = \rho_p(Y-\theta) K\left(\dfrac{x_0-X}{h}\right)$ for $Z = (X, Y)$, and $M(\theta) = \mathbb{E}_P[m(\theta, Z)]$. Let $\mathcal{P}$ be a collection of distributions such that for each $P \in \mathcal{P}$ there exists a unique minimizer of $M(\theta)$ over $\R$, $\text{Var}_P(\nabla m(\Lambda(P), Z))$ is finite and $\nabla^2 \mathbb{E}_P[m(\Lambda(P), Z)]$ is finite and positive. Then, Proposition \ref{prop:semiLowerBound} shows that we can compute the semiparametric efficiency bound for $\theta_p$ in the form of a worst-case variance $\sigma_*^2$. 
	
	\begin{prop} \label{prop:semiLowerBound}
		The semiparametric efficiency bound for $\theta_p(\cdot)$ relative to the paths $\mathscr{P}$ in the model $\mathcal{P}$ at the distribution $P$ is $$\sigma_*^2 = \dfrac{\mathbb{E}_P[L^2(I(Y \le \theta_p)-p)^2]}{\mathbb{E}_P[L]^2Q_Y'(\theta_p)^2}, \qquad L = K\left(\dfrac{x_0-X}{h}\right).$$ In other words,
		\begin{align*}
			\sup_{\{P_t\} \in \mathscr{P}} \lim_{C \to \infty} \liminf_{n \to \infty} \sup_{|t-t_0| \le C/\sqrt{n}} \mathbb{E}_{P_t} [(\sqrt{n}(\hat{\theta}_n - \theta_p(P_t)))^2] \ge \sigma_*^2
		\end{align*}  and the Efficient Influence Function of $\theta_p$ is $\phi_P^*(Z) = \dfrac{L(I(Y\le \theta_p)-p)}{\mathbb{E}_P[L]Q_Y'(\theta_p)}$.
	\end{prop}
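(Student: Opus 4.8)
The plan is to compute the semiparametric efficiency bound by (i) determining the pathwise derivative of the functional $\theta_p(\cdot)$ along every smooth submodel, (ii) recognizing this derivative as an inner product against a fixed influence function, and (iii) maximizing the ratio on the right-hand side of (\ref{eq:semilb}) by Cauchy--Schwarz. Since the model $\mathcal{P}$ is locally nonparametric, the tangent set generated by the paths $\mathscr{P}$ is the whole space of mean-zero, square-integrable functions, so once $\theta_p$ is shown to be pathwise differentiable with gradient $\phi_P^*$, the supremum in (\ref{eq:semilb}) collapses to $\E_P[(\phi_P^*)^2]$ and the standard local asymptotic minimax bound already quoted in (\ref{eq:semilb}) (Chapter 25 of \cite{van2000asymptotic}) yields the stated inequality.

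First I would fix a regular one-dimensional submodel $\{P_t\} \in \mathscr{P}$ with score $s$ at $t = 0$, so that its Fisher information is $I(0) = \E_P[s^2]$. Writing $\psi(\theta, Z) = \frac{\partial}{\partial\theta} m(\theta, Z)$, I would use that $\theta_p(P_t)$ is the unique root of the estimating equation $\E_{P_t}[\psi(\theta, Z)] = 0$, which is exactly the first-order condition for the minimization defining $\theta_p$ and is valid because the population objective $M(\theta)$ is smooth even though $\rho_p$ is not. A direct computation of the (sub)derivative of the pinball loss gives
$$\psi(\theta_p, Z) = L\bigl(I(Y \le \theta_p) - p\bigr), \qquad \frac{\partial}{\partial\theta}\E_P[\psi(\theta, Z)]\Big|_{\theta_p} = \E_P[L]\,Q_Y'(\theta_p),$$
where the second identity uses the differentiability of $Q_Y$ at $\theta_p$ assumed in the statement.

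The core step is to differentiate the identity $\E_{P_t}[\psi(\theta_p(P_t), Z)] = 0$ in $t$ at $t = 0$. By the chain rule and the score identity $\frac{d}{dt}\E_{P_t}[g]\big|_{t=0} = \E_P[g\,s]$ valid for fixed $g$, this yields
$$\E_P[\psi(\theta_p, Z)\, s] + \E_P[L]\,Q_Y'(\theta_p)\,\dot\theta_p = 0, \qquad \dot\theta_p := \tfrac{d}{dt}\theta_p(P_t)\big|_{t=0},$$
so that $\dot\theta_p = \E_P[\phi_P^*\, s]$ with $\phi_P^*$ as in the statement (up to an overall sign, which is immaterial for the bound and matches the classical quantile influence function). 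Substituting into (\ref{eq:semilb}), the ratio becomes $\E_P[\phi_P^*\, s]^2 / \E_P[s^2]$, and Cauchy--Schwarz bounds this by $\E_P[(\phi_P^*)^2] = \sigma_*^2$, with equality approached by taking $s$ proportional to a bounded truncation of $\phi_P^*$. One checks that $\phi_P^*$ lies in the tangent space: it is mean-zero since $\E_P[L(I(Y\le\theta_p)-p)] = \E_P[L](Q_Y(\theta_p)-p) = 0$ and it has finite variance because $\mathrm{Var}_P(\psi(\theta_p,Z))$ is finite by the definition of $\mathcal{P}$. This simultaneously identifies $\phi_P^*$ as the efficient influence function and $\sigma_*^2$ as the bound.

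The main obstacle I anticipate is making the pathwise differentiation rigorous despite the non-smoothness of $\rho_p$: the integrand $\psi(\theta,Z)$ is discontinuous in $\theta$, so interchanging $\frac{d}{dt}$ with $\E_{P_t}$ while also differentiating through the moving argument $\theta_p(P_t)$ must be justified. The resolution is that only the population map $M(\theta) = \E_P[m(\theta,Z)]$ must be smooth, and it is: differentiability of $Q_Y$ at $\theta_p$ together with the finiteness and positivity of $\nabla^2 M(\theta_p)$ assumed in $\mathcal{P}$ guarantee, via the implicit function theorem, that $t \mapsto \theta_p(P_t)$ is differentiable, while $L$ being bounded and the finite second moments in $\mathcal{P}$ justify dominated convergence when passing the $t$-derivative inside the expectation. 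A secondary point to verify is that the supremum over submodels is attained in the limit, which follows by constructing, for each $\varepsilon$, a bounded score $s_\varepsilon$ approximating $\phi_P^*$ and the corresponding path $dP_t \propto (1 + t\,s_\varepsilon)\,dP$ lying in $\mathscr{P}$.
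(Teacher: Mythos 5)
Your proposal is correct, but it takes a genuinely different route from the paper. The paper's proof is a two-line appeal to authority plus computation: it invokes the general result of \cite{newey1990semiparametric} that the efficient influence function of an M-estimation functional $\Lambda(P)=\arg\min_\theta \E_P[m(\theta,Z)]$ is $\phi_P^*(z) = -[\nabla^2_\theta M(\Lambda(P))]^{-1}\nabla_\theta m(\Lambda(P),z)$, and then spends all its effort on the same three computations you perform: $\nabla_\theta m(\theta_p,Z) = L(I(Y\le\theta_p)-p)$, $\mathrm{Var}_P(\nabla_\theta m(\theta_p,Z)) = \E_P[L^2(I(Y\le\theta_p)-p)^2]$ (using $\E_P[\nabla_\theta m(\theta_p,Z)]=\E_P[L](Q_Y(\theta_p)-p)=0$), and $\nabla^2_\theta M(\theta_p)=\E_P[L]Q_Y'(\theta_p)$, obtained by differentiating $M(\theta)=\E_P[L]\,\E_Q[\rho_p(Y-\theta)]$ at the population level. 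You instead rederive Newey's result from first principles for this particular functional: implicit differentiation of $\E_{P_t}[\psi(\theta_p(P_t),Z)]=0$ along a path with score $s$, identification of $\dot\theta_p$ as the inner product $\E_P[\phi_P^* s]$ (Riesz representation), Cauchy--Schwarz over the tangent set, and attainment via paths $dP_t \propto (1+t s_\varepsilon)\,dP$ with bounded scores approximating $\phi_P^*$. What the paper's approach buys is brevity and the delegation of the genuinely delicate points to the cited reference; what yours buys is self-containedness and an explicit accounting of where each assumption in $\mathcal{P}$ enters --- at the cost of owning exactly the technical burdens you flag, namely justifying pathwise differentiability through a non-smooth $\psi$ (here the paper's definition of $\mathscr{P}$ already assumes $t\mapsto\theta_p(P_t)$ is differentiable at $t=0$, which you may use instead of the implicit function theorem) and verifying that the perturbed distributions $P_t$ remain in $\mathcal{P}$ (unique minimizer, positive Hessian) for small $t$, a point you assert but would need to spell out. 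One incidental merit of your derivation: your parenthetical about the ``overall sign'' is exactly right and in fact exposes a harmless inconsistency in the paper, whose general formula carries the minus sign $-[\nabla^2_\theta M]^{-1}\nabla_\theta m$ while the proposition statement drops it; the sign is immaterial for $\sigma_*^2$, which is all that the bound requires.
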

	
	%The proof is in Appendix \ref{proof:prop:semiLowerBound}.
	%Proposition \ref{prop:semiLowerBound} effectively says that $\sigma_*^2$ is the optimal asymptotic variance for estimating  $\theta_p$. 
	
	We now show that $\tilde{\theta}_p = \tilde{Q}_n^{-1}(p)$ is an efficient estimator for $\theta_p$ by providing an asymptotic expansion of the quantile obtained from the reweighted cumulative distribution $\tilde{Q}_n$ in (\ref{eq:tildeqn}). Since the Weighted Quantile method defines the lower and upper bounds of the confidence interval in terms of the quantiles of $\tilde{Q}_n$, the expansion allows us to asymptotically analyze the interval. 
	\begin{prop} \label{prop:wqexp}
		Suppose $Q_Y(y)$ is differentiable at $y = \theta_p$ with a positive derivative. Then, for any $p_n = p + O(n^{-1/2})$, $\tilde{\theta}_{p_n, n} = \tilde{Q}_n^{-1}(p_n)$ satisfies \begin{align*}
			\tilde{\theta}_{p_n,n} = \theta_p + \dfrac{p_n-p}{Q'(\theta_p)} - \dfrac{\tilde{Q}_n(\theta_p)-p}{Q'(\theta_p)} + \tilde{R}_n,
		\end{align*} with $\sqrt{n}\tilde{R}_n \prob 0$.
	\end{prop}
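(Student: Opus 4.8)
The plan is to establish a Bahadur-type linear representation for the weighted empirical quantile $\tilde{\theta}_{p_n,n} = \tilde{Q}_n^{-1}(p_n)$, reducing the claim to three ingredients: a $\sqrt{n}$-rate of consistency, a stochastic-equicontinuity (oscillation) bound, and control of the overshoot inherent in inverting a step function. Throughout I write $f = Q'(\theta_p) > 0$ and record that differentiability with a positive derivative makes $Q = Q_Y$ continuous and strictly increasing near $\theta_p$, so that $Q(\theta_p) = p$ exactly. I also note that the uniform boundedness of $K$ gives $L_i \le \bar{K} := \sup_u K(u) < \infty$, and that $\tfrac1n \sum_j L_j \to \E_P[L] > 0$ by the law of large numbers (the moment $\E_P[L]$ being positive and finite as guaranteed by Lemma \ref{lem:tildeqn}).

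First I would pin down the rate $a_n := \tilde{\theta}_{p_n,n} - \theta_p = O_p(n^{-1/2})$. Since $\tilde{Q}_n$ is nondecreasing and right continuous (Lemma \ref{lem:tildeqn}(a)), the event $\{\tilde{\theta}_{p_n,n} > \theta_p + M/\sqrt{n}\}$ equals $\{\tilde{Q}_n(\theta_p + M/\sqrt{n}) < p_n\}$. Decomposing $\sqrt{n}(\tilde{Q}_n(\theta_p + M/\sqrt{n}) - p_n)$ into the centered term $\sqrt{n}(\tilde{Q}_n(\theta_p) - Q(\theta_p))$, which is $O_p(1)$ by Lemma \ref{lem:tildeqn}(c); an oscillation term $\sqrt{n}[(\tilde{Q}_n - Q)(\theta_p + M/\sqrt{n}) - (\tilde{Q}_n - Q)(\theta_p)]$, which is $o_p(1)$ by Lemma \ref{lem:tildeqn}(b) applied to the deterministic sequence $M/\sqrt{n}$; the deterministic drift $\sqrt{n}(Q(\theta_p + M/\sqrt{n}) - p) \to fM$; and $-\sqrt{n}(p_n - p)$, which is bounded. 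Choosing $M$ large makes both tail probabilities vanish uniformly in $n$ by the tightness of the Gaussian limit in Lemma \ref{lem:tildeqn}(c), and a symmetric argument handles the lower tail; this yields $a_n = O_p(n^{-1/2})$.

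The crux is upgrading the oscillation bound of Lemma \ref{lem:tildeqn}(b), which is stated only for deterministic sequences, to the data-dependent argument $a_n$. Setting $G_n(a) = \tilde{Q}_n(\theta_p + a) - \tilde{Q}_n(\theta_p)$ and $g(a) = Q(\theta_p + a) - Q(\theta_p)$, both nondecreasing in $a$, I would partition $[-M/\sqrt{n}, M/\sqrt{n}]$ into $K$ equal deterministic subintervals with endpoints $t_j$. Lemma \ref{lem:tildeqn}(b) gives $\sqrt{n}(g(t_j) - G_n(t_j)) \prob 0$ at each grid point, and monotonicity sandwiches $\sqrt{n}(g(a) - G_n(a))$ between neighboring grid values up to the increment $\sqrt{n}(g(t_{j+1}) - g(t_j)) \to 2fM/K$. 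Sending $n \to \infty$ and then $K \to \infty$ gives $\sup_{|a| \le M/\sqrt{n}} |\sqrt{n}(g(a) - G_n(a))| \prob 0$ for each fixed $M$, and combining this with $a_n = O_p(n^{-1/2})$ delivers $\sqrt{n}(g(a_n) - G_n(a_n)) \prob 0$, that is, $\tilde{Q}_n(\tilde{\theta}_{p_n,n}) - \tilde{Q}_n(\theta_p) = Q(\theta_p + a_n) - Q(\theta_p) + o_p(n^{-1/2})$. I expect this monotone-sandwich step to be the main technical obstacle, since the lemma does not apply directly along a random sequence.

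Finally I would assemble the pieces. Inverting the step function incurs an overshoot $p_n \le \tilde{Q}_n(\tilde{\theta}_{p_n,n}) \le p_n + (\max_i L_i)/(\sum_j L_j)$, and since $\max_i L_i \le \bar{K}$ while $\sum_j L_j = n(\E_P[L] + o_p(1))$, the overshoot is $O_p(1/n)$, so $\tilde{Q}_n(\tilde{\theta}_{p_n,n}) = p_n + o_p(n^{-1/2})$. Meanwhile differentiability yields $Q(\theta_p + a_n) - Q(\theta_p) = f a_n + o_p(n^{-1/2})$ because $a_n = O_p(n^{-1/2})$. Substituting both into the oscillation identity and using $Q(\theta_p) = p$ gives $p_n - \tilde{Q}_n(\theta_p) = f a_n + o_p(n^{-1/2})$, and solving for $a_n$ produces $\tilde{\theta}_{p_n,n} = \theta_p + (p_n - p)/Q'(\theta_p) - (\tilde{Q}_n(\theta_p) - p)/Q'(\theta_p) + \tilde{R}_n$ with $\tilde{R}_n = o_p(n^{-1/2})$, which is exactly the asserted expansion with $\sqrt{n}\,\tilde{R}_n \prob 0$.
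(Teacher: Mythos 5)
Your proposal is correct in its overall architecture but takes a genuinely different route from the paper, and it contains one step whose justification does not hold under the stated hypotheses. On the route: the paper proves Proposition \ref{prop:wqexp} in one stroke via the Ghosh-type Lemma \ref{lem:order10.2}, applied to $V_n = \sqrt{n}(\tilde{\theta}_{p_n,n} - \dot{\theta}_{p_n})$ and $W_n = \frac{\sqrt{n}}{Q'(\theta_p)}(p - \tilde{Q}_n(\theta_p))$; monotonicity of $\tilde{Q}_n$ and the inequality $\tilde{Q}_n(\tilde{\theta}_{p_n,n}) \ge p_n$ convert the event $\{V_n \le y\}$ into an event $\{Z_n \le y_n\}$ with $y_n \to y$ and $Z_n - W_n \prob 0$ by Lemma \ref{lem:tildeqn}(b), and the lemma then delivers $V_n - W_n \prob 0$ with no need for a consistency rate, a uniform oscillation bound, or overshoot control. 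Your argument is instead the classical constructive Bahadur-representation proof: rate of consistency, then a grid/monotone-sandwich upgrade of Lemma \ref{lem:tildeqn}(b) to random arguments, then inversion. Your first two steps are correct (the chaining argument in step 2 is the standard device and is carried out properly, using only deterministic grid sequences in Lemma \ref{lem:tildeqn}(b)); your proof is longer but more self-contained, and the uniform oscillation bound $\sup_{|a| \le M/\sqrt{n}} \sqrt{n}\,|G_n(a) - g(a)| \prob 0$ is a strictly stronger statement of independent use, which the paper's event-inclusion trick avoids having to prove.

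The gap is in the overshoot step. You claim $\tilde{Q}_n(\tilde{\theta}_{p_n,n}) \le p_n + (\max_i L_i)/(\sum_j L_j)$, which bounds the jump of $\tilde{Q}_n$ at $\tilde{\theta}_{p_n,n}$ by a \emph{single} sample's weight. That is valid only if no two observations $Y_i$ are tied at that point: in general the jump is $\sum_{i: Y_i = \tilde{\theta}_{p_n,n}} L_i / \sum_j L_j$. The proposition assumes differentiability of $Q_Y$ only at the single point $\theta_p$, which forces continuity there but permits atoms of $Q_Y$ accumulating at $\theta_p$ with masses $o(\text{distance})$ (for instance, atoms at $\theta_p + 1/k$ with mass proportional to $1/k^2$); in such models, ties inside the $O(n^{-1/2})$ window occur with probability tending to one, and the empirical quantile lands on a multi-sample atom with non-vanishing probability, so your inequality fails as stated. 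The needed conclusion $\tilde{Q}_n(\tilde{\theta}_{p_n,n}) = p_n + o_p(n^{-1/2})$ is nevertheless true, and it can be recovered with the machinery you already built: for any $|a| \le M/\sqrt{n}$, the jump of $G_n$ at $a$ is at most the jump of $g$ at $a$ plus $2\sup_{|a'| \le M/\sqrt{n}}|G_n(a') - g(a')|$ (your step-2 bound, applied also to left limits), while differentiability at $\theta_p$ forces every jump of $Q$ in the window to be $o(n^{-1/2})$ uniformly, since a jump of size $J$ at $\theta_p + a$ satisfies $J \le Q(\theta_p + a) - Q(\theta_p + a(1-\epsilon)) = Q'(\theta_p)\,a\epsilon + o(|a|)$ for every $\epsilon \in (0,1)$. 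With that patch your proof goes through; without it, the last step silently assumes more than the proposition does (e.g.\ that $Q_Y$ is atomless in a neighborhood of $\theta_p$).
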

	%The proof is in  \ref{proof:prop:wqexp}.
	
	Taking $p_n = p$ in Proposition \ref{prop:wqexp}, we obtain the following corollary:
	\begin{cor} \label{cor:thetapeff}
		$\tilde{\theta}_p$ is a semiparametrically efficient estimator of $\theta_p$.
	\end{cor}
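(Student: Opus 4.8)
The plan is to specialize the asymptotic expansion of Proposition \ref{prop:wqexp} to the choice $p_n = p$ and read off the influence function of $\tilde{\theta}_p$. With $p_n = p$ the deterministic first-order term $\frac{p_n - p}{Q'(\theta_p)}$ vanishes, so the expansion collapses to
\begin{align*}
	\tilde{\theta}_p = \theta_p - \frac{\tilde{Q}_n(\theta_p) - p}{Q'(\theta_p)} + \tilde{R}_n, \qquad \sqrt{n}\,\tilde{R}_n \prob 0.
\end{align*}
Everything therefore reduces to writing $\sqrt{n}\,(\tilde{Q}_n(\theta_p) - p)$ as an asymptotically linear (i.i.d.) average and checking that the resulting influence function matches the efficient influence function $\phi_P^*$ of Proposition \ref{prop:semiLowerBound}, with second moment equal to the bound $\sigma_*^2$.

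Next I would rewrite the self-normalized increment. Since $Q_Y$ is differentiable (hence continuous) at $\theta_p$, we have $Q_Y(\theta_p) = p$, which by definition of the tilted marginal is equivalent to the centering identity $\E_P[L(I(Y \le \theta_p) - p)] = 0$, where $L = K((x_0 - X)/h)$. Writing $L_i = K((x_0 - X_i)/h)$, this gives
\begin{align*}
	\tilde{Q}_n(\theta_p) - p = \frac{\frac1n\sum_{i=1}^n L_i\big(I(Y_i \le \theta_p) - p\big)}{\frac1n\sum_{j=1}^n L_j},
\end{align*}
whose numerator is a centered i.i.d. average. The key technical step is to peel off the random denominator: by the law of large numbers $\frac1n\sum_j L_j \prob \E_P[L]$, while $\sqrt{n}$ times the numerator is $O_p(1)$ by the central limit theorem (this is exactly the content of Lemma \ref{lem:tildeqn}(c) at $y = \theta_p$), so Slutsky's theorem replaces the denominator by its limit at the cost of an $o_p(1)$ term. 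Combined with $\sqrt{n}\,\tilde{R}_n \prob 0$, this yields the expansion
\begin{align*}
	\sqrt{n}\,(\tilde{\theta}_p - \theta_p) = \frac{1}{\sqrt{n}}\sum_{i=1}^n \psi(Z_i) + o_p(1), \qquad \psi(Z) = -\frac{L\big(I(Y \le \theta_p) - p\big)}{\E_P[L]\,Q'(\theta_p)}.
\end{align*}

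Finally I would compare $\psi$ with the efficient influence function from Proposition \ref{prop:semiLowerBound}: the two agree up to the sign convention, and in particular $\E_P[\psi] = 0$ and $\E_P[\psi^2] = \sigma_*^2$, matching both the form of $\phi_P^*$ and the semiparametric efficiency bound. Because $\tilde{\theta}_p$ is thus asymptotically linear with influence function equal to the efficient influence function, it is a regular estimator attaining the lower bound of Proposition \ref{prop:semiLowerBound}, which is exactly the assertion that $\tilde{\theta}_p$ is semiparametrically efficient. The only point requiring genuine care is the denominator-peeling step: one must verify that the error introduced by replacing $\frac1n\sum_j L_j$ with $\E_P[L]$, together with the remainder $\tilde{R}_n$ of Proposition \ref{prop:wqexp}, is genuinely $o_p(n^{-1/2})$ and not merely $o_p(1)$; both follow from $\sqrt{n}\,\tilde{R}_n \prob 0$ and from the numerator being $O_p(n^{-1/2})$ about its (zero) mean.
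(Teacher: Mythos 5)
Your proposal is correct and follows essentially the same route as the paper's proof: specialize Proposition \ref{prop:wqexp} to $p_n = p$, rewrite $\tilde{Q}_n(\theta_p)-p$ as a centered i.i.d.\ average whose random denominator $n^{-1}\sum_j L_j$ is replaced by $\mathbb{E}_P[L]$ at an $o_p(1)$ cost after $\sqrt{n}$-scaling, and conclude via the characterization of efficiency as asymptotic linearity with the efficient influence function (the paper cites Lemma 25.23 of \cite{van2000asymptotic} for this step). One remark on the sign issue you flagged: it is a typo in the statement of Proposition \ref{prop:semiLowerBound} rather than a defect of your derivation, since Newey's formula $\phi_P^*(z) = -[\nabla_\theta^2 M(\theta_p)]^{-1}\nabla_\theta m(\theta_p,z)$ invoked in that proposition's own proof produces exactly your $\psi$, so your influence function \emph{is} the efficient influence function and the ``up to sign convention'' caveat can be dropped.
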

	
	Testing procedures based on efficient estimators are asymptotically uniformly most powerful. For two-sided testing, the test is asymptotically uniformly most powerful unbiased, and inverting the test leads to an asymptotically uniformly most accurate unbiased (UMAU) confidence interval \citep{choi1996asymptotically}. $(1-\alpha)$-UMAU confidence interval $C$ for $\theta$ satisfies the following:
	\begin{enumerate} [(i)]
		\item $\pr(\theta \in C) \ge 1-\alpha$.
		\item $\pr(\theta' \in C) \le 1-\alpha$ for all $\theta' \ne \theta$.
		\item For any other confidence interval $C_1$ satisfying (i) and (ii), $\pr(\theta' \in C(X)) \le \pr(\theta' \in C_1(X))$ for all $\theta' \ne \theta$.
	\end{enumerate}  It turns out that the confidence interval from the Weighted Quantile method when $\alpha_1 = \alpha_2 = \alpha/2$ is asymptotically uniformly most accurate unbiased.
	
	\begin{thm} \label{thm:wqumau}
		Let $\widehat{C}_{n}^{\mathrm{\ wq}}$ be the output of Algorithm \ref{alg:wq} with $\alpha_1 = \alpha/2$. Then the confidence interval $\widehat{C}_{n}^{\mathrm{\ wq}}$ is asymptotically uniformly most accurate unbiased. 
	\end{thm}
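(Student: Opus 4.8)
The plan is to show that, when $\alpha_1 = \alpha/2$, the Weighted Quantile interval is asymptotically equivalent to the two-sided Wald interval centered at the efficient estimator $\tilde\theta_p$, and then to invoke the duality between asymptotically uniformly most powerful unbiased (AUMPU) tests and asymptotically UMAU intervals established in \citep{choi1996asymptotically}. Since Corollary \ref{cor:thetapeff} already furnishes an efficient point estimator, the work reduces to (i) identifying $\widehat{C}_n^{\text{wq}}$ as the inversion of a two-sided test built from $\tilde\theta_p$, and (ii) checking that this test meets the optimality hypotheses of the cited result.

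First I would rewrite the endpoints using Proposition \ref{prop:wqexp}. With $\alpha_1 = \alpha_2 = \alpha/2$ the two quantile levels $\hat p_1 = p + z_{\alpha/2}\hat\sigma_p/\sqrt n$ and $\hat p_2 = p + z_{1-\alpha/2}\hat\sigma_p/\sqrt n$ are symmetric about $p$ since $z_{\alpha/2} = -z_{1-\alpha/2}$; this symmetry is exactly what will drive the unbiasedness property (ii). Applying the expansion with $p_n = \hat p_1$ and $p_n = \hat p_2$, and using $\hat\sigma_p \prob \sigma_p(\theta_p)$, each endpoint takes the form
\[
\mathsf{Q}(\tilde Q_n;\hat p_j) = \tilde\theta_p + \frac{z}{\sqrt n}\,\frac{\hat\sigma_p}{Q'(\theta_p)} + o_p(n^{-1/2}),
\]
where $z \in \{z_{\alpha/2}, z_{1-\alpha/2}\}$ and $\tilde\theta_p = \theta_p - (\tilde Q_n(\theta_p)-p)/Q'(\theta_p) + o_p(n^{-1/2})$ is the efficient estimator of Corollary \ref{cor:thetapeff}. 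Writing $\hat\sigma_* = \hat\sigma_p/Q'(\theta_p) \prob \sigma_*$, the efficiency bound of Proposition \ref{prop:semiLowerBound}, this exhibits $\widehat{C}_n^{\text{wq}}$ as the symmetric Wald interval $[\tilde\theta_p - z_{1-\alpha/2}\hat\sigma_*/\sqrt n,\ \tilde\theta_p + z_{1-\alpha/2}\hat\sigma_*/\sqrt n]$ up to $o_p(n^{-1/2})$.

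Next, by the test--interval duality I would read off the dual test: $\theta_0 \in \widehat{C}_n^{\text{wq}}$ holds, up to asymptotically negligible terms, exactly when $|\sqrt n(\tilde\theta_p - \theta_0)/\hat\sigma_*| \le z_{1-\alpha/2}$, i.e.\ the equal-tailed two-sided Wald test of $H_0: \theta_p = \theta_0$ based on the efficient estimator. Here is where Corollary \ref{cor:thetapeff} and the efficient influence function $\phi_P^*$ from Proposition \ref{prop:semiLowerBound} enter: because $\tilde\theta_p$ is asymptotically linear with $\phi_P^*$, the local asymptotic power of this test against contiguous alternatives $\theta_0 = \theta_p + t/\sqrt n$ attains the envelope over all asymptotically unbiased, asymptotically level-$\alpha$ tests. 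The result of \citep{choi1996asymptotically} then certifies that the test is AUMPU within the semiparametric model $\mathcal P$, and inverting an AUMPU family yields an interval satisfying (i)--(iii) in the asymptotic sense, i.e.\ an asymptotically UMAU interval.

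The main obstacle is making the two reductions airtight. The delicate point in the first step is that the UMAU comparison is against local alternatives $\theta' = \theta_p + t/\sqrt n$, so the $o_p(n^{-1/2})$ remainder in the endpoint expansion must be controlled uniformly over such a shrinking neighborhood rather than merely pointwise at a fixed $P$; this requires a locally uniform version of Proposition \ref{prop:wqexp} along the one-dimensional submodels $\{P_t\} \in \mathscr{P}$. The delicate point in the second step is verifying the hypotheses of \citep{choi1996asymptotically} in our M-estimation setup — in particular, local asymptotic normality of the submodels and the persistence of the efficient estimator's asymptotic linearity under reparametrization along each path — so that the power envelope for unbiased tests is genuinely achieved. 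Once these regularity checks are in place, the conclusion follows directly from the cited duality.
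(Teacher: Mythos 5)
Your proposal is correct and takes essentially the same route as the paper: expand both data-driven endpoints via Proposition \ref{prop:wqexp} (using that $\hat p_1,\hat p_2$ are $O(n^{-1/2})$ from $p$ and $\hat\sigma_p \prob \sigma_p(\theta_p)$), identify the interval with the symmetric Wald interval $\tilde\theta_p \pm z_{1-\alpha/2}\hat\sigma_*/\sqrt{n}$ centered at the efficient estimator, and appeal to \cite{choi1996asymptotically}. The only difference is presentational: the paper invokes Theorem 2 of \cite{choi1996asymptotically} directly for the statement that this Wald interval is asymptotically UMAU up to equivalence, so the test-inversion duality and the regularity checks you flag are subsumed in that citation.
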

	
	%Given the optimal asymptotic variance $\sigma_*^2$ from Proposition \ref{prop:semiLowerBound}, the best possible width of any ($1-\alpha$)-confidence interval of $\theta_p$ is $2z_{1-\alpha/2}\sigma_*$. Building upon Proposition \ref{prop:wqexp}, we show in Theorem \ref{thm:wqWidth} that the confidence interval from the Weighted Quantile method in fact achieves this best possible lower bound. 
	%\begin{thm} \label{thm:wqWidth}
	%	Let $\widehat{C}_{n}^{\text{ wq}}$ be the output of Algorithm \ref{alg:wq} with $\alpha_1 = \alpha/2$. Then the confidence interval $\widehat{C}_{n}^{\text{ wq}}$ attains the optimal width in the sense that ${\sqrt{n}|\widehat{C}_{n}^{\text{ wq}}|}$ converges in probability to $2z_{1-\alpha/2}\sigma_*$.
	%\end{thm}
	%The proof is in Appendix \ref{proof:thm:wqWidth}.
	
	\section{Quantile Rejection method} \label{sec:rejection}
	Whereas the previous section provided an asymptotically valid method, in this section, we give a method valid in finite samples.
	This rests on distribution-free quantile inference and rejection sampling.
	\subsection{Distribution-free quantile inference} \label{subsec:dfqi}
	We begin by observing that distribution-free, finite-sample-valid, confidence intervals for any quantile of a distribution can be computed via order statistics \citep{noether1972distribution}. To see this, suppose $X_1, \ldots, X_n$ are i.i.d.~samples from any $F$ (not necessarily continuous) and set $\theta_p$ to be the $p$-th quantile of $F$. Let $N^{\text{lo}} = \#\{1 \le i \le n: X_i \le \theta_p \}$ and $N^{\text{hi}} = \#\{1 \le i \le n: X_i \ge \theta_p \}$. Then, $N^{\text{lo}} \sim B(n, F(\theta_p))$ and $N^{\text{hi}} \sim B(n, 1- F(\theta_p-))$ where $F(x-)$ denotes the left limit of $F$ at $x$. Note that $N^{\text{lo}}$ stochastically dominates $B(n,p)$ and $N^{\text{hi}}$ stochastically dominates $B(n, 1-p)$ since $F(\theta_p) \ge p$ and $1-F(\theta_p-) \ge 1-p$.
	We deduce that  
	\begin{align*}
		\pr(X_{(i)} \le \theta_p) = \pr(N_{\text{lo}} \ge I_{i, \max}) \ge \pr(B(n,p) \ge I_{i, \max})
	\end{align*} and 
	\begin{align*}
		\pr(X_{(i)} \ge \theta_p) = \pr(N_{\text{hi}} \ge n-I_{i, \max}+1) &\ge \pr(B(n,1-p) \ge n-I_{i, \max}+1) \\
		&= \pr(B(n,p) \le I_{i, \max}-1),
	\end{align*} where $I_{i, \max}$ (resp. $I_{i, \min}$) is the maximum (resp. minimum) index of the order statistics with the same value as $X_{(i)}$. 
	When all the $X_i$'s are distinct, $I_{i,\max}$ and $I_{i, \min}$ will simply be equal to $i$.
	%Note that $I_{i,\max}$ and $I_{i,\min}$ are just equal to $i$ when there are no ties. 
	For ease of notation, define $X_{(0)} = -\infty$ and $X_{(n+1)} = \infty$ and set
	\begin{equation} \label{eq:quantileBounds}
		\begin{aligned}
			\hat{\ell} & = \sup\{i : 0 \le i \le n \text{ and } \pr(B(n,p) < I_{i,\max}) \le \alpha_1 \}, \\
			\hat{u} & = \sup\{j : 1 \le j \le n+1 \text{ and } \pr(B(n,p) > I_{j,\min}) \le \alpha_2 \}.
		\end{aligned}
	\end{equation} Then, we have that the coverage of $\theta_p$ by $[X_{(\hat{\ell})}, X_{(\hat{u})}]$ is \begin{equation} \label{eq:dfqi-covg}
		\begin{split}
			\pr(\theta_p \in [X_{(\hat{\ell})}, X_{(\hat{u})}]) &= \pr(N^{\text{lo}} \ge I_{i,\max} \text{ and } N^{\text{hi}} \ge n-I_{j,\min} + 1) \\
			&= 1-\pr(N^{\text{lo}} < I_{i,\max} \text{ or } N^{\text{hi}} < n-I_{j,\min} + 1) \\
			&\ge 1-\pr(N^{\text{lo}} < I_{i,\max})-\pr(N^{\text{hi}} < n-I_{j,\min} + 1) \\
			&\ge 1-\alpha_1-\alpha_2.
		\end{split}
	\end{equation} 
	
	Note that the two-sided interval asymptotically vanishes as long as $F(\theta_p)-F(\theta_p-\delta) > 0$ and $F(\theta_p + \delta)-F(\theta_p) > 0$  for all $\delta \in (0, \delta_1)$ for some sufficiently small $\delta_1$, or if $F$ jumps at $\theta_p$.
	
	We summarize the above procedure in Algorithm \ref{alg:dfqi}. The validity of the algorithm follows from the above argument. 
	
	\begin{algorithm} 
		\KwInput{Data $\mathcal{D} = \{X_1, \ldots, X_n\}$ i.i.d.~samples from a distribution $F$, lower significance level $\alpha_1 \in [0, 1)$, upper significance level $\alpha_2 \in [0, 1)$, quantile $p \in (0,1)$.}
		\textbf{Process: }
		\begin{enumerate}
			\item Compute the order statistics of $\mathcal{D}$, $X_{(1)} \le \ldots \le X_{(n)}$ and set $X_{(0)} = -\infty$ and $X_{(n+1)} = \infty$.
			\item Compute $I_{i, \max}$ (resp.~$I_{i, \min}$) to be the maximum (resp.~minimum) index of the order statistics with the same value as  $X_{(i)}$ for $i = 1, \ldots, n$.
			% \item Compute $I_{i, \min} = \text{minimum index of the order statistics with the same value as } X_{(i)}$ for $i = 1, \ldots, n$.
			\item Set $\hat{\ell}$ and $\hat{u}$ as in \eqref{eq:quantileBounds}.
		\end{enumerate}
		\KwOutput{$[X_{(\hat{\ell})}, X_{(\hat{u})}]$: confidence interval for $p$-th quantile of $F$.}
		\caption{Distribution-free confidence interval for the $p$-th quantile $\theta_p$}
		\label{alg:dfqi}
	\end{algorithm}

	\subsection{Rejection sampling strategy}
	%We use rejection sampling to obtain samples from the target distribution $Q$ and apply distribution-free quantile inference on the samples. 
	Rejection sampling \citep{von195113} is a method to obtain samples from a target distribution $Q$ that is hard to sample from directly by using samples from a proposal distribution $P$ that can be sampled from more easily.  Suppose the likelihood ratio of $Q$ and $P$ is uniformly bounded by some constant $M$, and that $M^{-1}\dfrac{dQ}{dP}$ is computable. To perform rejection sampling, we begin by generating i.i.d.~samples from $P$. Then, for each sample $Y$ from $P$, we calculate $M^{-1}Q(Y)/p(Y)$. If this ratio is greater than a random sample $U$ from a uniform distribution between 0 and 1, we accept $Y$ as a sample from $Q$.
	
	In particular, we can apply rejection sampling in our setting to obtain i.i.d.~samples from the shifted distribution $Q$ given samples from $P$. %since the likelihood ratio of $q(x,y)$ and $p(x,y)$ is known up to a constant. 
	Since the likelihood ratio is uniformly bounded by \begin{align*}
		\dfrac{q(x,y)}{p(x,y)} = \dfrac{K\left( \frac{x_0-x}{h}\right)}{\int p(u)K \left( \frac{x_0-u}{h}\right)du} \le  \dfrac{K_{\max}}{\int p(u)K\left( \frac{x_0-u}{h}\right)du} \overset{\text{let}}{=} M
	\end{align*} and $\dfrac{q(x,y)}{Mp(x,y)} = \dfrac{K\left(\frac{x_0-X_i}{h}\right)}{K_{\max}}$ is known, we can formulate Algorithm \ref{alg:rjq} as follows:
	\begin{enumerate}
		\item Apply rejection sampling to the samples $\mathcal{D}\sim P$ and obtain $\mathcal{D}'$, which are i.i.d.~samples from the target distribution $Q$.
		\item Apply distribution-free quantile inference on $\mathcal{D}'$. 
	\end{enumerate}
	
	\begin{algorithm} 
		\DontPrintSemicolon
		\KwInput{Data $\mathcal{D} = \{Z_1, \ldots, Z_n\}$, point of interest $x_0$, kernel $K$ uniformly bounded by $K_{\max}$, bandwidth $h$, significance level $\alpha \in (0,1)$, lower significance level $\alpha_1 \in [0, \alpha]$,  quantile $p \in (0,1)$.}
		\textbf{Process}:
		\begin{enumerate}
			\item Compute $w_i = \dfrac{K\left(\frac{x_0-X_i}{h}\right)}{K_{\max}}$.
			\item Sample $\mathbf{U}=\{U_i\}_{i=1}^n \iid \text{U}(0,1)$ and include samples with indices obeying $U_i \le w_i$ in $LS(x_0)= \{Z_{i_1}, \ldots, Z_{i_N}\}$.
			\item Take $\widehat{C}_{n, h, K, p}^{\text{ rj}}(x_0)$ to be the $(1-\alpha)$-confidence interval
			with lower tail probability $\alpha_1$ and upper tail probability $\alpha-\alpha_1$ using samples  $\{Y_{i_1}, \ldots, Y_{i_N}\}$ and Algorithm \ref{alg:dfqi}.
			%\item  = CI_p(N,\alpha_1, \alpha-\alpha_1; \{Y_{i_1}, \ldots, Y_{i_N}\})$ 
		\end{enumerate}
		
		\KwOutput{$\widehat{C}_{n, h, K}^{\text{ rj}}(x_0;\mathcal{D}, \mathbf{U})$: $(1-\alpha)$-confidence interval for ${\theta}_{p}$.}
		\caption{Quantile Rejection Algorithm}
		\label{alg:rjq}
	\end{algorithm}

	\begin{thm}
		For all distributions $P$ on $\R^d \times \R$, and for all sample sizes $n \ge 1$, the output of Algorithm \ref{alg:rjq} covers $\theta_p$ in (\ref{eq:object}) with probability at least $1-\alpha$. That is, $\widehat{C}_{n, h, K, p}^{\mathrm{\ rj}}(x_0;\mathcal{D}, \mathbf{U})$ satisfies \begin{align} \label{eq:rjcovg}
			\mathbb{P}_{\mathcal{D}, \mathbf{U}}\left({\theta}_{p} \in \widehat{C}_{n, h, K, p}^{\mathrm{\ rj}}(x_0;\mathcal{D}, \mathbf{U}) \right) \ge 1-\alpha.
		\end{align} Moreover, suppose that for some $\delta > 0$,  $Q_Y(\theta_p-\delta) < Q_Y(\theta_p) < Q_Y(\theta_p+\delta)$ or $Q_Y$ has a jump at $\theta_p$, then the width of the confidence interval converges to 0 as $n \to \infty$ if $0 < \alpha_1 < \alpha$. 
	\end{thm}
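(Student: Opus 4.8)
The plan is to prove the two assertions separately: the finite-sample coverage bound \eqref{eq:rjcovg} follows from the exactness of rejection sampling combined with the conditional validity of Algorithm~\ref{alg:dfqi}, while the vanishing width is a quantile-consistency statement for a growing (random) number of accepted samples. First I would verify that the retained subsample $LS(x_0)=\{Z_{i_1},\ldots,Z_{i_N}\}$ is an i.i.d.\ sample from $Q$. Writing $w(x)=K((x_0-x)/h)/K_{\max}$, each $Z_i$ is kept independently with probability $w(X_i)$, so the law of a retained point has density proportional to $w(x)\,p(x,y)\propto K((x_0-x)/h)\,p(x,y)\propto q(x,y)$; hence every accepted point is distributed as $Q$ and the accepted points are mutually independent. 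In particular, conditional on $\{N=m\}$ the retained responses $\{Y_{i_1},\ldots,Y_{i_m}\}$ are $m$ i.i.d.\ draws from $Q_Y$. Since Step~3 applies Algorithm~\ref{alg:dfqi} to exactly these responses with lower and upper levels $\alpha_1$ and $\alpha-\alpha_1$, the bound \eqref{eq:dfqi-covg} gives $\pr(\theta_p\in\widehat C^{\mathrm{\ rj}}\mid N=m)\ge 1-\alpha$ for every $m\ge 1$; for $m=0$ the interval is $[X_{(0)},X_{(1)}]=(-\infty,\infty)$ by the conventions of Algorithm~\ref{alg:dfqi}, so coverage is trivially $1$. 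Averaging over the law of $N$ by total probability yields \eqref{eq:rjcovg}, and the argument is distribution-free for every $n\ge 1$ because \eqref{eq:dfqi-covg} is itself finite-sample and assumption-free.

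For the width I would first establish that $N\to\infty$ almost surely. The acceptance probability is $\pi=\E_P[w(X)]=\E_P[K((x_0-X)/h)]/K_{\max}$, which is strictly positive whenever the localization is meaningful (the denominator $\int p(u)K((x_0-u)/h)\,du$ defining $M$ is positive), so $N\sim\mathrm{Binomial}(n,\pi)$ and $N/n\to\pi>0$ a.s. Viewing the accepted responses as a prefix of one fixed infinite i.i.d.\ sequence $W_1,W_2,\ldots\sim Q_Y$, it then suffices to show that the interval built from $W_1,\ldots,W_m$ has width tending to $0$ as the \emph{deterministic} index $m\to\infty$; almost-sure convergence in $m$ will transfer to the width evaluated at the random index $N(n)\to\infty$.

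The core is to control the two order-statistic indices. With $\alpha_1>0$ and $\alpha_2=\alpha-\alpha_1>0$, the binomial-tail definitions in \eqref{eq:quantileBounds} place $\hat\ell$ and $\hat u$ (up to $O(1)$) at the $\alpha_1$- and $(1-\alpha_2)$-quantiles of $\mathrm{Binomial}(m,p)$, each of which equals $mp+O(\sqrt m)$ by the CLT, so $\hat\ell/m\to p$ and $\hat u/m\to p$. This is exactly where the hypothesis $0<\alpha_1<\alpha$ enters: if $\alpha_1=0$ then $\hat\ell=0$ and the left endpoint is $-\infty$, while if $\alpha_1=\alpha$ then $\hat u=m+1$ and the right endpoint is $+\infty$, so in either degenerate case the width cannot vanish. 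By the Glivenko--Cantelli theorem the empirical c.d.f.\ $\hat Q_m$ of $W_1,\ldots,W_m$ converges uniformly to $Q_Y$ a.s., and combining this with $\hat\ell/m,\hat u/m\to p$ I would deduce $W_{(\hat\ell)}\to\theta_p$ and $W_{(\hat u)}\to\theta_p$ a.s. In the first regularity case, $Q_Y(\theta_p-\varepsilon)<p<Q_Y(\theta_p+\varepsilon)$ for all small $\varepsilon$ (the left strict inequality is automatic from $\theta_p=\inf\{y:Q_Y(y)\ge p\}$, the right from the hypothesis), so eventually fewer than $\hat\ell$ samples lie below $\theta_p-\varepsilon$ and more than $\hat u$ lie at or below $\theta_p+\varepsilon$, pinning both endpoints into $[\theta_p-\varepsilon,\theta_p+\varepsilon]$. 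In the jump case $Q_Y(\theta_p^-)\le p\le Q_Y(\theta_p)$ with $Q_Y(\theta_p^-)<Q_Y(\theta_p)$, so a positive fraction of the $W_i$ equal $\theta_p$ and the indices $\hat\ell,\hat u$ fall inside that block, forcing $W_{(\hat\ell)}=W_{(\hat u)}=\theta_p$ and width exactly $0$.

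I expect the main obstacle to be not the coverage step, which is a clean consequence of rejection sampling and \eqref{eq:dfqi-covg}, but the bookkeeping in the width step: making the passage from the deterministic sample size $m$ to the random, data-dependent $N(n)$ rigorous, and handling the two regularity regimes uniformly, including the boundary sub-case $p=Q_Y(\theta_p)$. I plan to manage this by proving almost-sure convergence of each endpoint in $m$ (so that evaluation at $N(n)\to\infty$ causes no difficulty) and by splitting the analysis according to whether $\theta_p$ is a point of strict increase or a jump of $Q_Y$.
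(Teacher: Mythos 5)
Your proposal takes essentially the same route as the paper's proof: coverage follows from the fact that rejection sampling turns $\mathcal{D}$ into i.i.d.\ draws from $Q_Y$ combined with the finite-sample guarantee \eqref{eq:dfqi-covg}, and the vanishing width follows from $N \sim B(n,1/M) \to \infty$ almost surely together with the fact that the order-statistic interval of Algorithm \ref{alg:dfqi} shrinks under the stated conditions on $Q_Y$. The only difference is level of detail: you make explicit the condition-on-$N$/total-probability step (including the $N=0$ case) and you actually prove the width claim for Algorithm \ref{alg:dfqi} via binomial quantiles and Glivenko--Cantelli, whereas the paper simply asserts it by pointing to the unproved remark in Section \ref{subsec:dfqi}.
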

	\begin{proof}
		% Finite sample coverage
		The coverage statement (\ref{eq:rjcovg}) follows directly from the fact that the samples obtained from rejection sampling are i.i.d.~draws from $Q_Y$ and the distribution-free quantile inference procedure has $1-\alpha$ coverage guarantee as shown in (\ref{eq:dfqi-covg}). 
		
		% Proof of length -> 0
		The relationship $0 < \alpha_1 < \alpha$ implies that the interval is two-sided. 
		The number of samples obtained from rejection sampling, that is $|LS(x_0)| = N$, follows a binomial distribution $B(n, 1/M)$, and hence $N$ goes to $\infty$ almost surely. Since the $CI_p(\cdot)$ procedure provides a confidence interval with width converging to 0 as $n \to \infty$ under the condition we assumed for $Q_Y$, we have that the width of the algorithm converges to 0 as $n \to \infty$.
	\end{proof}
	
	\paragraph{Remarks}
	The distribution-free quantile inference procedure will usually provide coverage slightly greater than $1-\alpha$. We can make the coverage exact by adding another layer of randomness in the procedure as in \cite{zielinski2005best}. %However, this is only a minor correction when sample size is large enough, and thus we omit the details here. Also note that rejection sampling operates on a reduced sample size.
	
	We also note that derandomization of Algorithm \ref{alg:rjq} is feasible using $p$-value aggregation techniques from \cite{Vovk2020}. However, we do not recommend this in practice as this will cause the intervals to be wider.
	
	\section{Numerical experiments} \label{sec:Simulation}
We examine the empirical coverage and width of the confidence intervals constructed by the Weighted Quantile and Quantile Rejection methods. We shall see that in some settings, the confidence intervals must necessarily be wide in order to be valid. Since we are posing a new inference problem, for comparing purposes, we were not able to find existing methods that achieve the same goal.

\subsection{Setup} \label{subsec: setup} 
We generate $n$ i.i.d.~samples $\{(X_1, Y_1), \ldots, (X_n, Y_n)\}$ with $Y_i = f_{\text{signal}}(X_i) + \epsilon_i$ and $X_i \sim \text{Unif}[0, 1]$. The regression functions $f$ are selected from a set commonly used in the nonparametric regression literature \citep{donoho1994ideal}. Figure \ref{fig:signals} illustrates these functions, and their formulas are provided in Appendix \ref{a.simul.fun}.  

The noise is drawn from one of the distributions specified below:
\begin{enumerate}[{Setting} 1., align=left]
	\item $\epsilon \sim \mathcal{N}(0, (0.3)^2)$,
	\item $\epsilon \mid X \sim \mathcal{N}(0, (0.3(X^2+1))^2)$,
	\item $\epsilon \mid X \sim \mathcal{N}(0, (0.3(X^2-X+5/4))^2)$.
\end{enumerate}
We show results for the homoscedastic case (Setting 1) in this section and present results for other settings in Appendix \ref{subsec:hetero}.

The vertical dotted red lines in Figure \ref{fig:signals} are local points $x_0$ where we will compute the confidence intervals. The points were chosen to include extreme points or points at which the regression function is rapidly changing.

\begin{figure}[ht]
	\begin{subfigure}{.5\textwidth}
		\centering
		\includegraphics[width=.8\linewidth]{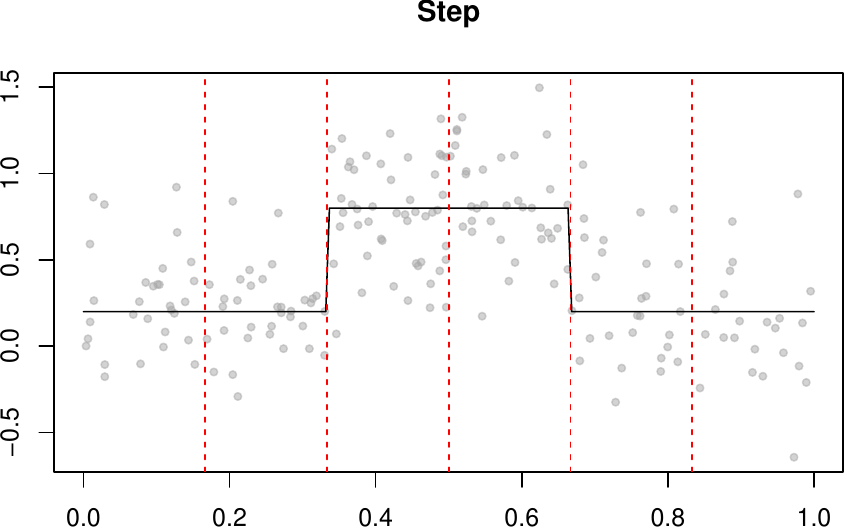}
	\end{subfigure}%
	\begin{subfigure}{.5\textwidth}
		\centering
		\includegraphics[width=.8\linewidth]{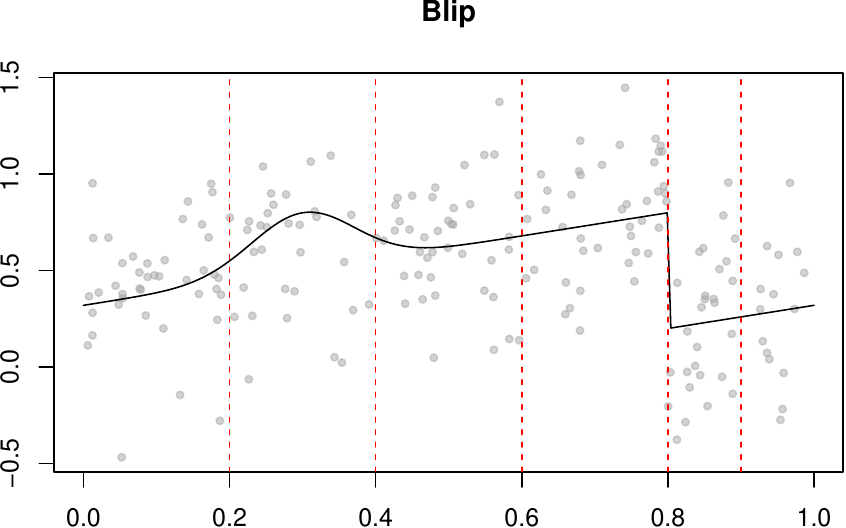}
		%\caption{1b}
		%\label{fig:sfig2}
	\end{subfigure}
	
	\begin{subfigure}{.5\textwidth}
		\centering
		\includegraphics[width=.8\linewidth]{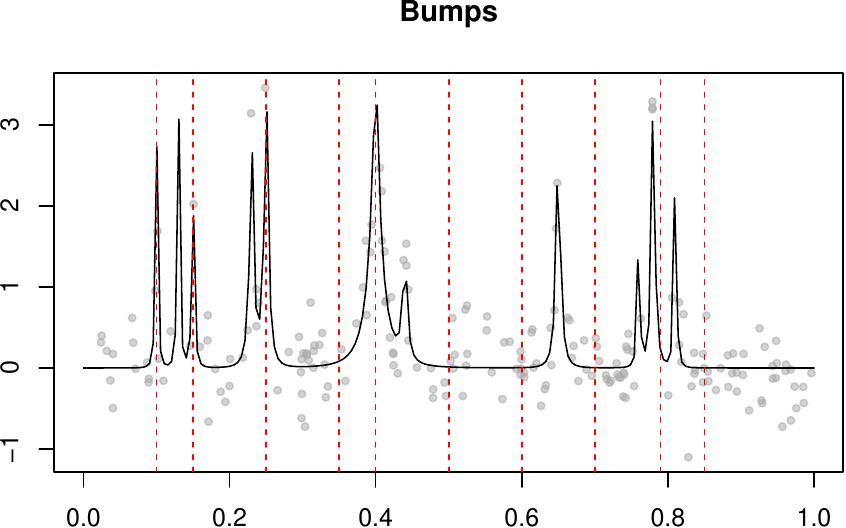}
		%\caption{1a}
		%\label{fig:sfig1}
	\end{subfigure}%
	\begin{subfigure}{.5\textwidth}
		\centering
		\includegraphics[width=.8\linewidth]{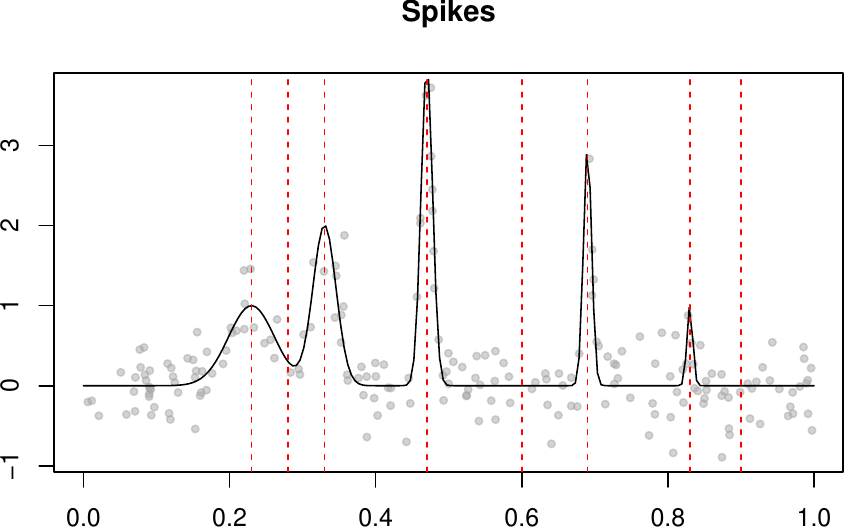}
		%\caption{1b}
		%\label{fig:sfig2}
	\end{subfigure}
	
	\begin{subfigure}{.5\textwidth}
		\centering
		\includegraphics[width=.8\linewidth]{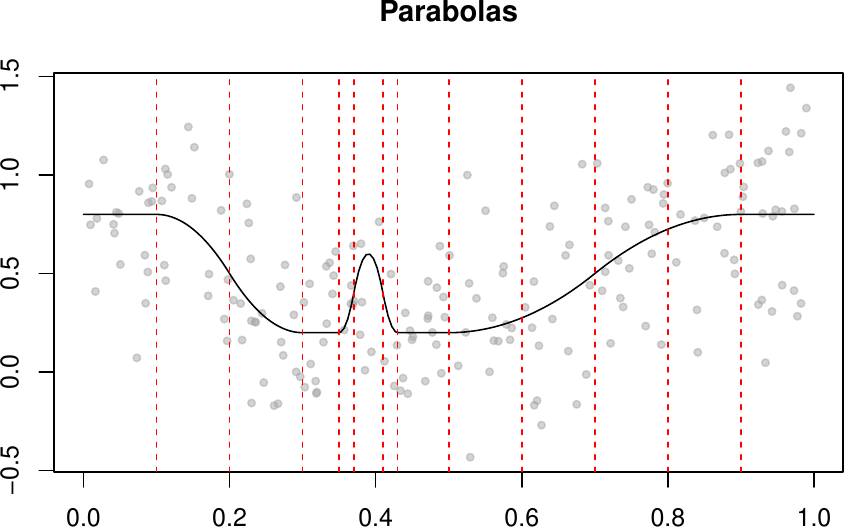}
		%\caption{1a}
		%\label{fig:sfig1}
	\end{subfigure}%
	\begin{subfigure}{.5\textwidth}
		\centering
		\includegraphics[width=.8\linewidth]{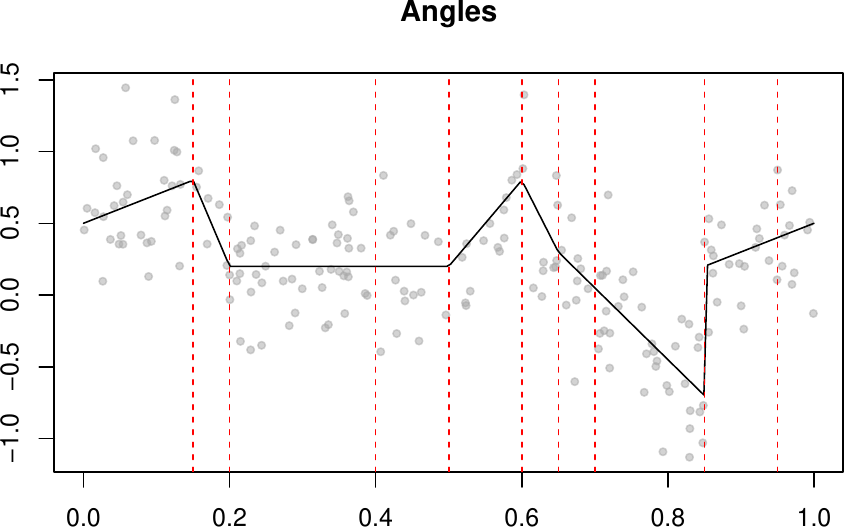}
		%\caption{1b}
		%\label{fig:sfig2}
	\end{subfigure}
	\caption{Regression functions and sampled data in setting 1.}
	\label{fig:signals}
\end{figure}

Recall that our target depends on the kernel, bandwidth, and the quantile, which can be flexibly chosen by the user.
We consider the following different settings for the target $\theta$:
\begin{enumerate}
	\item Kernel $K$: triangular kernel $K(x) = (1-|x|)_{+}$. We show results for the biweight kernel $K(x) = \frac{15}{16}(1-x^2)_+^2$ in Appendix \ref{subsec:biweight kernel}
	.
	\item Bandwidth $h$: 0.1, 0.08, 0.06, 0.04. This was chosen to make the expected sample size at least  10. 
	\item Quantile $p$: $p = 0.5$. We show results for $p = 0.2, 0.7$ in Appendix \ref{subsec:diff quantiles}
	.
	\item Local points $x_0$: for exact values, see Figures \ref{fig:step-s1-q0.5-tri} - \ref{fig:angles-s1-q0.5-tri}.
\end{enumerate}

We show how the object of inference varies with $h$ when the signal is the spikes function in Figure \ref{fig:spikes_varyh} (smaller values of $h$ mean increased resolution). 

\begin{figure}[ht]
	\centering
	\includegraphics[width=0.75\linewidth]{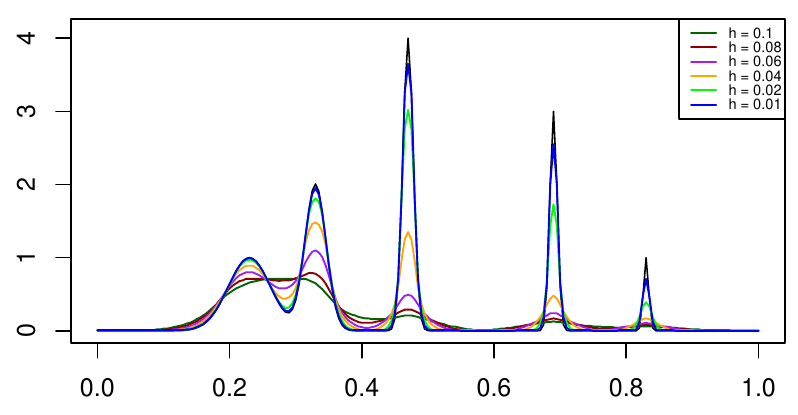}
	\caption{Object of inference $\theta_{1/2}$ for the Spikes signal when $h$ varies in setting 1.}
	\label{fig:spikes_varyh}
\end{figure}

We apply the Weighted Quantile and Quantile Rejection methods with target coverage of $1-\alpha = 0.9$ and $\alpha_1 = 0.05$. For each configuration, we take the sample size to be $n = 200$ and generate a total of $n_{\text{sim}} = 1000$ datasets. 
%Note that since we are only dealing with the case when $q = 0.5$, we don't need correction in the Weighted Quantile method. 

\subsection{Coverage and width}
Figures \ref{fig:bump-s1-q0.5-tri} - \ref{fig:spikes-s1-q0.5-tri} show the average width and empirical coverage. Additional results for different signals are in Appendix \ref{app:set1results}. We can see that both the Weighted Quantile method and Quantile Rejection method achieve 90\% coverage, regardless of the underlying distribution. The Quantile Rejection method tends to overcover and has a wider average width than the Weighted Quantile method. This is because it operates on a reduced sample size. In addition, the Quantile Rejection method outputs unbounded confidence intervals when the local sample size is extremely small. For example, the confidence interval for the median when $\alpha = 0.1$ will be the real line if there are only 4 local samples. When computing the width, we take the average of the bounded confidence intervals' width and indicate the percentage of infinite length CIs below each plot. While we have only an asymptotic guarantee for the Weighted Quantile method, the simulations show that it achieves the nominal coverage even at moderate sample sizes and points where rapid variations occur.

\begin{figure}[H]
	\centering
	\includegraphics[width=\textwidth]{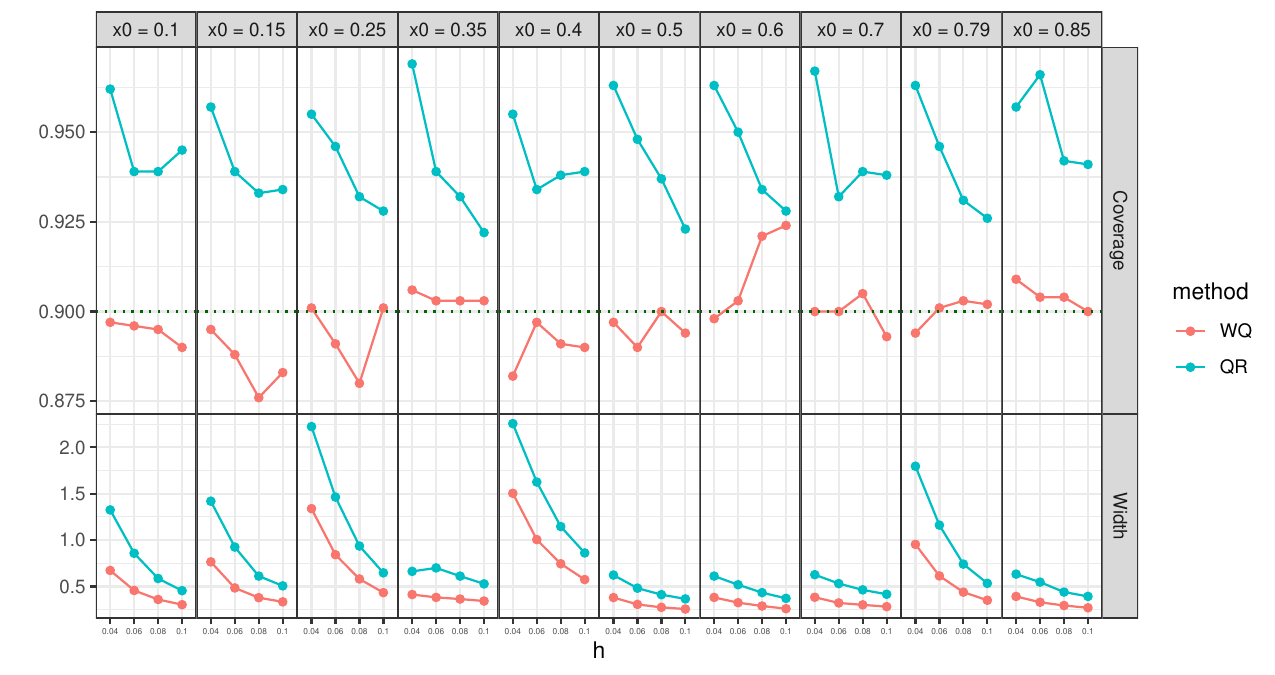}
	\caption{Coverage and width for the Bump signal, setting 1.}
	\label{fig:bump-s1-q0.5-tri}
\end{figure}

\begin{figure}[H]
	\centering
	\includegraphics[width=\textwidth]{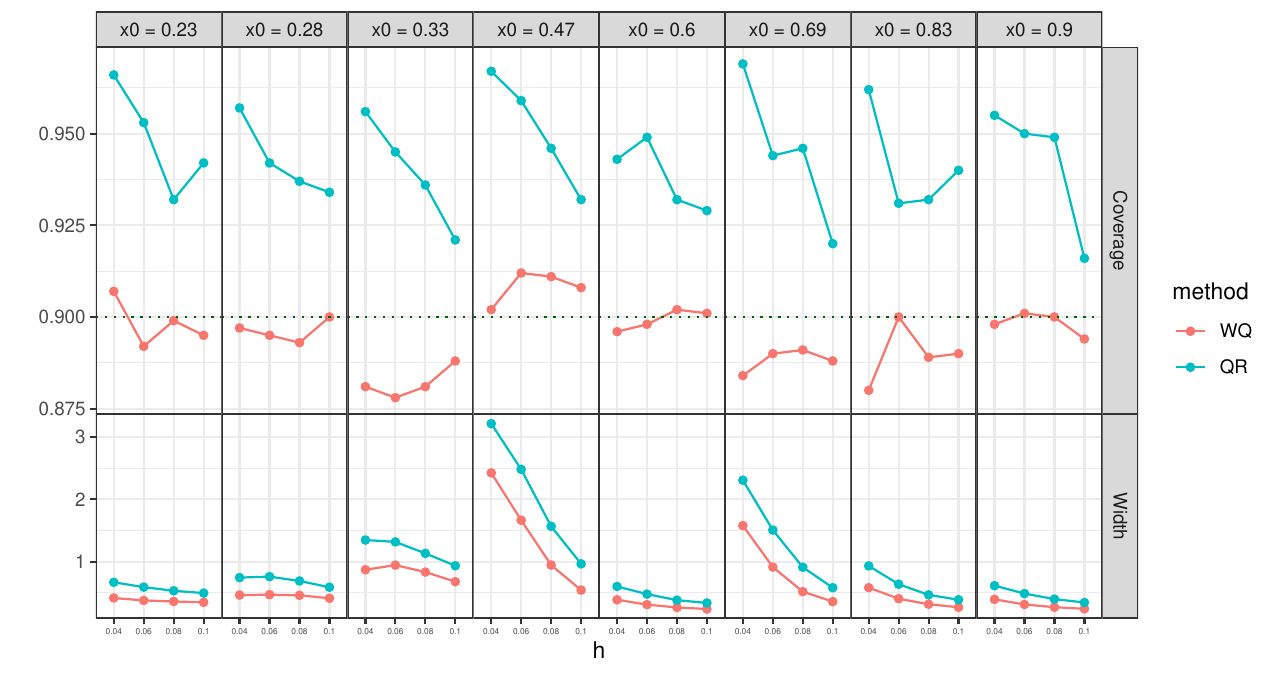}
	\caption{Coverage and width for the Spikes signal, setting 1.}
	\label{fig:spikes-s1-q0.5-tri}
\end{figure}

\section{Real data examples} \label{sec:realdata}
\subsection{Compliance data}
In this section, we go back to the compliance example \citep{efron1991compliance} introduced in Section \ref{subsec:motivation} and apply the Weighted Quantile and Quantile Rejection methods. 
We provide confidence intervals for the object of inference $\theta_{p}$ on a grid of $x_0$'s ranging from $-2$ to $2$ and quantile values $p \in \{0.3, 0.5, 0.75\}$. The bandwidth $h \in \{0.25, 0.5, 0.1\}$ is applied to the triangular kernel for $\alpha = 0.1$. We plot the confidence intervals obtained by the two proposed methods. %\ref{fig:compliance-cis}. 

\begin{figure}[h]
	\centering
	\includegraphics[width=.9\textwidth]{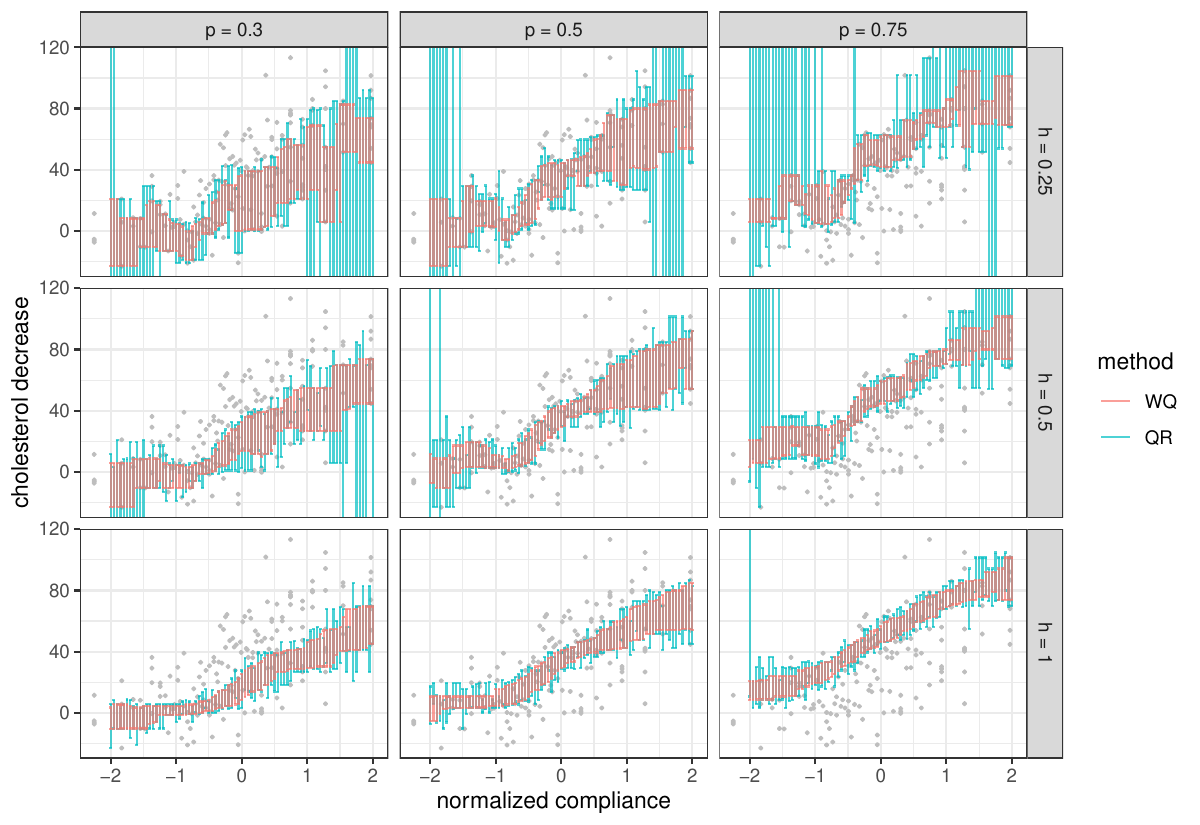}
	\caption{Confidence intervals using the Weighted Quantile (WQ) and Quantile Rejection (QR) methods for $\theta_p$. A scatter plot of the data is also shown.}
	\label{fig:compliance-cis}
\end{figure}

We cannot verify the coverage of the confidence intervals since the true underlying distribution is unknown. However, we can say with confidence 90\% that the median decrease in cholesterol level for people who comply between 45\% and 55\% (this corresponds to $h = 0.211$ in normalized compliance) with more weights on people that comply near 50\%, is in the range of 18.0 to 44.5 when we apply the Weighted Quantile method (18.0 to 47.25 for the Quantile Rejection method).

Moreover, we can observe that the confidence intervals get narrower as the bandwidth $h$ increases. This is natural since the effective sample size increases. We also see that the confidence intervals obtained from the Weighted Quantile method are narrower than those obtained from the Quantile Rejection method, as expected. 

\subsection{California housing data}

%\textcolor{red}{[EC: it is not clear that this is terribly interesting dataset. This is probably fine for now.]}
\begin{figure}[h!]
	\centering
	\includegraphics[width=0.6\linewidth]{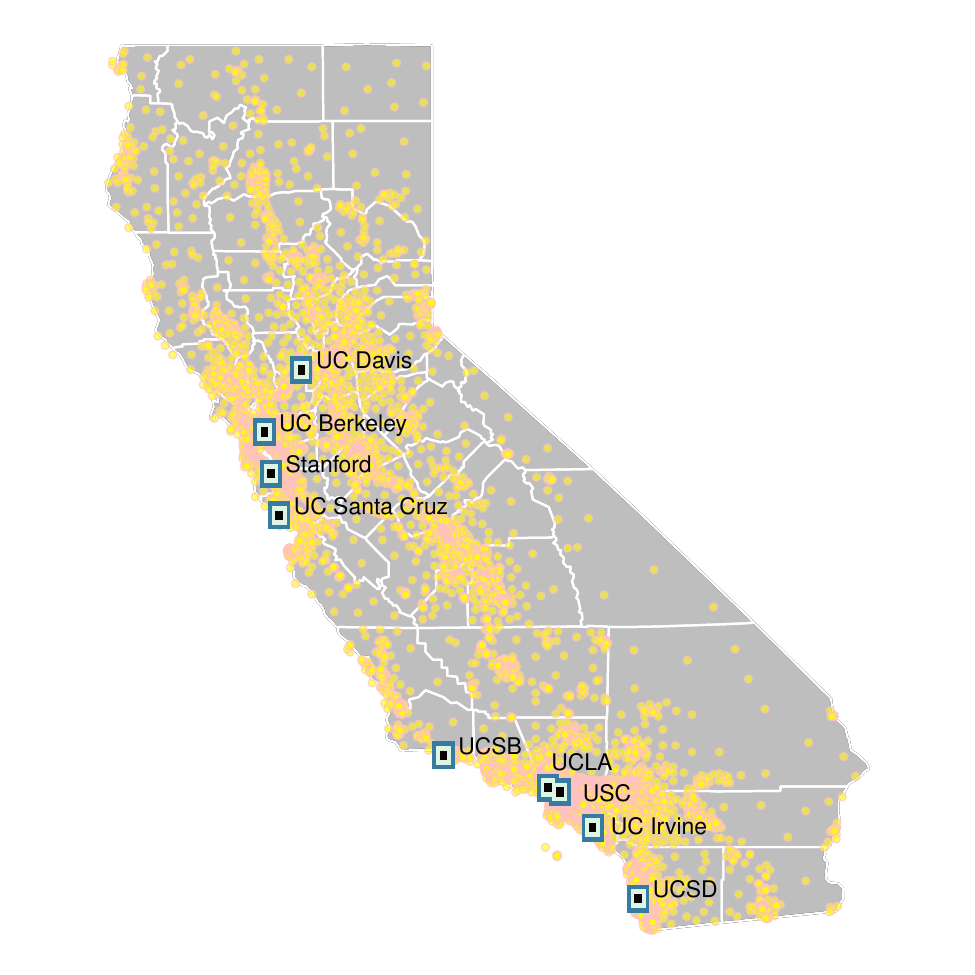}
	\caption{Geospatial representation of the housing dataset in California, with nine locations of interest highlighted. Yellow points on the map correspond to individual locations in the dataset. Squares mark the nine locations of interest, with bandwidths of $h=0.05$, $0.1$, and $0.15$ indicated in different colors.}
	\label{fig:ca_plot}
\end{figure}

We employ our proposed methods to obtain confidence intervals for the median housing prices of California districts, based on a modified version of the California Housing dataset. The dataset is based on the 1990 California census data and was originally obtained from the StatLib repository. The dataset contains information on the location of each census block group in California in terms of longitude and latitude along with housing characteristics such as average number of rooms and median age. We denote the variables as $Y$: median house value, 
$X_1$: longitude,  $X_2$: latitude, $X_3$: median housing age, and $X_4$: average number of rooms.

% https://github.com/ageron/handson-ml2/tree/master/datasets/housing

We begin by considering the local weighted quantile $\theta_{0.5}$ with longitude and latitude chosen from 9 universities in California as displayed in Figure \ref{fig:ca_plot}. We use a triangular kernel and set the bandwidth to be $h \in {0.05, 0.1, 0.15}$ for both longitude and latitude. The squares surrounding each region in Figure \ref{fig:ca_plot} show the local regions considered when we set $h$ to be 0.05, 0.1, and 0.15. The confidence intervals obtained from applying the Weighted Quantile and Quantile Rejection methods are plotted in Figure \ref{fig:housing_x1x2}.  We observe that the width of the confidence interval decreases when the bandwidth $h$ increases as effective sample size increases. Note that UCLA at $h = 0.05$ has confidence interval of [500,001, 500,001]. This is because the housing price in this dataset is right-censored at \$500,001 and a large proportion of the houses in this region has a value greater than $500$k.

\begin{figure}[h!]
	\centering
	\includegraphics[width=.8\linewidth]{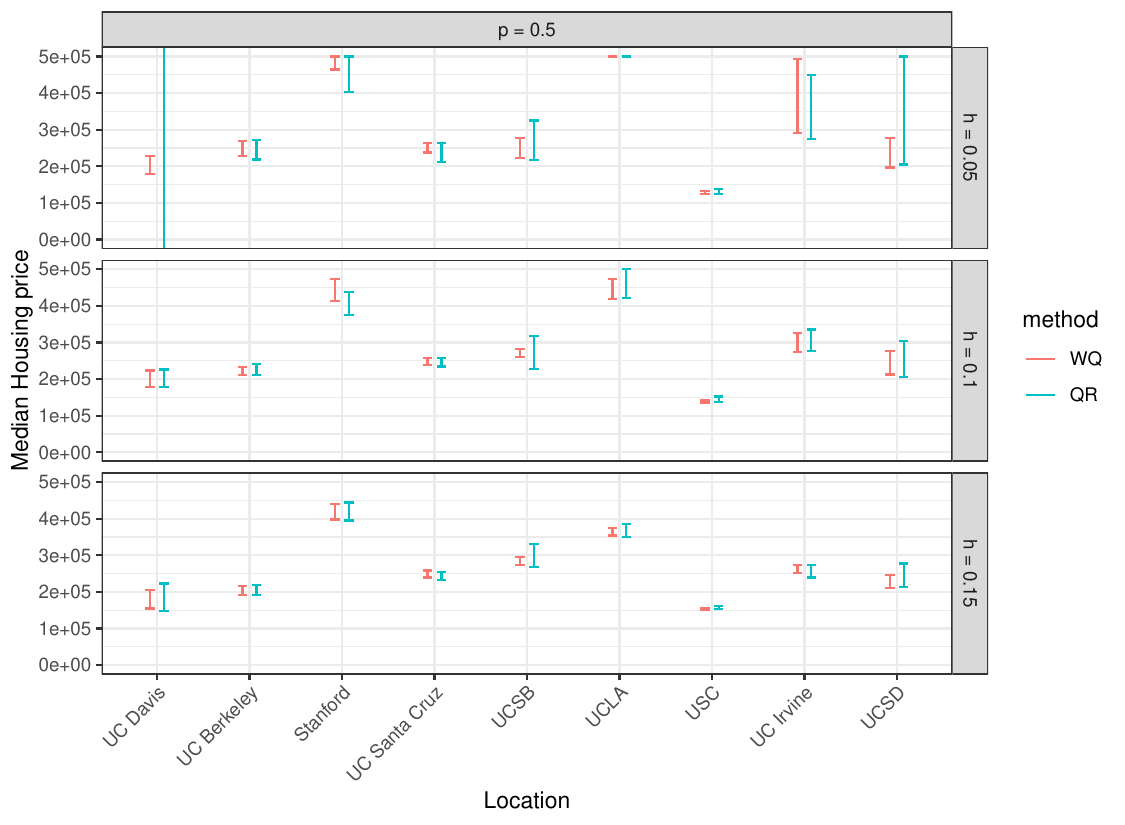}
	\caption{Confidence intervals using the WQ and QR methods for $\theta_{0.5}$ when using covariates longitude ($X_1$) and latitude ($X_2$).}
	\label{fig:housing_x1x2}
\end{figure}
\begin{figure}[h!]
	\centering
	\includegraphics[width=.8\linewidth]{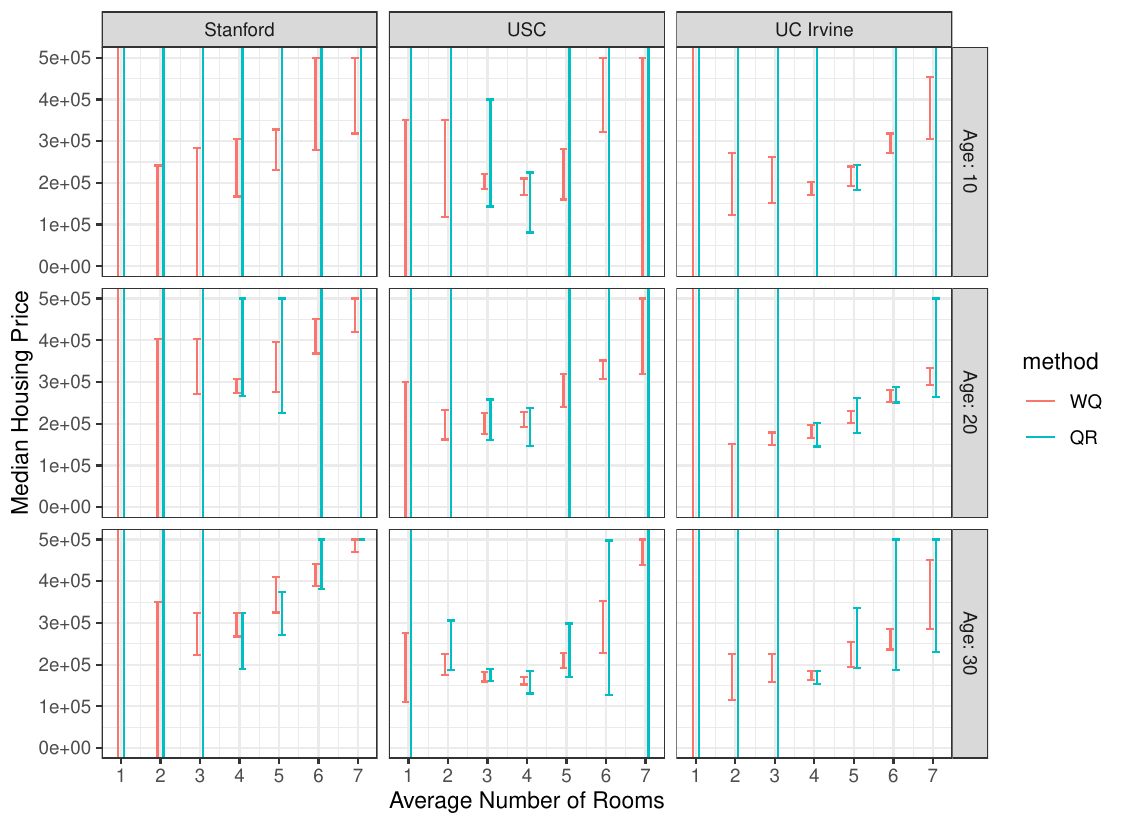}
	\caption{Confidence intervals using the WQ and QR methods for $\theta_{0.5}$ when using covariates longitude ($X_1$), latitude ($X_2$), median housing age ($X_3$) and average number of rooms ($X_4$).}
	\label{fig:housing_x1x2x3x4}
\end{figure}

Next, we apply the method using two additional covariates: median age of houses and average number of rooms. For the median age of houses, we set the point of interest $x_{03}$ to be 10, 20, and 30 with $h = 5$, and for the average number of rooms, we set $x_{04}$ to be 1 to 7 with $h = 1$, both with the triangular kernel. For the longitude and latitude, we set $h = 0.2$ to ensure sufficient samples in the region of interest.
The confidence intervals obtained from applying the proposed methods in three locations (Stanford, USC, UC Irvine) are plotted in Figure \ref{fig:housing_x1x2x3x4}. The confidence intervals are generally wider than those obtained using only two covariates.
We observe a trend of increasing housing prices when the average number of rooms exceeds 4 in all locations. It is not evident from the confidence intervals that the age of housing has a drastic impact on house prices.

\section{Indistinguishable distributions} \label{Sec:indistinguishable}
Finally, we investigate the efficiency of the methods.
We have shown in Section \ref{subsec:optimality} that the Weighted Quantile method is unimprovable at least in a local asymptotic minimax (LAM) sense. We here demonstrate the existence of a distribution that is almost indistinguishable form the true distribution but has a significantly different value of $\theta_p$. This implies that the confidence interval covering the target must be sufficiently wide to factor in the uncertainty.

We look at the setting where the underlying regression function is the Spikes signal with the target at $x_0 = 0.47$ and a triangular kernel with $h = 0.04$. The value of $\theta_{0.5}$ is 1.35, and in the simulation study, we have seen that with $n = 200$ samples, the average width of the confidence interval for $\theta_{0.5}$ is 2.49.

The distribution function of the shifted distribution is 
\begin{align*}
	Q_{Y}(y) = \int_{x_0-h}^{x_0+h}\int_{-\infty}^{y} \varphi_\epsilon(z-f(x)) dz \dfrac{K\left(\frac{x_0-x}{h}\right)}{\int_{x_0-h}^{x_0+h} K\left(\frac{x_0-u}{h}\right) du}  dx  \overset{\text{let}}{=} \int_{x_0-h}^{x_0+h} g(y,x) dx,
\end{align*} 
where $\varphi_\epsilon$ is the density function of the Gaussian distribution with mean 0 and standard deviation $\epsilon = 0.3$, $f$ is the Spikes signal, and $K(x) = (1-|x|)_{+}$. 
Figure \ref{fig:decom} shows the cdf of $Q_Y$.

Our goal is to find a distribution $P'$ that is very close to $P$ in the sense that it is almost indistinguishable with $n$ samples. Note that $Q_Y$  depends only on values of $x$ in the range $[x_0-h, x_0+h]$. Therefore, we take $P'$ to be equal to $P$ when $X \not\in [x_0-h, x_0+h]$. Now, we decompose $Q_Y$ into two parts according to whether $x \in [x_0-h_0, x_0+h_0]$ or not. We view $Q_Y$ as a mixture of two distribution functions $F_1$ and $F_2$:
\begin{equation*}
	Q_Y(y) =  \int_{x_0-h_0}^{x_0+h_0} g(y,x) dx + \int_{|x-x_0|>h_0} g(y,x) dx  \\
	= w({h_0}) F_1(y) + (1-w({h_0}))F_2(y),
	%\int_{x_0-h_0}^{x_0+h_0}\int_{-\infty}^{y} \dfrac{p_{\epsilon}(y-f(x))K\left(\frac{x_0-x}{h}\right)}{\int_{x_0-h}^{x_0+h} K\left(\frac{x_0-u}{h}\right) du}dx + \int_{|x-x_0|>h_0}\int_{-\infty}^{y} \dfrac{p_{\epsilon}(y-f(x))K\left(\frac{x_0-x}{h}\right)}{\int_{x_0-h}^{x_0+h} K\left(\frac{x_0-u}{h}\right) du} dx  \\
	%= &\dfrac{\int_{x_0-h_0}^{x_0+h_0} K\left(\frac{x_0-u}{h}\right) du}{\int_{x_0-h}^{x_0+h} K\left(\frac{x_0-u}{h}\right) du} \cdot\int_{x_0-h_0}^{x_0+h_0}\int_{-\infty}^{y} p_{\epsilon}(y-f(x))\dfrac{K\left(\frac{x_0-x}{h}\right)}{\int_{x_0-h_0}^{x_0+h_0} K\left(\frac{x_0-u}{h}\right) du}dx + \\
	%&\dfrac{\int_{|x-x_0|>h_0} K\left(\frac{x_0-u}{h}\right) du}{\int_{x_0-h}^{x_0+h} K\left(\frac{x_0-u}{h}\right) du} \cdot \int_{|x-x_0|>h_0}\int_{-\infty}^{y} p_{\epsilon}(y-f(x))\dfrac{K\left(\frac{x_0-x}{h}\right)}{\int_{|x-x_0|>h_0} K\left(\frac{x_0-u}{h}\right) du} dx \\
\end{equation*}
where \begin{align*}
	w({h_0}) &= \dfrac{\int_{x_0-h_0}^{x_0+h_0} K\left(\frac{x_0-u}{h}\right) du}{\int_{x_0-h}^{x_0+h} K\left(\frac{x_0-u}{h}\right) du} = 1-\left(\dfrac{h-h_0}{h}\right)^2. 
	%F_1(y) &= \int_{x_0-h_0}^{x_0+h_0}\int_{-\infty}^{y} \dfrac{p_{\epsilon}(y-f(x))K\left(\frac{x_0-x}{h}\right)}{\int_{x_0-h_0}^{x_0+h_0} K\left(\frac{x_0-u}{h}\right) du}dx, \\
	%F_2(y) &= \int_{|x-x_0|>h_0}\int_{-\infty}^{y} \dfrac{p_{\epsilon}(y-f(x))K\left(\frac{x_0-x}{h}\right)}{\int_{|x-x_0|>h_0} K\left(\frac{x_0-u}{h}\right) du} dx.
\end{align*}

Figure \ref{fig:decom} shows the mixture components in the case where $h_0 = 0.012$. %The horizontal dotted line marks a cumulative probability equal to 0.5 and the vertical dotted line marks $Q_Y^{-1}(1/2) = \theta = 1.35$.

\begin{figure}[h!]
	\centering
	\includegraphics[width=0.7\linewidth]{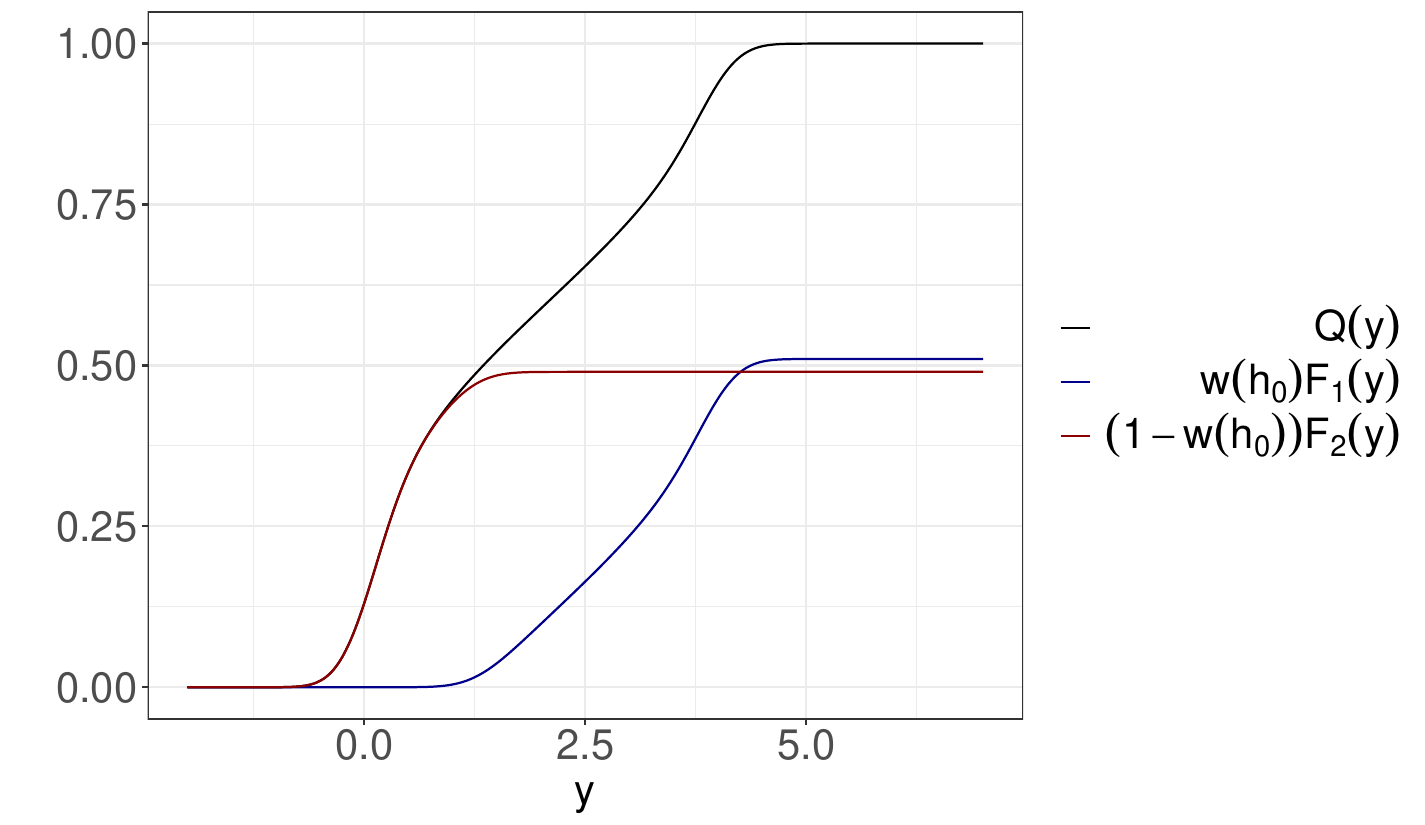}
	\caption{Plot of $Q_Y$, $w({h_0}) F_1(y)$ and $(1-w({h_0}))F_2(y)$ when $h_0 = 0.012$.}
	\label{fig:decom}
\end{figure}

The idea is to modify the distribution of $Y|X$ when $X \in [x_0-h_0, x_0+h_0]$. Thinking in terms of the mixture components, we find a distribution function $G_1(y)$ so that $\tilde{Q}_Y(y) = w({h_0}) G_1(y) + (1-w({h_0}))F_2(y)$ and $\tilde{Q}_Y^{-1}(1/2) \gg Q_Y^{-1}(1/2)$. $G_1(y)$ can be any distribution since we are in a distribution-free setting. Say we want $\tilde{Q}_Y^{-1}(1/2) = \theta(P')$. Define $G_1(y)$ by moving the mass of $F_1(y)$ such that $y < \theta(P')$ to $y = \theta(P')$ so that $G_1(y) = (F_1(y)+ F_1( \theta(P')))I(y > \theta(P'))$.
Figure \ref{fig:indist} plots $\tilde{Q}_Y(y)$ and $Q_Y(y)$ and marks $\theta(P) = {Q}_Y^{-1}(1/2) = 1.35$ and $ \theta(P') = \tilde{Q}_Y^{-1}(1/2) = 2.7$.
%$w(h_0)G(y)$ with $\tilde{Q}_Y(y)$, and $(1-w(h_0))F_1(y)$ along with the object of inference $\tilde{Q}_Y^{-1}(1/2) = \theta^*$.

\begin{figure}[h!]
	\centering
	\includegraphics[width=0.7\linewidth]{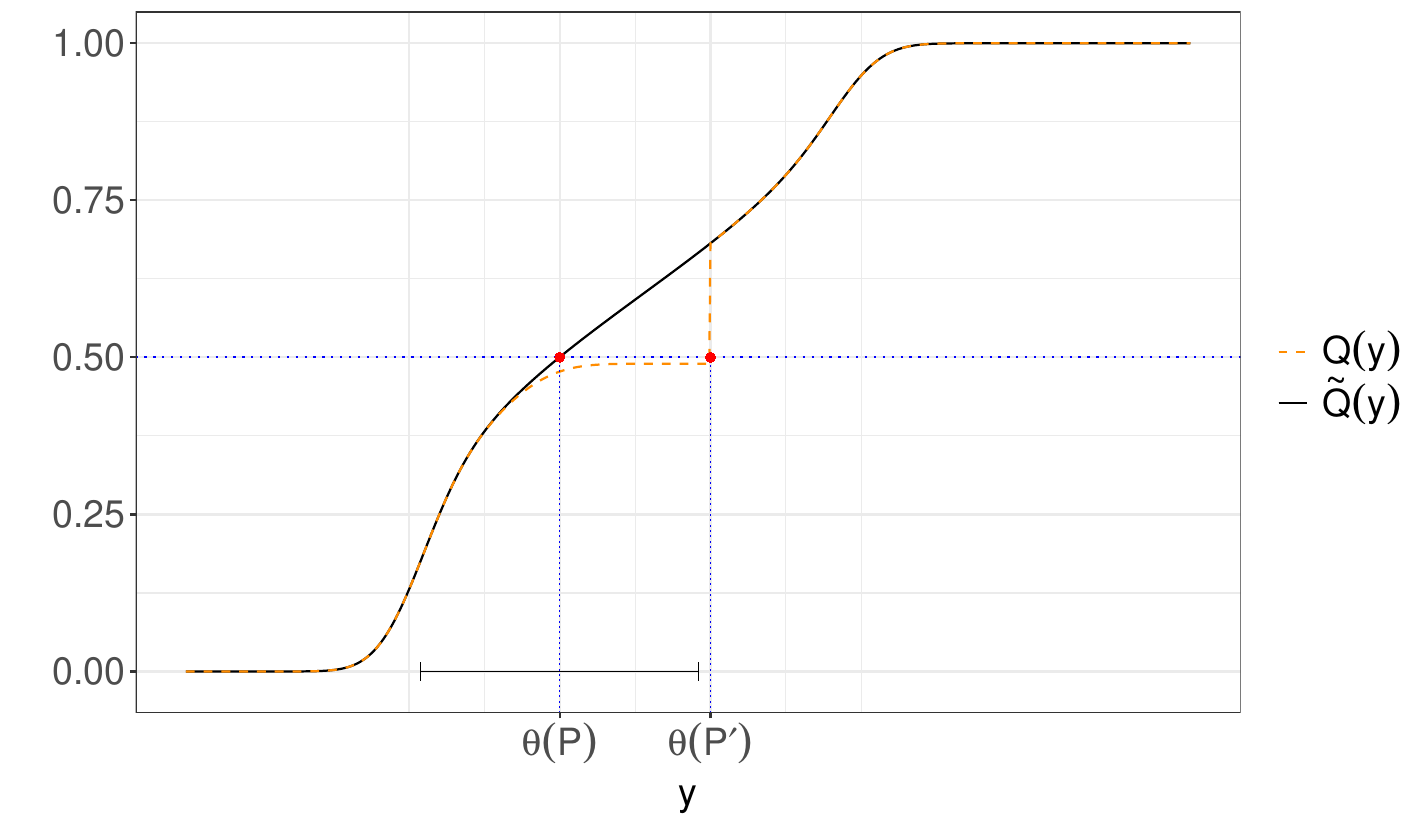}
	\caption{Plot of $Q_Y(y)$ and $\tilde{Q}_Y(y)$ when $h_0 = 0.012$. Error bar centered at $\theta(P)$ with size of average width from Weighted Quantile CIs is also shown. }
	\label{fig:indist}
\end{figure}

Note that $G_1$ can be constructed from $(X,Y) \sim P'$ with $X \sim \text{Unif}[0,1]$ and  
\begin{align*}
	Y|X=x =\begin{cases}
		\mathcal{N}(f(x), (0.3)^2),  & x \not\in [x_0-h_0, x_0+h_0], \\
		\mathcal{N}(f(x), (0.3)^2)I(Y>\theta^*) + \delta_{\theta^*}, & x \in [x_0-h_0, x_0+h_0].
	\end{cases} 
\end{align*} 
In this case, \begin{align*}
	d_{TV}(P, P') = \int_{x_0-h_0}^{x_0+h_0}\int_{-\infty}^{\theta^*} \varphi_{\epsilon}(y-f(x)) dy dx. 
\end{align*}
When $\theta(P') = 2.7$, we have $d_{TV}(P,P') \approx 0.010 = \dfrac{2}{200}$. This implies that $P$ and $P'$ are nearly indistinguishable from 200 random samples. However, our value of $\theta(P')$ differs by more than half the $90\%$ confidence interval width of $\theta(P)$ which is 1.245. Therefore, we can see that our method is not conservative but actually reflects the necessary uncertainty since nothing is assumed about the underlying distribution.

	\section{Discussion}
	
	This paper challenges the traditional inference targets such as the conditional mean and quantile, by pointing out the ambiguity of the notion of conditional distribution of a response variable $Y$ given a covariate $X$ being equal to some specific value. Instead, we propose a new object of inference that allows us to gain a localized understanding of the outcome, with the level of resolution being flexible and determined by the user's needs. Our approach yields reliable and easily computable confidence intervals without hidden constants.

	A finite-sample valid confidence interval for the univariate mean that is asymptotically efficient can be constructed when the distribution is supported on a compact set \citep{romano2000finite}. We leave it as future work to investigate if the Quantile Rejection method which is finite-sample valid is also efficient. If not, it is of interest to see if this gap can be reduced or if it is theoretically impossible to obtain finite-sample validity without loss of efficiency.

	\subsection*{Acknowledgments}
	J.J. would like to thank Sky Cao, Isaac Gibbs, Kevin Guo, Ying Jin and Chiara Sabatti for helpful discussions and feedback.
	J.J. was partly supported by ILJU Academy and Culture Foundation and a Ric Weiland Graduate Fellowship. E.C. was supported by the Office of Naval Research grant N00014-20-1-2157, the National Science Foundation grant DMS-2032014, the Simons Foundation under award 814641, and the ARO grant 2003514594. 
	
	\appendix
	\section{Deferred proofs} \label{app:proofs}
	\begin{lem}[\cite{lehmann2005testing} Lemma 11.3.3] \label{weightedCLT}
		Let $Y_1, Y_2, \ldots$ be i.i.d.~with mean 0 and finite variance $\sigma^2$. Let $w_1, w_2, \ldots$ be a sequence of constants. If $\dfrac{\max_{i=1, \ldots, n} w_i^2}{\sum_{j=1}^n w_j^2} \to 0$, then $\sum_{i=1}^n \dfrac{w_i Y_i}{\sqrt{\sum_{j=1}^n w_j^2}} \dist \mathcal{N}(0, \sigma^2)$.
	\end{lem}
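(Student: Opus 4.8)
The plan is to cast this as a direct application of the Lindeberg--Feller central limit theorem for triangular arrays. I would first introduce the shorthand $s_n^2 = \sum_{j=1}^n w_j^2$ and define the triangular array $X_{n,i} = w_i Y_i / s_n$ for $1 \le i \le n$, so that the quantity of interest is exactly the row sum $\sum_{i=1}^n X_{n,i}$. Within each row the $X_{n,i}$ are independent because the $Y_i$ are, each has mean zero since $\E[Y_i] = 0$, and a quick computation gives $\sum_{i=1}^n \operatorname{Var}(X_{n,i}) = \sigma^2 s_n^{-2}\sum_{i=1}^n w_i^2 = \sigma^2$ for every $n$. Thus the array already carries the correct limiting variance, and it only remains to verify the Lindeberg condition.

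For the Lindeberg step, I would fix $\epsilon > 0$ and write, using that the $Y_i$ are identically distributed as a single variable $Y$,
\begin{align*}
    \sum_{i=1}^n \E\!\left[X_{n,i}^2 \, I(|X_{n,i}| > \epsilon)\right] = \frac{1}{s_n^2}\sum_{i=1}^n w_i^2 \, \E\!\left[Y^2 \, I\!\left(|Y| > \frac{\epsilon s_n}{|w_i|}\right)\right].
\end{align*}
The crucial point is that the negligibility hypothesis $M_n := \max_{i \le n} w_i^2 / s_n^2 \to 0$ forces the truncation level to blow up uniformly in $i$: since $\epsilon s_n / |w_i| \ge \epsilon / \sqrt{M_n}$, each inner expectation is bounded by $\E[Y^2 I(|Y| > \epsilon/\sqrt{M_n})]$, a quantity that does not depend on $i$. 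Pulling this uniform bound out of the sum and using $s_n^{-2}\sum_i w_i^2 = 1$ collapses the whole expression to $\E[Y^2 I(|Y| > \epsilon/\sqrt{M_n})]$.

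Finally, since $M_n \to 0$ we have $\epsilon/\sqrt{M_n} \to \infty$, and because $\E[Y^2] = \sigma^2 < \infty$ the integrability of $Y^2$ gives $\E[Y^2 I(|Y| > \epsilon/\sqrt{M_n})] \to 0$ by dominated convergence. Hence the Lindeberg condition holds and the Lindeberg--Feller theorem delivers $\sum_{i=1}^n X_{n,i} \dist \mathcal{N}(0, \sigma^2)$, which is the claim. The only genuinely delicate step is the Lindeberg verification, where everything hinges on converting the weight-negligibility assumption into a single $i$-independent truncation threshold so that finite variance can be invoked; the mean and variance bookkeeping is routine.
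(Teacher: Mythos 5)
Your proof is correct: the variance bookkeeping, the uniform truncation bound $\epsilon s_n/|w_i| \ge \epsilon/\sqrt{M_n}$ (with $w_i = 0$ terms vanishing trivially), and the dominated-convergence step verifying Lindeberg's condition are all sound. The paper itself gives no proof of this lemma---it simply cites Lemma 11.3.3 of \cite{lehmann2005testing}---and your Lindeberg--Feller argument is precisely the standard proof of that cited result, so there is nothing to reconcile.
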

	
	\begin{lem}[Lemma in \cite{david2004order} Section 10.2] \label{lem:order10.2}
		Let $\{V_n\}$, and $\{W_n\}$ be two sequences of random variables such that \begin{enumerate}
			\item $W_n = O_p(1)$, 
			\item for every $y$ and every $\epsilon > 0$, \begin{enumerate}
				\item $\lim_{n \to \infty}\pr(V_n \le y, W_n \ge y+\epsilon) = 0$,
				\item $\lim_{n \to \infty}\pr(V_n \ge y+\epsilon, W_n \le y) = 0$.
			\end{enumerate}
		\end{enumerate}
		Then, $V_n - W_n \prob 0$.
	\end{lem}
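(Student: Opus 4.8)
The plan is to reduce the two-sided claim $V_n - W_n \prob 0$ to two one-sided statements, namely that for every $\delta > 0$ both $\pr(W_n - V_n \ge \delta) \to 0$ and $\pr(V_n - W_n \ge \delta) \to 0$; by the symmetry of hypotheses (a) and (b) it suffices to treat the first. The difficulty is that (a) and (b) only control the discrepancy at a \emph{fixed} threshold $y$, whereas the event $\{W_n - V_n \ge \delta\}$ depends on the random locations of $V_n$ and $W_n$. The key device for bridging this gap is the tightness assumption $W_n = O_p(1)$, which lets me confine $W_n$ to a compact interval with high probability and then cover that interval by a \emph{finite} grid, so that a union bound over finitely many vanishing probabilities still vanishes.

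Concretely, I would fix $\delta > 0$ and $\eta > 0$, and first invoke $W_n = O_p(1)$ to pick $M$ with $\pr(|W_n| > M) < \eta$ for all large $n$. Next, set $\epsilon = \delta/2$ and choose a finite grid $y_0 < y_1 < \cdots < y_K$ covering $[-2M, 2M]$ with spacing $\rho := y_{j+1}-y_j < \delta/2$. The heart of the argument is the elementary geometric claim that on the event $\{W_n - V_n \ge \delta\} \cap \{|W_n| \le M\}$ there is a grid point $y_{j^\ast}$ with $V_n \le y_{j^\ast}$ and $W_n \ge y_{j^\ast} + \epsilon$. This holds because the interval $[V_n,\, W_n - \epsilon]$ has length $W_n - V_n - \epsilon \ge \delta/2 > \rho$, and any interval of length exceeding the grid spacing must contain a grid point; taking $y_{j^\ast}$ to be such a point gives $V_n \le y_{j^\ast} \le W_n - \epsilon$ as required.

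This containment yields
\begin{align*}
	\{W_n - V_n \ge \delta\} \cap \{|W_n| \le M\} \subseteq \bigcup_{j=0}^{K} \{V_n \le y_j,\ W_n \ge y_j + \epsilon\},
\end{align*}
so a union bound together with hypothesis (a) gives $\pr(W_n - V_n \ge \delta,\ |W_n| \le M) \le \sum_{j=0}^{K}\pr(V_n \le y_j,\, W_n \ge y_j + \epsilon) \to 0$, the sum being finite. Hence $\limsup_{n} \pr(W_n - V_n \ge \delta) \le \eta$, and letting $\eta \downarrow 0$ gives $\pr(W_n - V_n \ge \delta) \to 0$. Running the identical argument with hypothesis (b) in place of (a) produces $\pr(V_n - W_n \ge \delta) \to 0$, and combining the two one-sided bounds completes the proof. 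The only genuinely delicate point is the finite-grid covering step: it is essential to take $\epsilon$ strictly smaller than $\delta$ and the grid spacing $\rho$ strictly smaller than $\delta - \epsilon$, so that the guaranteed slack $\delta$ between $V_n$ and $W_n$ leaves room for a grid point to land in $[V_n,\, W_n - \epsilon]$.
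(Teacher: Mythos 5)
Your proof is correct and follows essentially the same route as the proof the paper points to (the paper gives no argument of its own, citing David and Nagaraja, p.~286, i.e., Ghosh's classical lemma): tightness of $W_n$ to localize to a compact interval, a finite grid over that interval, and a union bound applying hypotheses (a) and (b) at the finitely many grid points. The only step to tighten is your covering claim---an interval of length exceeding the grid spacing is guaranteed to contain a grid point only if it lies within the grid's span, and $V_n$ may fall far below $-2M$---so you should locate the grid point inside $[W_n-\delta,\,W_n-\epsilon]\subseteq[-2M,\,2M]$ rather than inside $[V_n,\,W_n-\epsilon]$, which is legitimate once $M\ge\delta$ (harmless, since enlarging $M$ only preserves $\pr(|W_n|>M)<\eta$).
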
 
	\begin{proof}
		Proof in p.286 of \cite{david2004order}.
	\end{proof}

	\subsection{Proof of Lemma \ref{lem:tildeqn}} \label{proof:lem:tildeqn}
	\begin{enumerate}[(a)]
		\item Right continuity holds since $I(Y_i \le y)$ are right continuous functions and $L_i$'s are non-negative with their sum $\sum_j L_j$ being strictly positive. Monotonicity holds by the construction of $\tilde{Q}_n$.
		\item
		Let $\tilde{Q}_{n, LR}(y) = \sum_{i=1}^n \dfrac{L_i}{n\mathbb{E}_P[L]}I(Y_i \le y)$ be the estimator of $Q_Y$ that is not normalized by $\sum_{i=1}^n L_i$, but has weights that are exactly the likelihood ratio of $P$ and $Q$. We first show that \begin{align*}
			\sqrt{n}\left[(Q(\theta_p + a_n)-Q(\theta_p)) - (\tilde{Q}_{n, LR}(\theta_p + a_n)-\tilde{Q}_{n, LR}(\theta_p)) \right] \overset{\text{let}}{=} W_n \prob 0
		\end{align*} as $n \to \infty$.
		Note that $\tilde{Q}_{n, LR}(y)$ is an unbiased estimator of $Q_Y(y)$ since \begin{align*}
			\mathbb{E}_P[Q_{n, LR}(y)] = \frac{\mathbb{E}_P[L_iI(Y_i \le y)]}{\mathbb{E}_P[L_i]} = \int_{-\infty}^{y}\int_{\mathcal{X}}\dfrac{q(x)}{p(x)}p(y,x)dx dy = \int_{-\infty}^{y}q_Y(y) dy = Q_Y(y).
		\end{align*} 
		Thus, $\mathbb{E}_P[W_n]=0$ and it suffices to show that $\mathbb{E}_P[W_n^2] \to 0$ to show that $W_n \prob 0$.% by using Chebyshev's inequality. 
		We have that \begin{align*}
			\mathbb{E}_P[W_n^2] = n\left(\mathbb{E}_P[\tilde{Q}_{n, LR}(\theta_p + a_n) - \tilde{Q}_{n, LR}(\theta_p)]^2-(Q(\theta_p + a_n)-Q(\theta_p))^2\right).
		\end{align*} 
		So, we are left to show that for any $a_n = o(1)$, \begin{align}
			\mathbb{E}_P[\tilde{Q}_{n, LR}(\theta_p + a_n) - \tilde{Q}_{n, LR}(\theta_p)]^2 = (Q(\theta_p + a_n)-Q(\theta_p))^2 + o(n^{-1}).
		\end{align} Let $\rho_n = \min(\theta_p, \theta_p + a_n)$ and $\rho_n' = \max(\theta_p, \theta_p + a_n)$. Then, \begin{align*}
			\mathbb{E}_P[\tilde{Q}_{n, LR}(\theta_p + a_n) - \tilde{Q}_{n, LR}(\theta_p)]^2 = \mathbb{E}_P\Bigg[\sum_{i=1}^n  \dfrac{L_i}{n\mathbb{E}_P[L]}&I(\rho_n < Y_i \le \rho_n') \Bigg]^2 \\
			= \dfrac{1}{n^2\mathbb{E}_P[L^2]} \Bigg\{ \sum_{i=1}^n &\mathbb{E}_P[L_i^2 I(\rho_n < Y_i \le \rho_n')] + \\
			\sum_{1 \le i\ne j \le n} &\mathbb{E}_P[L_iL_j I(\rho_n < Y_i \le \rho_n')I(\rho_n < Y_j \le \rho_n')] \Bigg\} \\
			= \dfrac{\mathbb{E}_P[L^2 I(\rho_n < Y \le \rho_n')]}{n\mathbb{E}_P[L]^2}& + \dfrac{n-1}{n} \dfrac{\mathbb{E}_P^2[L I(\rho_n < Y \le \rho_n')]}{\mathbb{E}_P[L]^2} \\
			= \dfrac{\mathbb{E}_P^2[L I(\rho_n < Y \le \rho_n')]}{\mathbb{E}_P[L]^2} & + \dfrac{\mathbb{E}_P[L^2 I(\rho_n < Y \le \rho_n')]-\mathbb{E}_P^2[L I(\rho_n < Y \le \rho_n')]}{n\mathbb{E}_P[L]^2}\\
			= (Q_Y(\theta_p+a_n)-Q_Y(\theta_p&))^2 + \dfrac{s_n(a_n)}{n\mathbb{E}_P[L]^2}
		\end{align*} for $s_n(a_n) = \mathbb{E}_P[L^2 I(\rho_n < Y \le \rho_n')]-\mathbb{E}_P^2[L I(\rho_n < Y \le \rho_n')]$. We have that $s_n(a_n)$ converges to 0 since $s_n(a_n) \ge 0$ and $s_n(a_n) \le \mathbb{E}_P[L^2 I(\rho_n < Y \le \rho_n')] = \mathbb{E}_Q[LI(\rho_n < Y \le \rho_n')] \le \sup_u K(u) Q(Y \in (\rho_n, \rho_n'])$ converges to 0 as $n \to \infty$ by the differentiability of $Q_Y$ at $\theta_p$ and both $\rho_n$ and $\rho_n'$ converges to $\theta_p$. 
		
		We have shown (b) for $\tilde{Q}_{n, LR}$ rather than $\tilde{Q}_n$. Since $\tilde{Q}_n(y) = \dfrac{n\mathbb{E}_P[L]}{\sum_{i=1}^n L_i}\tilde{Q}_{n, LR}(y)$, it suffices to show that \begin{align}
			\sqrt{n}\left(1-\dfrac{n\mathbb{E}_P[L]}{\sum_{i=1}^n L_i}  \right)(\tilde{Q}_{n, IS}(\theta_p+a_n)-\tilde{Q}_{n, IS}(\theta_p)) \prob 0. \label{eq:tildeqn-b-1}
		\end{align}
		We have \begin{align*}
			1-\dfrac{n\mathbb{E}_P[L]}{\sum_{i=1}^n L_i} = \dfrac{1}{\sqrt{n}}\sum_i(L_i-\mathbb{E}_P[L_i])\frac{1}{n^{-1}\sum_i L_i}n^{-1/2} = o_p(n^{-1/2}).
		\end{align*} Moreover, \begin{align*}
			\sqrt{n}(Q_Y(\theta_p+a_n)-Q_Y(\theta_p)) = (Q'(\theta_p)+o(1))a_n n^{1/2} = O(1)
		\end{align*}
		and since $W_n \prob 0$, we have \begin{align*}
			\sqrt{n}(\tilde{Q}_{n, IS}(\theta_p+a_n)-\tilde{Q}_{n, IS}(\theta_p)) = O_p(1)
		\end{align*} and obtain (\ref{eq:tildeqn-b-1}).
		\item Applying the multivariate central limit theorem to the sum of $(L_iI(Y_i \le y), L_i)^T$'s, we obtain \begin{align*}
			\sqrt{n} \begin{bmatrix}
				a_n \\ b_n
			\end{bmatrix}= \sqrt{n} \begin{bmatrix}
				n^{-1}\sum_{i=1}^n (L_iI(Y_i \le y)-\mathbb{E}_P[L]Q(y)) \\ n^{-1}\sum_{i=1}^{n} (L_i-\mathbb{E}_P[L]) \end{bmatrix} \dist \mathcal{N} \left(
			0, \begin{bmatrix}
				\text{Var}(LI(Y\le y)) & \text{Cov}(LI(Y \le y), L) \\ 
				\text{Cov}(LI(Y \le y), L) & \text{Var}(L)
			\end{bmatrix}
			\right)
		\end{align*} 
		We can rewrite $\tilde{Q}_n(y)-Q(y)$ as \begin{align*}
			\tilde{Q}_n(y)-Q_Y(y) &= \dfrac{n^{-1}\sum_{i=1}^n(L_iI(Y_i\le y) - \mathbb{E}_P[L]Q_Y(y)) + \mathbb{E}_P[L]Q_Y(y) }{n^{-1}\sum_{i=1}^n (L_i-\mathbb{E}_P[L]) + \mathbb{E}_P[L]}-Q(y) \\
			&= \dfrac{a_n-b_nQ(y)}{b_n + \mathbb{E}_P[L]}.
		\end{align*}
		Then, for function $f(a,b) = \dfrac{a-bQ(y)}{b+\mathbb{E}_P[L]}$ which is differentiable at $(a,b) = (0,0)$, we can apply the delta method and get \begin{align*}
			\sqrt{n}(f(a_n,b_n)-f(0,0)) \dist \mathcal{N}(0, \nabla f(0,0)^T \Sigma \nabla f(0,0))
		\end{align*} for $\Sigma = \begin{bmatrix}
			\text{Var}(LI(Y\le y)) & \text{Cov}(LI(Y \le y), L) \\ 
			\text{Cov}(LI(Y \le y), L) & \text{Var}(L)
		\end{bmatrix}$. Since $\nabla f (a,b) = \begin{bmatrix}
			(b+\mathbb{E}_P[L])^{-1} \\ \frac{-Q(y)\mathbb{E}_P[L]}{(b+\mathbb{E}_P[L])^2}
		\end{bmatrix}$, we have \begin{align*}
			\sqrt{n}(\tilde{Q}_n(y)-Q(y)) \dist \mathcal{N}\left(0, \dfrac{\text{Var}(LI(Y\le y) - Q(y))}{\mathbb{E}_P[L]^2} \right)
		\end{align*} and $\sigma_p^2(y) = \dfrac{\text{Var}(LI(Y\le y) - Q(y))}{\mathbb{E}_P[L]^2} = \dfrac{\mathbb{E}_P[L^2(I(Y \le y)-Q(y))^2]}{\mathbb{E}_P[L]^2}$,
		since $\mathbb{E}_P[LI(Y\le y)]=Q(y)$.
	\end{enumerate}
	
	\subsection{Proof of Proposition \ref{prop:covgGen}}
	Note that using Slutsky's theorem and (c) of Lemma \ref{lem:tildeqn}, we have \begin{align*}
		P(\mathsf{Q}(\tilde{Q}_n ; \  \hat{p}_1) \le \theta_p) &= P(\hat{p}_1 \le \tilde{Q}_n(\theta_p)) \\
		&= P(\sqrt{n}(\hat{p}_1-p) \le \sqrt{n}(\tilde{Q}_n(\theta_p)-p)) \\
		&\to P(z_{\alpha_1} \le Z) = 1-\alpha_1
	\end{align*} as $n \to \infty$ for $Z \sim \mathcal{N}(0,1)$. Similar calculations on the upper bound gives the desired result. 
	
	\subsection{Proof of Theorem \ref{thm:wqCorrCovg}}
	It suffices to show that \begin{align*}
		\hat{\sigma}_p^2(\theta_p) \prob \sigma_p^2(\theta_p)
	\end{align*} as it will imply that the $\hat{p}_1$ and $\hat{p}_2$ in Step 5 of Algorithm \ref{alg:wq} satisfies (\ref{eq:quantileCondition}) by Slutsky's theorem and continuous mapping theorem.
	
	First, the denominator converges in probability to a positive value. By  law of large numbers and continuous mapping theorem, $(n^{-1} \sum_{i=1}^n L_i)^2 \prob (\E_P[L])^2$ as the kernel is uniformly bounded. 
	For the numerator, if suffices to show that \begin{align*}
		n^{-1}\sum_{i=1}^n L_i^2I(Y_i \le \tilde{\theta}(p)) \prob \mathbb{E}_P[L^2I(Y\le \theta_p)].
	\end{align*} Denote $Z_n(y) = n^{-1}\sum_{i=1}^n L_i^2I(Y_i \le y)$ and $z(y) = \mathbb{E}_P[L^2I(Y \le y)]$. Then, we need to show that $Z_n(\tilde{\theta}_{p}) \prob z(\theta_p)$ for $\tilde{\theta}_{p} = \tilde{Q}_n^{-1}(p)$. Say the kernel is uniformly bounded by some constant $K_{\max} > 0$. For some $\epsilon > 0$, there exists $\delta > 0$ such that \begin{align} \label{eq:delta}
		\mathbb{E}_Q[I(\theta_p < Y < \theta_p + \delta)] \le \dfrac{\epsilon}{2K_{\max} \mathbb{E}_P[L]}
	\end{align} since $Q_Y$ is differentiable at $\theta_p$. Note that \begin{align} \label{eq:consistency}
		\mathbb{P}(|Z_n(\tilde{\theta}_{p})- z(\theta_p)| > \epsilon) = &\mathbb{P}(|Z_n(\tilde{\theta}_{p})- z(\theta_p)| > \epsilon, |\tilde{\theta}_{p}-\theta_p| \le \delta) \\ 
		&+ \mathbb{P}(|Z_n(\tilde{\theta}_{p})- z(\theta_p)| > \epsilon, |\tilde{\theta}_{p}-\theta_p| > \delta). \label{eq:consistency2}
	\end{align}
	For the second term in (\ref{eq:consistency2}), \begin{align*}
		\mathbb{P}(|Z_n(\tilde{\theta}_{p})- z(\theta_p)| > \epsilon, |\tilde{\theta}_{p}-\theta_p| > \delta) \le \mathbb{P}(|\tilde{\theta}_{p}-\theta_p| > \delta) \to 0 
	\end{align*} as $n \to \infty$ by the consistency of $\tilde{\theta}_{p}$ from Proposition \ref{prop:wqexp}. Since $Z_n(y)$ is monotone increasing, if $|\tilde{\theta}_{n,p}-\theta_p| \le \delta$, \begin{align*}
		|Z_n(\tilde{\theta}_{p})- z(\theta_p)| \le \min\{|Z_n(\theta_p + \delta) - z(\theta_p)|,  |Z_n(\theta_p - \delta) - z(\theta_p)|\}.
	\end{align*} Therefore, we can bound the first term (\ref{eq:consistency}) by \begin{align} \label{eq:consistency3}
		\mathbb{P}(|Z_n(\tilde{\theta}_{p})- z(\theta_p)| > \epsilon, |\tilde{\theta}_{p}-\theta_p| \le \delta) \le &\mathbb{P}(|Z_n(\theta_p + \delta) - z(\theta_p)| > \epsilon, |\tilde{\theta}_{p}-\theta_p| \le \delta) \\
		&+\mathbb{P}(|Z_n(\theta_p- \delta) - z(\theta_p)| > \epsilon, |\tilde{\theta}_{p}-\theta_p| \le \delta). \label{eq:consistency4}
	\end{align}
	Using the triangular inequality, (\ref{eq:consistency3}) is bounded by \begin{align*}
		\mathbb{P}(|Z_n(\theta_p + \delta) - z(\theta_p)| > \epsilon) \le \mathbb{P}(|Z_n(\theta_p + \delta) - z(\theta_p+\delta)| + |z(\theta_p+\delta) - z(\theta_p)|>\epsilon). 
	\end{align*}
	Now, note that \begin{align*}
		|z(\theta_p +\delta)-z(\theta_p)| &= \mathbb{E}_P[L^2I(\theta_p < y \le \theta_p+\delta)] \\
		&= \mathbb{E}_Q[LI(\theta_p < y \le \theta_p+\delta)] \mathbb{E}_P[L] \ (\because \text{Change of measure}) \\
		&\le K_{\max} \mathbb{E}_Q[I(\theta_p < y \le \theta_p+\delta)] \mathbb{E}_P[L] \ (\because L \le K_{\max}) \\
		&\le \epsilon/2 \ (\because (\ref{eq:delta}))
	\end{align*} and so \begin{align*}
		\mathbb{P}(|Z_n(\theta_p + \delta) - z(\theta_p+\delta)| + |z(\theta_p+\delta) - z(\theta_p)|>\epsilon) < \mathbb{P}(|Z_n(\theta_p + \delta) - z(\theta_p+\delta)| > \epsilon/2)
	\end{align*} which converges to 0 as $n \to \infty$ since $Z_n(\theta_p+\delta) \prob z(\theta_p+\delta)$ by the weak law of large numbers.
	Analogous to the above argument, (\ref{eq:consistency4}) converges to 0. Therefore, we have $\hat{\sigma}^2_{p}(\theta_p) \prob \sigma^2_p(\theta_p)$.

	\subsection{Proof of Proposition \ref{prop:semiLowerBound}} \label{proof:prop:semiLowerBound}
	
	Semiparametric efficient bound for M-estimators can be computed using results from \cite{newey1990semiparametric}.
	The Efficient Influence Function (EIF) for estimating $\Lambda(P) = \arg\min_{\theta} \mathbb{E}_P[m(\theta, Z)]$ is \begin{align*}
		\phi_P^*(z) = -[\nabla_{\theta}^2 M(\Lambda(P))]^{-1} \nabla_{\theta} m(\Lambda(P), z)
	\end{align*} for $M(\theta) = \mathbb{E}_P[m(\theta, Z)]$. 
	So, the semiparametric efficiency bound for estimating $\theta_p$ is $\sigma_*^2 = \dfrac{\text{Var}_P(\nabla_{\theta} m(\theta_p,Z))}{[\nabla_{\theta}^2 M(\theta_p)]^2}$.
	Note that \begin{align*}
		m(\theta, Z) = L(p(Y-\theta)_+ - (1-p)(Y-\theta)_{-}), \\
		\nabla_{\theta}{m}(\theta,Z) = L(-p I(Y > \theta)+(1-p)I(Y \le \theta)) 
	\end{align*} and 
	\begin{align*}
		\mathbb{E}_P[\nabla_{\theta}{m}(\theta,Z)] &= \mathbb{E}_P[L]\mathbb{E}_Q[-p I(Y > \theta)+(1-p)I(Y \le \theta)] \\
		&= \mathbb{E}_P[L](Q_Y(\theta)-p), \\
		\mathbb{E}_P[\nabla_{\theta}m(\theta_p, Z)] &= 0.
	\end{align*}
	Also, \begin{align*}
		\mathbb{E}_P[\nabla_{\theta}{m}(\theta,Z)]^2 &= \mathbb{E}_P[L^2\{(1-p)^2I(Y \le \theta) + p^2 I(Y > \theta) \}] \\
		&=\mathbb{E}_P[L^2\{p^2 + (1-2p)I(Y \le \theta) \}]
	\end{align*} and so
	\begin{align*}
		\text{Var}_P(\nabla_{\theta} m(\theta_p,Z)) = \mathbb{E}_P[\nabla_{\theta}{m}(\theta_p,Z)]^2 = \mathbb{E}_P[L^2\{p^2 + (1-2p)I(Y \le \theta_p) \}]=\mathbb{E}_P[L^2(I(Y \le \theta_p)-p)^2].
	\end{align*} We proved the equality of the numerator. Now, note that
	\begin{align*}
		M(\theta) = \mathbb{E}_P[m(\theta, Z)] &= \mathbb{E}_P[L]\mathbb{E}_Q[\rho_p(Y-\theta)] \\
		&=\mathbb{E}_P[L]\left((1-p)\int_{-\infty}^{\theta}(\theta-y)dQ_Y(y) +p\int_{\theta}^{\infty} (y-\theta)dQ_Y(y) \right)  \\
		&=\mathbb{E}_P[L]\left((1-p)\int_{-\infty}^{\theta}Q_Y(y) dy - p\int_{\theta}^{\infty} (Q_Y(y)-1)dy\right)
	\end{align*} and hence 
	\begin{align*}
		\nabla_{\theta}M(\theta) &= \mathbb{E}_P[L](Q_Y(\theta)-p), \\
		\nabla^2_{\theta}M(\theta) &= \mathbb{E}_P[L]Q_Y'(\theta).
	\end{align*} Therefore, we have $[\nabla_{\theta}^2 M(\theta_p)]^2 = \mathbb{E}^2_P[L]Q_Y'(\theta_p)^2$.

	\subsection{Proof of Proposition \ref{prop:wqexp}} \label{proof:prop:wqexp}
	
	Let $\dot{\theta}_{p_n} = \theta_p + \dfrac{p_n-p}{Q'(\theta_p)}$, $V_n = \sqrt{n} (\tilde{\theta}_{p_n,n}-\dot{\theta}_{p_n})$, and $W_n = \dfrac{\sqrt{n}}{Q'(\theta_p)}(p-\tilde{Q}_n(\theta_p))$. Then, $\sqrt{n}\tilde{R}_n = V_n-W_n$, and it suffices to show that $(V_n, W_n)$ satisfies the conditions of Lemma $\ref{lem:order10.2}$. First, we have that $W_n$ is $O_p(1)$ by Lemma \ref{lem:tildeqn} (c). For $V_n$, note that \begin{align*}
		\{V_n \le y \} &= \left\{\tilde{\theta}_{p_n,n} \le \dot{\theta}_{p_n} + \dfrac{y}{\sqrt{n}} \right\} \\
		&\subseteq \left\{\tilde{Q}_n(\tilde{\theta}_{p_n,n}) \le  \tilde{Q}_n(\dot{\theta}_{p_n} + y/\sqrt{n}) \right\} \\
		&=\left\{ \dfrac{\sqrt{n}}{Q'(\theta_p)}\left(Q(\dot{\theta}_{p_n} + y/\sqrt{n}) - \tilde{Q}_n(\dot{\theta}_{p_n} + y/\sqrt{n}) \right)  \le \dfrac{\sqrt{n}}{Q'(\theta_p)}\left(Q(\dot{\theta}_{p_n} + y/\sqrt{n}) - p_n \right) \right\} \\
		&=\{Z_n \le y_n\},
	\end{align*} where $Z_n = \dfrac{\sqrt{n}}{Q'(\theta_p)}\left(Q(\dot{\theta}_{p_n} + y/\sqrt{n}) - \tilde{Q}_n(\dot{\theta}_{p_n} + y/\sqrt{n}) \right)$ and $y_n = \dfrac{\sqrt{n}}{Q'(\theta_p)}\left(Q(\dot{\theta}_{p_n} + y/\sqrt{n}) - p_n \right)$.
	Using differentiability of $Q(y)$ at $y = \theta_p$, we have that \begin{align*}
		y_n &= \dfrac{\sqrt{n}}{Q'(\theta_p)}\left(Q(\theta_p) + (Q'(\theta_p)+o(1))Q\left(\dfrac{p_n-p}{Q'(\theta_p)} + \dfrac{y}{\sqrt{n}}\right) - p_n \right) \\
		&= y + \dfrac{\sqrt{n}}{Q'(\theta_p)}\left(p-p+\left(\dfrac{p_n-p}{Q'(\theta_p)} \right)o(1) \right) \\
		&= y + o(1),
	\end{align*} where the last equality follows from $p_n-p = O(n^{-1/2})$. 
	For any $\epsilon >0$, there exists some $N \in \N$ such that $|y_n -y| < \epsilon/2$ for all $n \ge N$. Then, for $n \ge N$, \begin{align} \label{eq:lem-order-subset}
		\{V_n \le y, W_n \ge y+\epsilon \} \subseteq \{Z_n \le y_n, W_n \ge y+\epsilon \} \subseteq \{|W_n-Z_n| \ge \epsilon/2\}.
	\end{align}
	Note that \begin{align*}
		Z_n-W_n = \dfrac{\sqrt{n}}{Q'(\theta_p)}\left(Q(\dot{\theta}_{p_n} + y/\sqrt{n}) - \tilde{Q}_n(\dot{\theta}_{p_n} + y/\sqrt{n}) - p + \tilde{Q}_n(\theta_p) \right) \prob 0
	\end{align*} by Lemma \ref{lem:tildeqn} (b) using the fact that $\dot{\theta}_{p_n} + \dfrac{y}{\sqrt{n}} = p + O(n^{-1/2})$. Using $Z_n-W_n \prob 0$ and (\ref{eq:lem-order-subset}), we get that $\pr(V_n \le y, W_n \ge y+\epsilon) \le \pr(|W_n-Z_n| \ge \epsilon/2) \to 0$ as $n \to \infty$, and we have verified the first condition of (b) in Lemma \ref{lem:order10.2}. The second condition of (b) can be proved similarly.
	Now, applying Lemma \ref{lem:order10.2}, $\sqrt{n}\tilde{R}_n \prob 0$.

	\subsection{Proof of Corollary \ref{cor:thetapeff}} \label{proof:cor:thetapeff}
	Taking $p_n = p$ from Proposition \ref{prop:wqexp}, we have \begin{align}
		\tilde{\theta}_p = \theta_p + \dfrac{\tilde{Q}_n(\theta_p)-p}{Q'(\theta_p)} + o_p(n^{-1/2}). \label{eq:exp}
	\end{align} By Lemma 25.23 of \cite{van2000asymptotic}, it suffices to show that \begin{align*}
		\sqrt{n}(\tilde{\theta}_p - \theta_p) &= \dfrac{1}{\sqrt{n}} \sum_{i=1}^n \phi_P^*(Z_i) + o_p(1) \\
		&= \dfrac{1}{\sqrt{n}} \sum_{i=1}^n\dfrac{L_i(I(Y_i \le \theta_p)-p)}{\mathbb{E}_P[L_i]Q_Y'(\theta_p)}+ o_p(1).
	\end{align*} Rearranging (\ref{eq:exp}), we get \begin{align*}
		\sqrt{n}(\tilde{\theta}_p - \theta_p) &= \dfrac{\tilde{Q}_n(\theta_p)-p}{Q'(\theta_p)} + o_p(1) \\
		&=  \dfrac{1}{\sqrt{n}} \sum_{i=1}^n\dfrac{L_i(I(Y_i \le \theta_p)-p)}{(n^{-1}\sum_{j=1}^n L_j )Q_Y'(\theta_p)}+ o_p(1).
	\end{align*} Since we have \begin{align*}
		\dfrac{1}{\mathbb{E}_P[L]} - \dfrac{1}{n^{-1}\sum_{j=1}^n L_j} = o_p(n^{-1/2}),
	\end{align*} we obtain the desired equality.

	\subsection{Proof of Theorem \ref{thm:wqumau}} \label{proof:thm:wqumau}
	From Theorem 2 of \cite{choi1996asymptotically}, we know that $\left[\tilde{\theta}_p \pm \dfrac{z_{\alpha/2}}{\sqrt{n}}\hat{\sigma}_* \right]$ is an asymptotically uniformly most accurate unbiased confidence interval up to equivalence where $\hat{\sigma}_*$ is a consistent estimator of ${\sigma}_*$.
	In the proof of Theorem \ref{thm:wqCorrCovg}, we have shown that the quantiles $\hat{p}_1$ and $\hat{p}_2$ %we take to obtain the weighted quantiles confidence intervals $\widehat{C}_{n}^{\text{ wq}}(x_0) = [\mathsf{Q}(\tilde{Q}_n ; \  \tilde{p}_n(\tilde{\alpha}_{n,1})), \mathsf{Q}(\tilde{Q}_n ; \  \tilde{p}_n(\tilde{\alpha}_{n,2}))]$ 
	obey equation (\ref{eq:quantileCondition}).
	In particular, the quantiles are $O(n^{-1/2})$ away from $p$. Thus, we can apply Proposition \ref{prop:wqexp} to each quantile when $\alpha_1 = \alpha_2 = \alpha/2$ and obtain \begin{align*}
		\tilde{\theta}_{\hat{p}_1} &= \tilde{\theta}_p +  \dfrac{z_{\alpha/2}}{\sqrt{n}}{\sigma}_* + o_p(n^{-1/2}), \text{ and} \\
		\tilde{\theta}_{\hat{p}_2} &= \tilde{\theta}_p +  \dfrac{z_{1-\alpha/2}}{\sqrt{n}}{\sigma}_* + o_p(n^{-1/2}).
	\end{align*} 
	Hence, we can see that the confidence interval from the weighted quantile method is equivalent to $\left[\tilde{\theta}_p \pm \dfrac{z_{\alpha/2}}{\sqrt{n}}\hat{\sigma}_* \right]$ and is asymptotically uniformly most accurate unbiased.

	{\tiny }\section{Additional numerical studies}
In this section, we provide additional details on the simulation studies and the deferred results for the simulation studies conducted in different settings. 
\subsection{Regression functions} \label{a.simul.fun}
\begin{enumerate}
	\item Step 
	\begin{align*}
	f(x) = 0.2 + 0.6 I_{(1/3, 2/3)}(x).
	\end{align*}
	\item 
	\begin{align*}
	f(x) = &(0.32 + 0.6x + 0.3\exp(-100(x-0.3)^2)) I_{(0, 0.8]}(x) + \\ &(-0.28 + 0.6x + 0.3\exp({-100(x-1.3)^2})) I_{(0.8, 1]}(x).
	\end{align*}
	\item Spikes
	\begin{align*}
	f(x) = &\exp({-500(x-0.23)^2}) + 2\exp({-2000(x-0.33)^2}) + \\ &4\exp({-8000(x-0.47)^2} )+ 3\exp({-16000(x-0.69)^2}) + \\ &\exp({-32000(x-0.83)^2}).
	\end{align*}
	\item Bumps
	\begin{align*}
	f(x) = \sum_{j=1}^{11} \dfrac{h_j}{1+ |(x-t_j)/w_j|^4}
	\end{align*}
	for 
	\begin{align*}
	t &= (0.1, 0.13, 0.15, 0.23, 0.25, 0.4, 0.44, 0.65, 0.76, 0.78, 0.81), \\
	w &= (0.005,0.005, 0.006,0.01,0.01,0.03,0.01,0.01,0.005,0.008,0.005), \\
	h &= (4, 5, 3,4,5,4.2,2.1,4.3,3.1,5.1,4.2).
	\end{align*}
	\item Parabolas
	\begin{align*} f(x) = &0.8-30r(x,0.1)+60r(x,0.2)-30r(x,0.3)+500r(x,0.35)-\\&1000r(x,0.37)+ 
	1000r(x,0.41)-500r(x,0.43)+7.5r(x,0.5)-\\&15r(x,0.7)+7.5r(x,0.9),
	\end{align*}
	where $r(x,c) = (x-c)^2I_{(c, 1]}(x)$.
	\item Angles
	\begin{align*} f(x) = &(2x+0.5) I_{(0, 0.15]}(x) + (-12(x-0.15)+0.8) I_{(0.15, 2]}(x) + \\
	&0.2 I_{(0.2, 0.5]}(x) + (6(x-0.5)+0.2) I_{(0.5, 0.6]}(x) + \\
	&(-10(x-0.6)+0.8) I_{(0.6, 0.65]}(x) + \\
	&(-5(x-0.65)+0.3) I_{(0.65, 0.85]}(x) + \\
	&(2(x-0.85)+0.2)I_{(0.85, 1]}(x).
	\end{align*}
\end{enumerate}

Figure \ref{fig:targets} shows $\theta_{1/2}$ as a function of $x_0$ for a triangular kernel for which $h = 0.04$. 
\begin{figure}
	\vspace{-1.1\baselineskip}
	\begin{subfigure}{.5\textwidth}
		\centering
		\includegraphics[width=\linewidth]{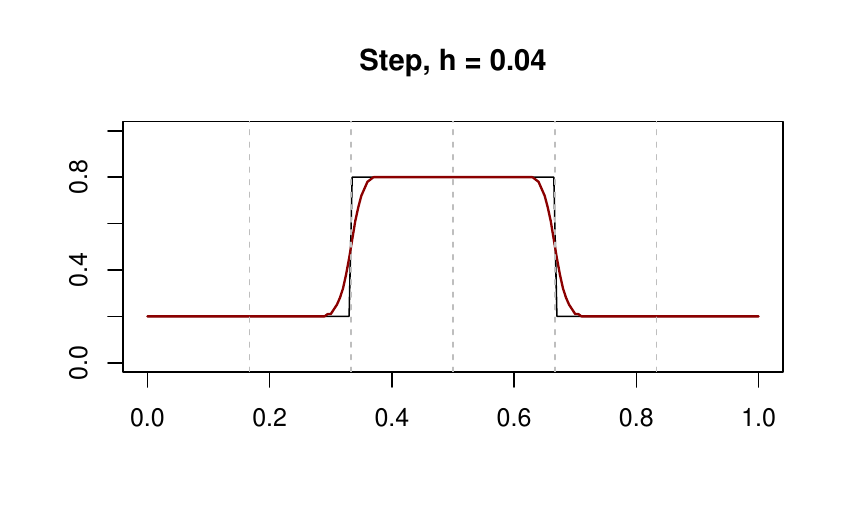}
		\vspace{-1.07\baselineskip}
		%\caption{1a}
		%\label{fig:sfig1}
	\end{subfigure}%
	\begin{subfigure}{.5\textwidth}
		\centering
		\includegraphics[width=\linewidth]{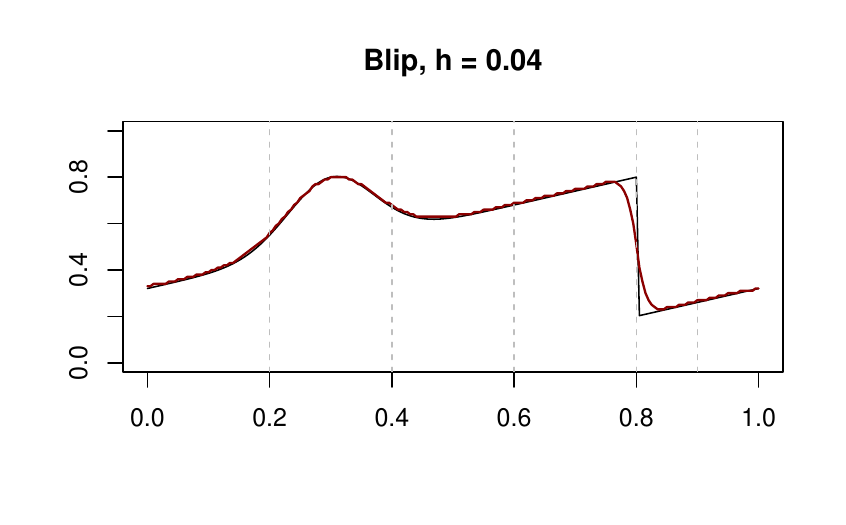}
		\vspace{-1.07\baselineskip}
		%\caption{1b}
		%\label{fig:sfig2}
	\end{subfigure}
	\vspace{-1.07\baselineskip}
	\begin{subfigure}{.5\textwidth}
		\centering
		\includegraphics[width=\linewidth]{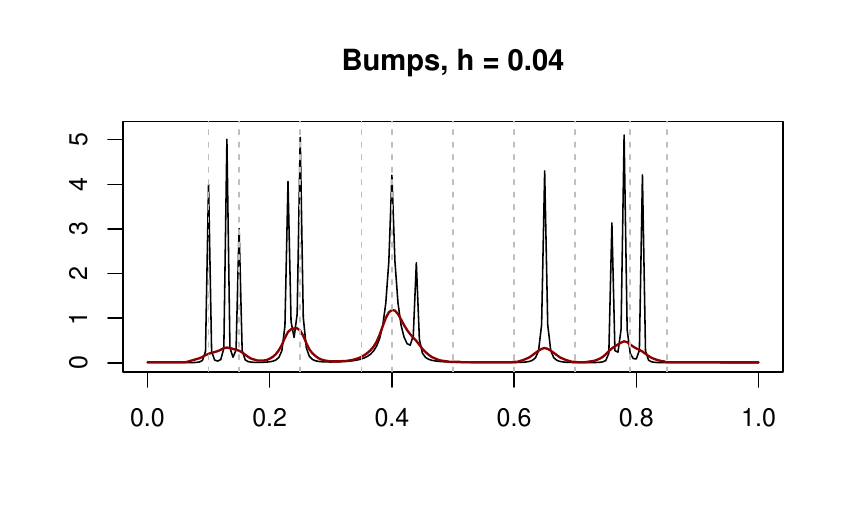}
		%\caption{1a}
		%\label{fig:sfig1}
	\end{subfigure}%
	\begin{subfigure}{.5\textwidth}
		\centering
		\includegraphics[width=\linewidth]{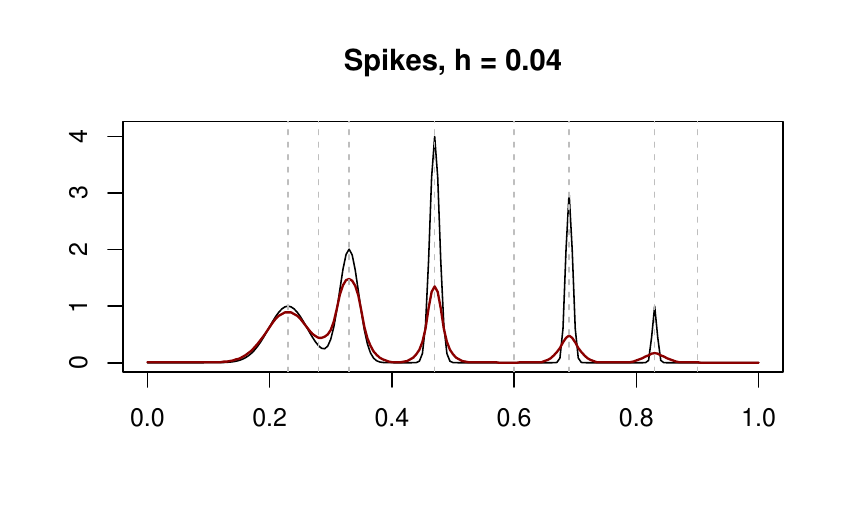}
		%\caption{1b}
		%\label{fig:sfig2}
	\end{subfigure}
	\vspace{-1.07\baselineskip}
	\begin{subfigure}{.5\textwidth}
		\centering
		\includegraphics[width=\linewidth]{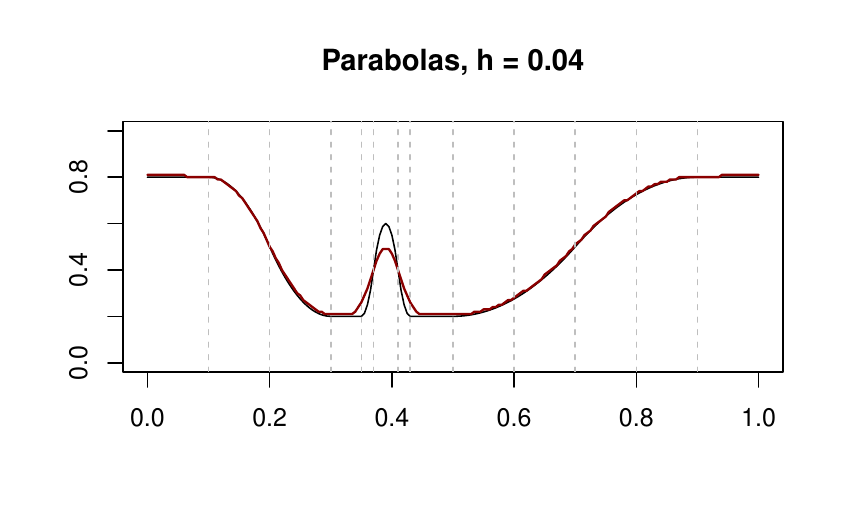}
		%\caption{1a}
		%\label{fig:sfig1}
	\end{subfigure}%
	\begin{subfigure}{.5\textwidth}
		\centering
		\includegraphics[width=\linewidth]{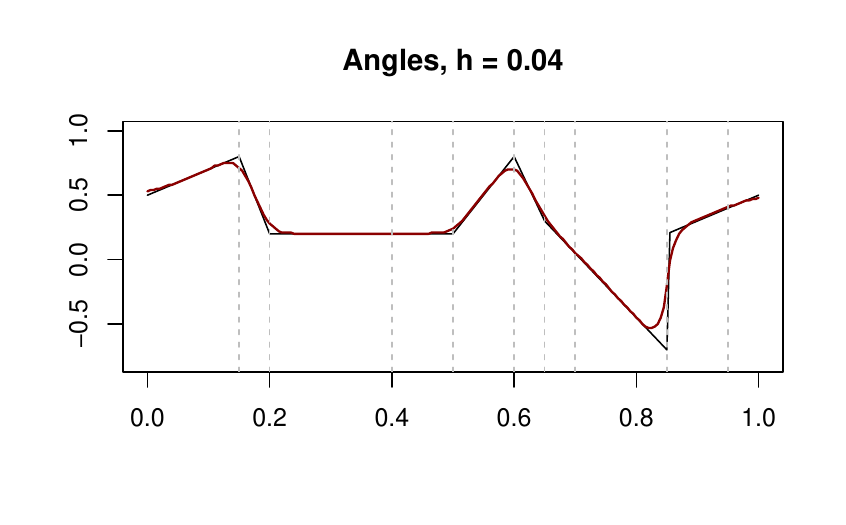}
		%\caption{1b}
		%\label{fig:sfig2}
	\end{subfigure}
	\caption{Regression functions (black) and the object of inference $\theta_{1/2}$ (dark red) when $h = 0.04$ in setting 1.}
	\label{fig:targets}
\end{figure}

\subsection{Additional results from Setting 1} \label{app:set1results}
\begin{figure}[H]
	\centering
	\includegraphics[width=\textwidth]{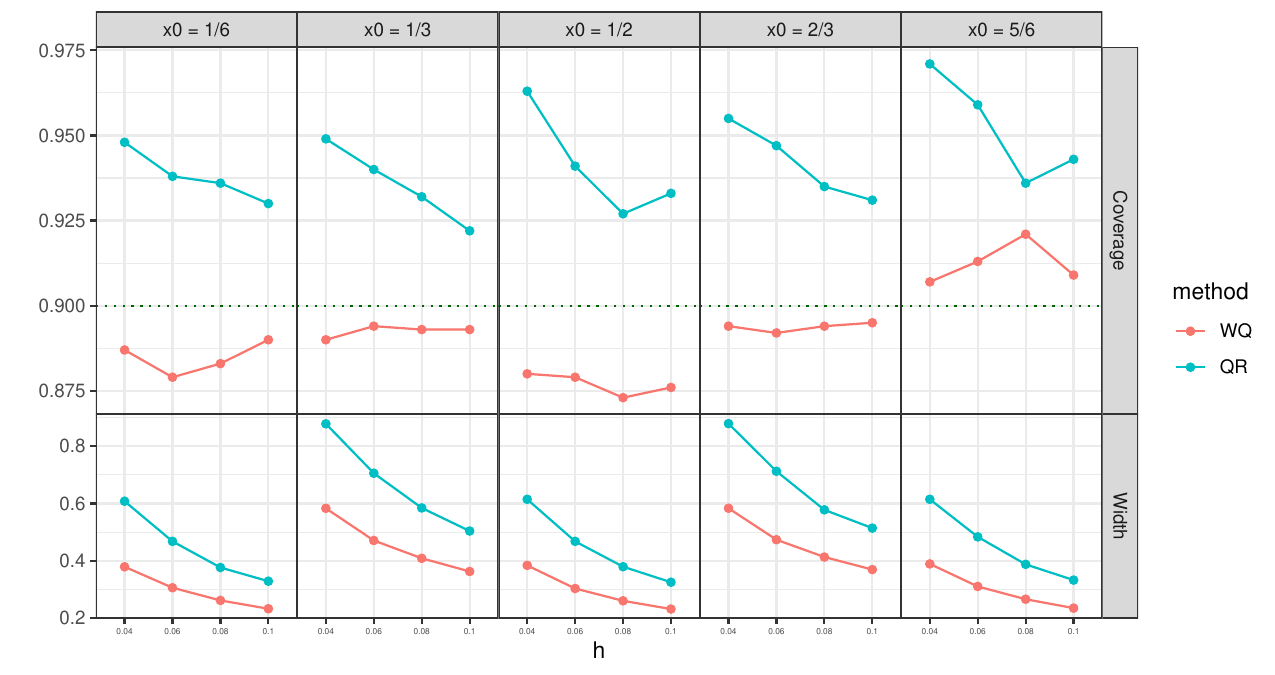}
	\caption{Coverage and width for the Step signal, setting 1.}
	\label{fig:step-s1-q0.5-tri}
\end{figure}

\begin{figure}[H]
	\centering
	\includegraphics[width=\textwidth]{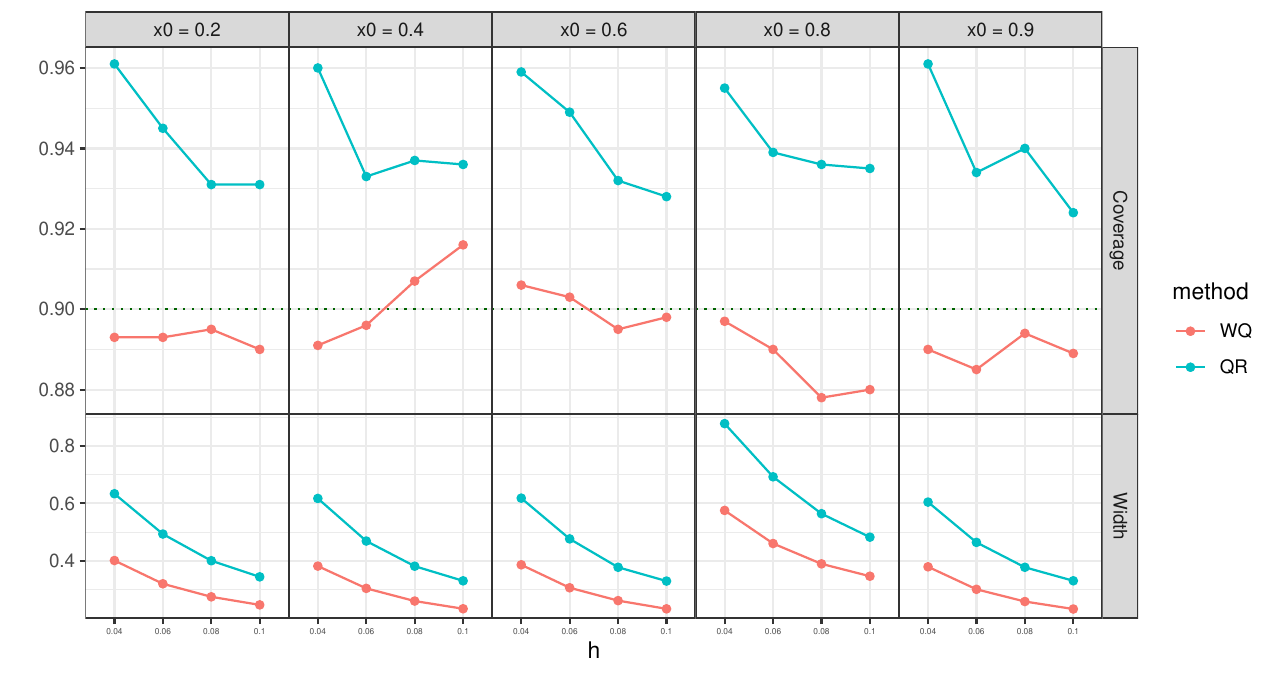}
	\caption{Coverage and width for the Blip signal, setting 1.}
	\label{fig:blip-s1-q0.5-tri}
\end{figure}

\begin{figure}[H]
	\centering
	\includegraphics[width=\textwidth]{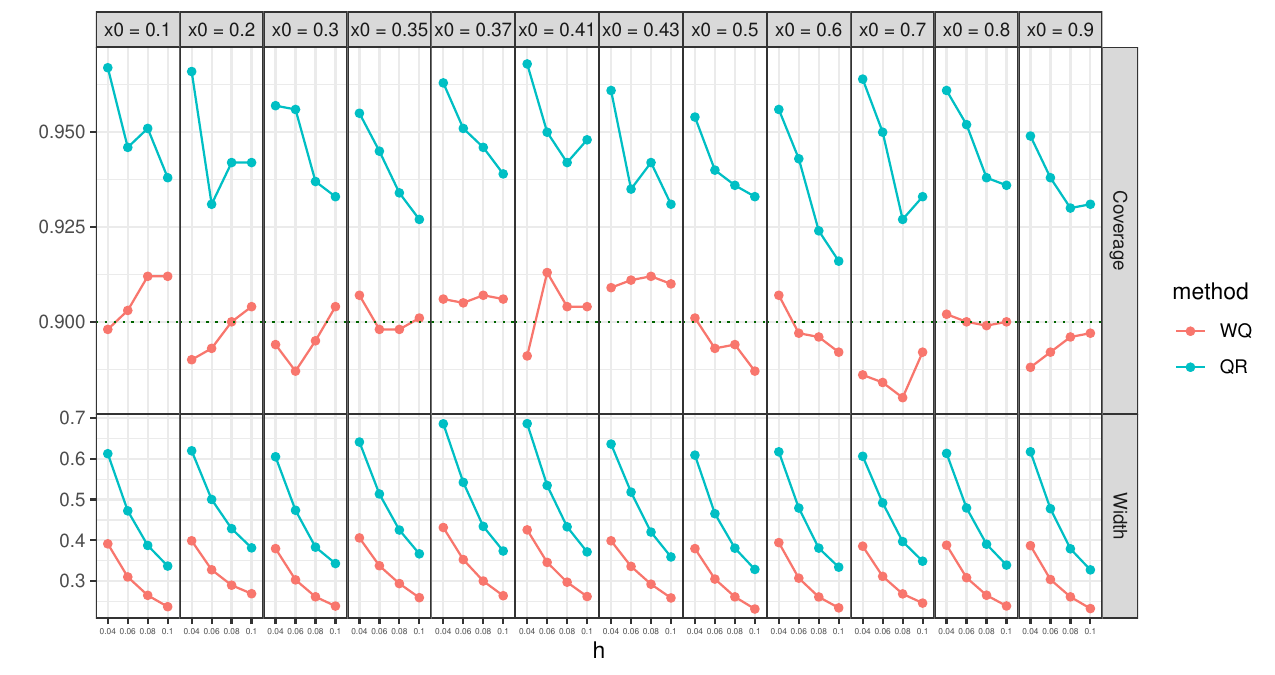}
	\caption{Coverage and width for the Parabolas signal, setting 1.}
	\label{fig:parabolas-s1-q0.5-tri}
\end{figure}

\begin{figure}[H]
	\centering
	\includegraphics[width=\textwidth]{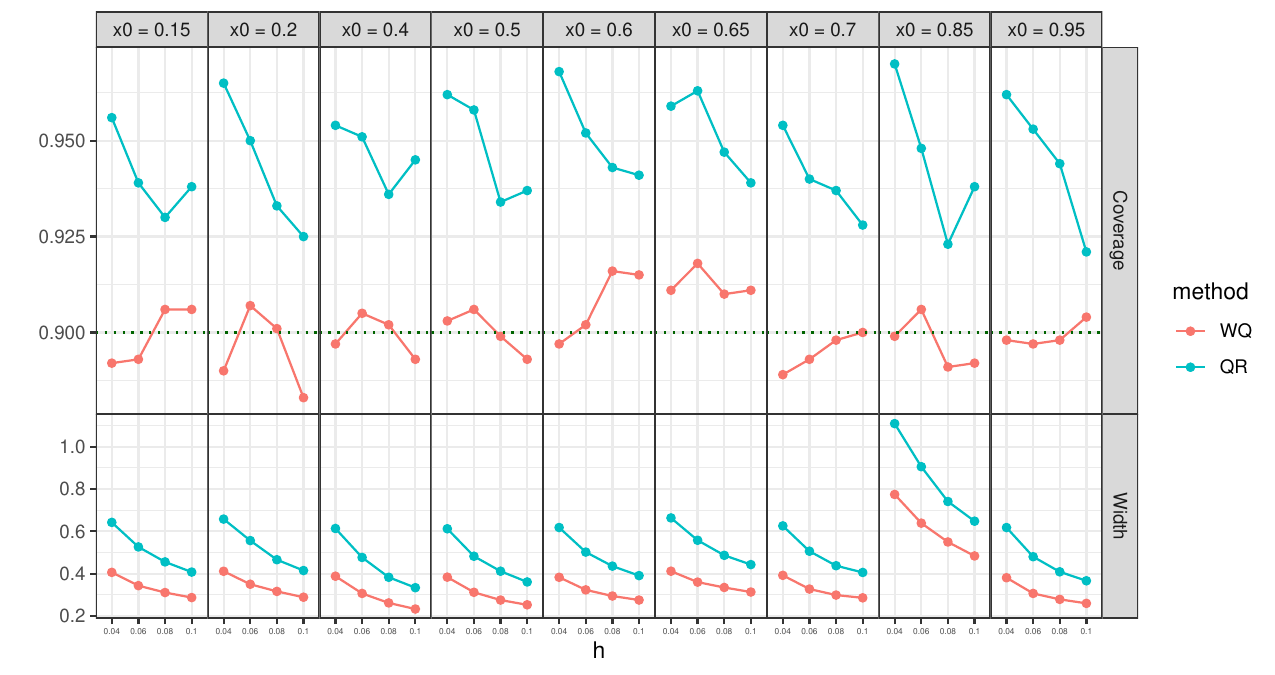}
	\caption{Coverage and width for the Angles signal, setting 1.}
	\label{fig:angles-s1-q0.5-tri}
\end{figure}

\subsection{Different quantiles}
\label{subsec:diff quantiles}
In this section, we show the empirical width and coverage using the proposed method for $\theta_p$ when $p = 0.2$ and $p = 0.7$ with the triangular kernel.
\subsubsection{$p = 0.2$}
\begin{figure}[H]
    \centering
    \includegraphics[width=\textwidth]{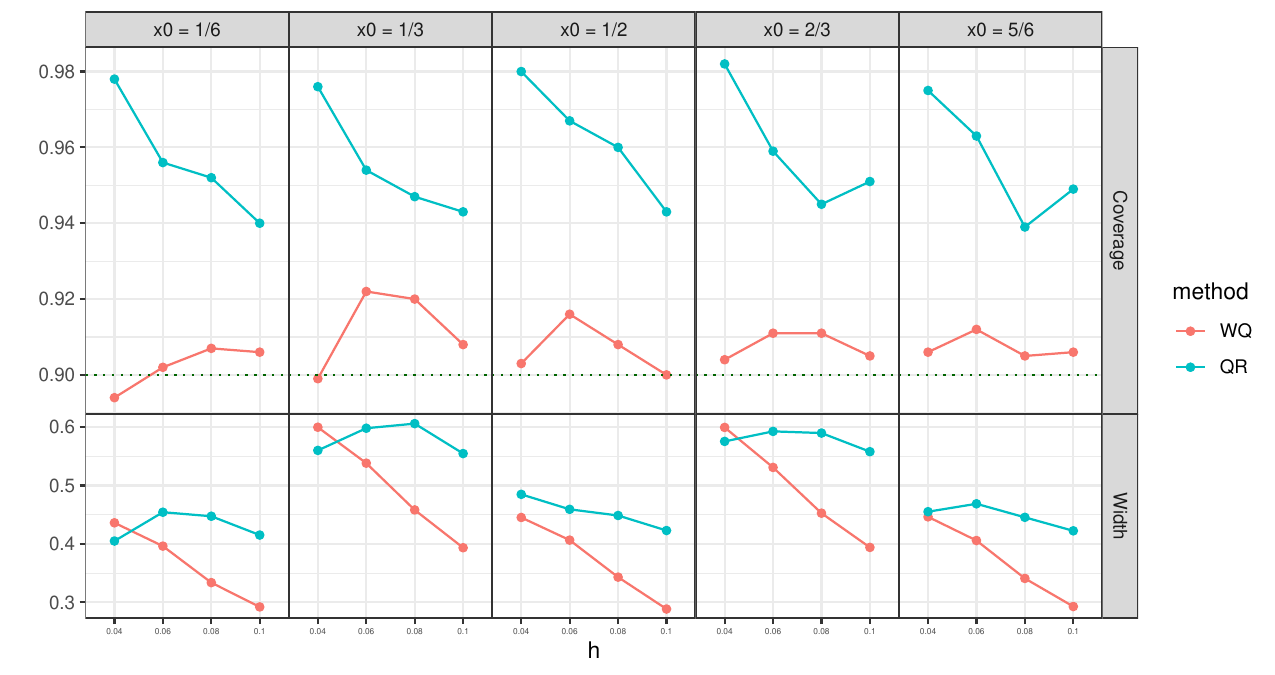}
    \caption{Coverage and width of $\theta_{0.2}$ for the Step signal, setting 1.}
    \label{fig:step-s1-q0.2-tri}
\end{figure}

\begin{figure}[H]
    \centering
    \includegraphics[width=\textwidth]{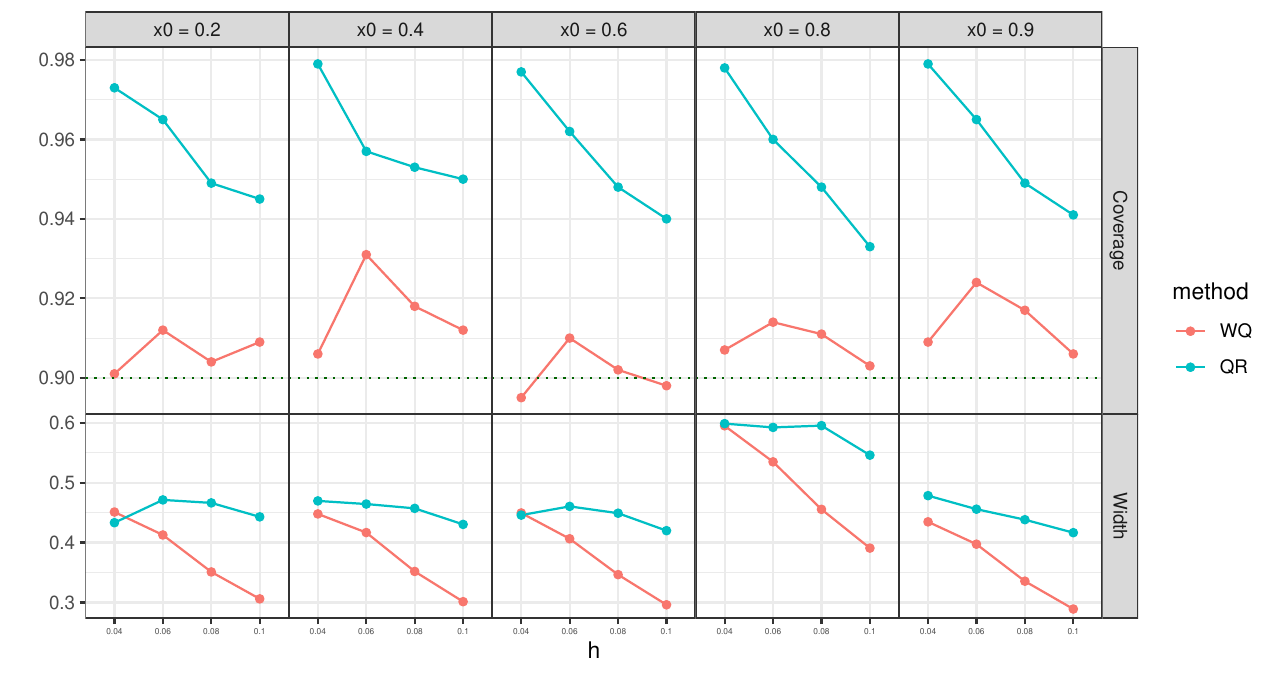}
    \caption{Coverage and width of $\theta_{0.2}$ for the Blip signal, setting 1.}
    \label{fig:blip-s1-q0.2-tri}
\end{figure}

\begin{figure}[H]
    \centering
    \includegraphics[width=\textwidth]{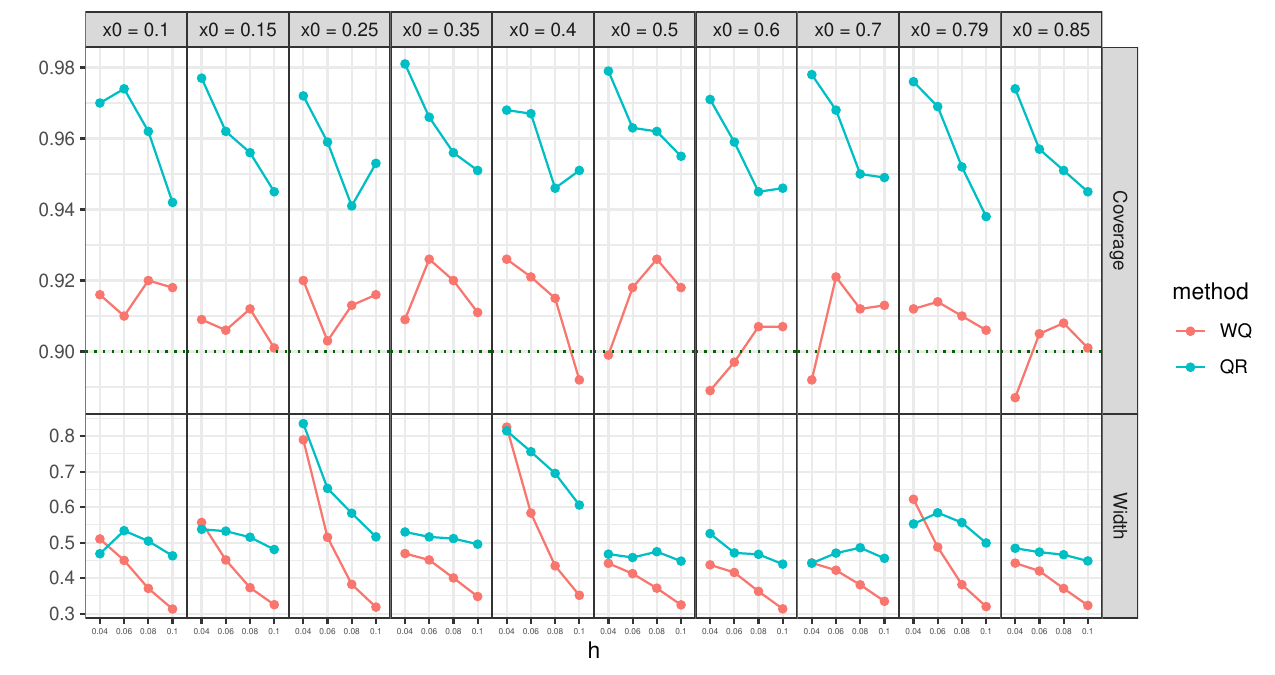}
    \caption{Coverage and width of $\theta_{0.2}$ for the Bump signal, setting 1.}
    \label{fig:bump-s1-q0.2-tri}
\end{figure}

\begin{figure}[H]
    \centering
    \includegraphics[width=\textwidth]{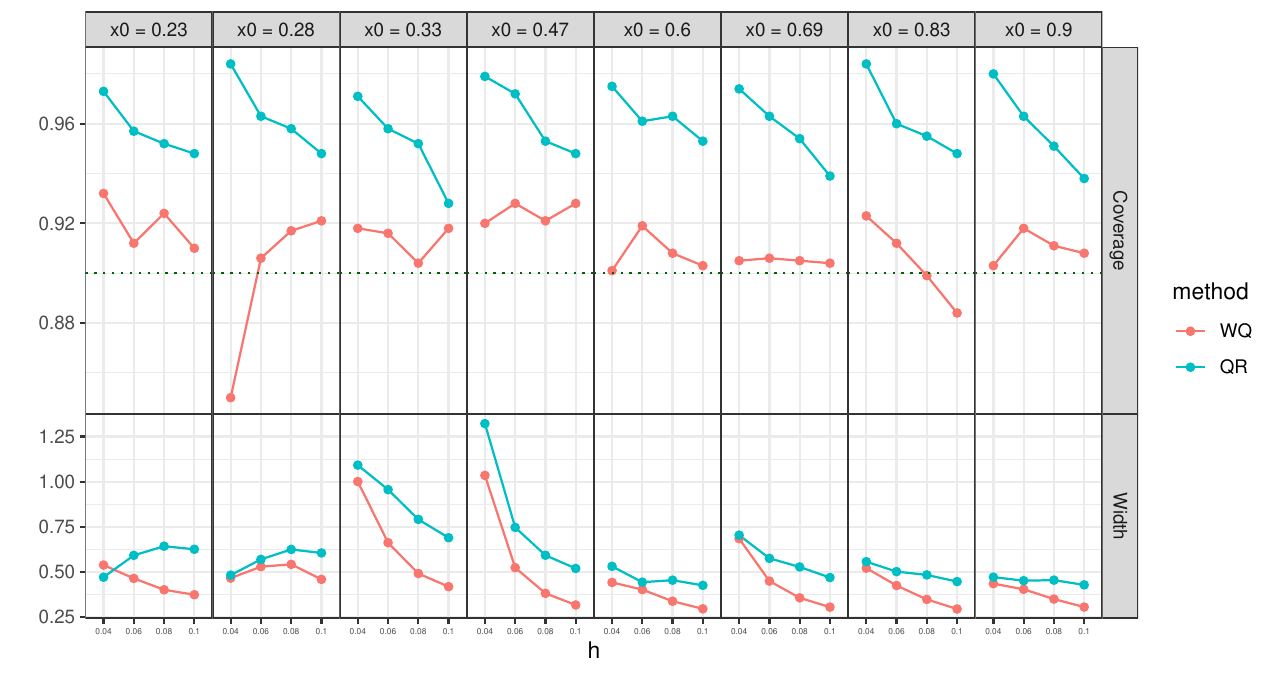}
    \caption{Coverage and width of $\theta_{0.2}$ for the Spikes signal, setting 1.}
    \label{fig:spikes-s1-q0.2-tri}
\end{figure}

\begin{figure}[H]
    \centering
    \includegraphics[width=\textwidth]{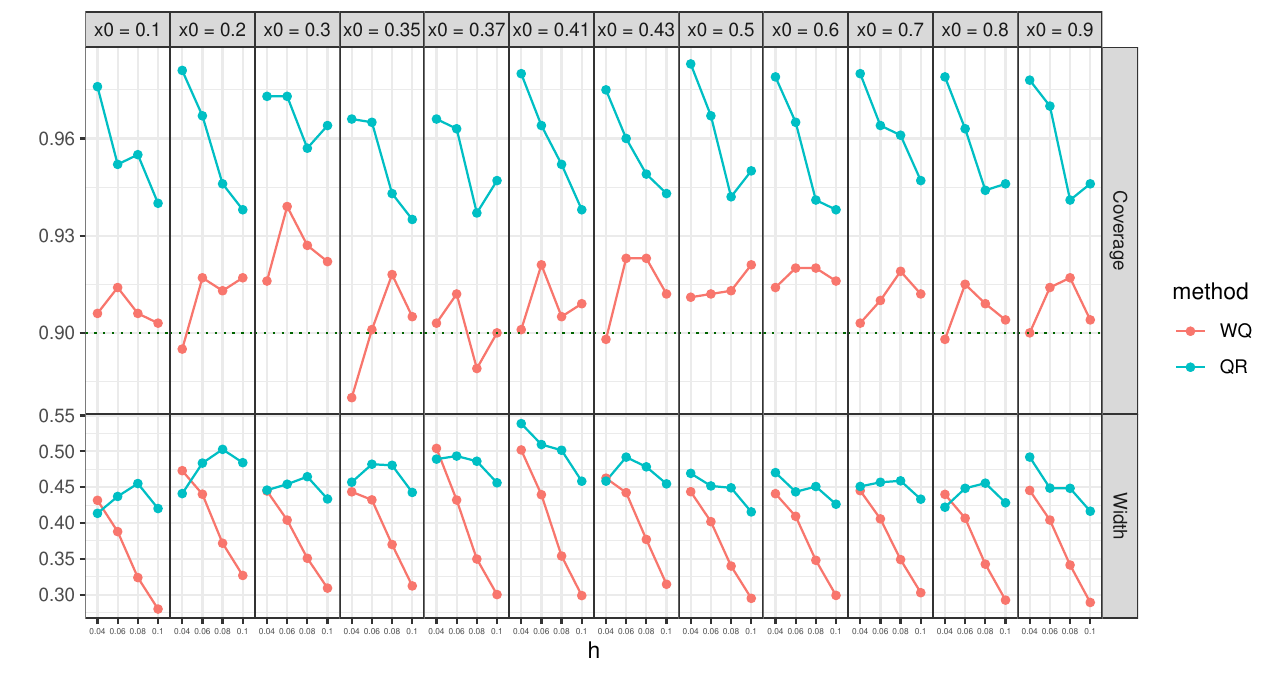}
    \caption{Coverage and width of $\theta_{0.2}$ for the Parabolas signal, setting 1.}
    \label{fig:parabolas-s1-q0.2-tri}
\end{figure}

\begin{figure}[H]
    \centering
    \includegraphics[width=\textwidth]{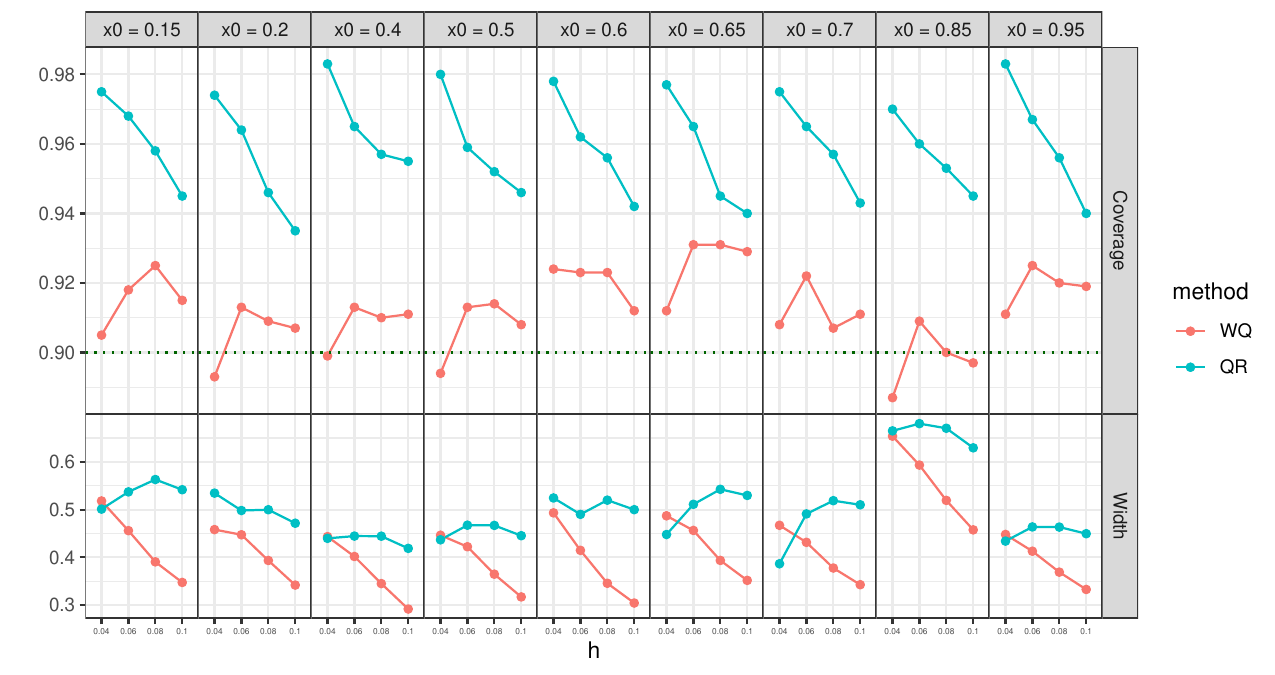}
    \caption{Coverage and width of $\theta_{0.2}$ for the Angles signal, setting 1.}
    \label{fig:angles-s1-q0.2-tri}
\end{figure}

\subsubsection{$p = 0.7$}

\begin{figure}[H]
    \centering
    \includegraphics[width=\textwidth]{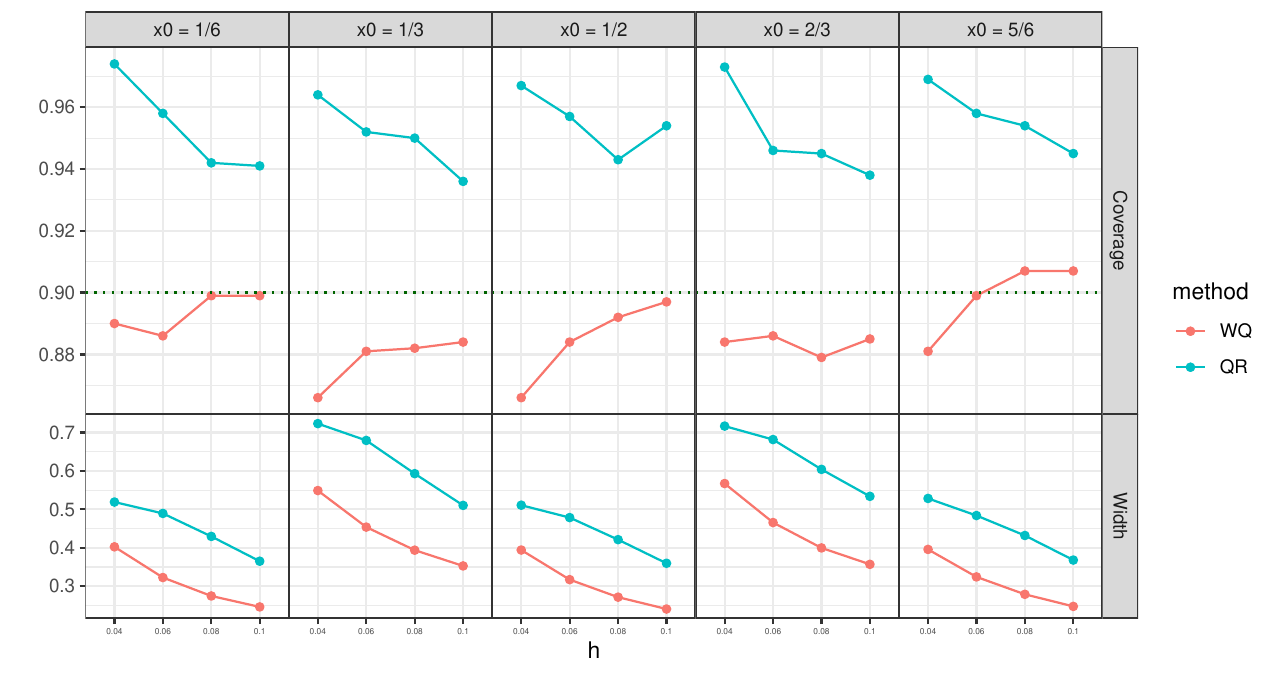}
    \caption{Coverage and width of $\theta_{0.7}$ for the Step signal, setting 1.}
    \label{fig:step-s1-q0.7-tri}
\end{figure}

\begin{figure}[H]
    \centering
    \includegraphics[width=\textwidth]{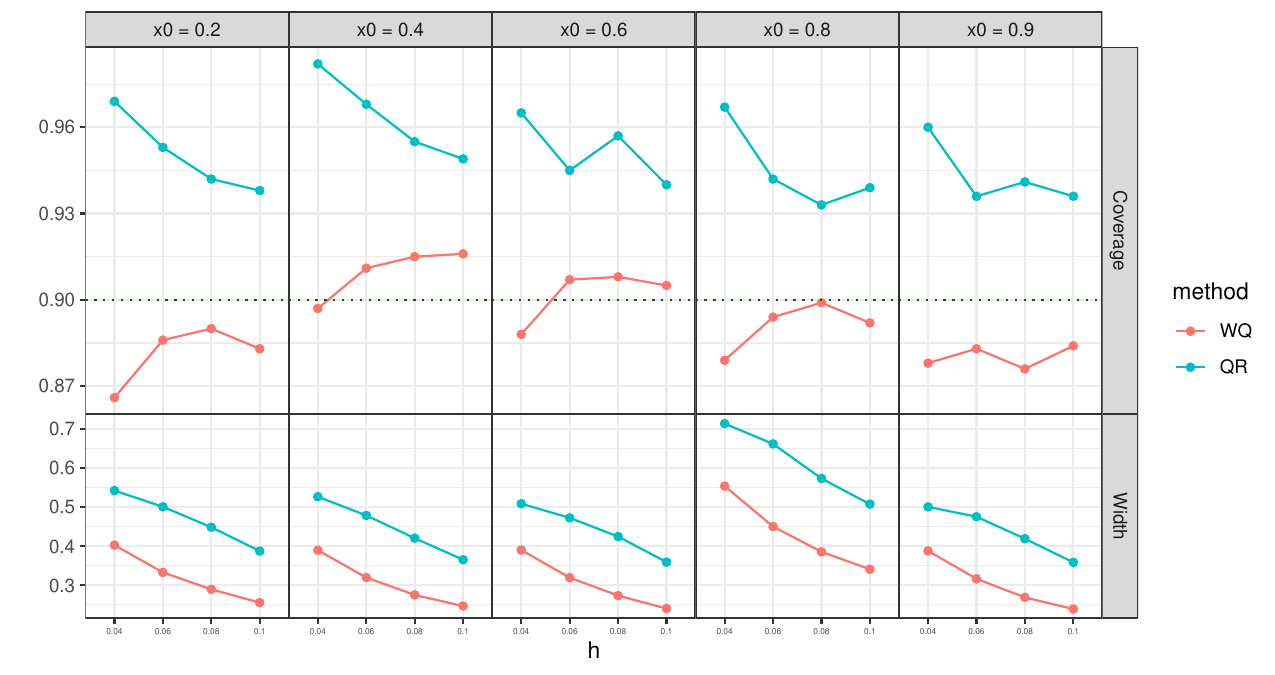}
    \caption{Coverage and width of $\theta_{0.7}$ for the Blip signal, setting 1.}
    \label{fig:blip-s1-q0.7-tri}
\end{figure}

\begin{figure}[H]
    \centering
    \includegraphics[width=\textwidth]{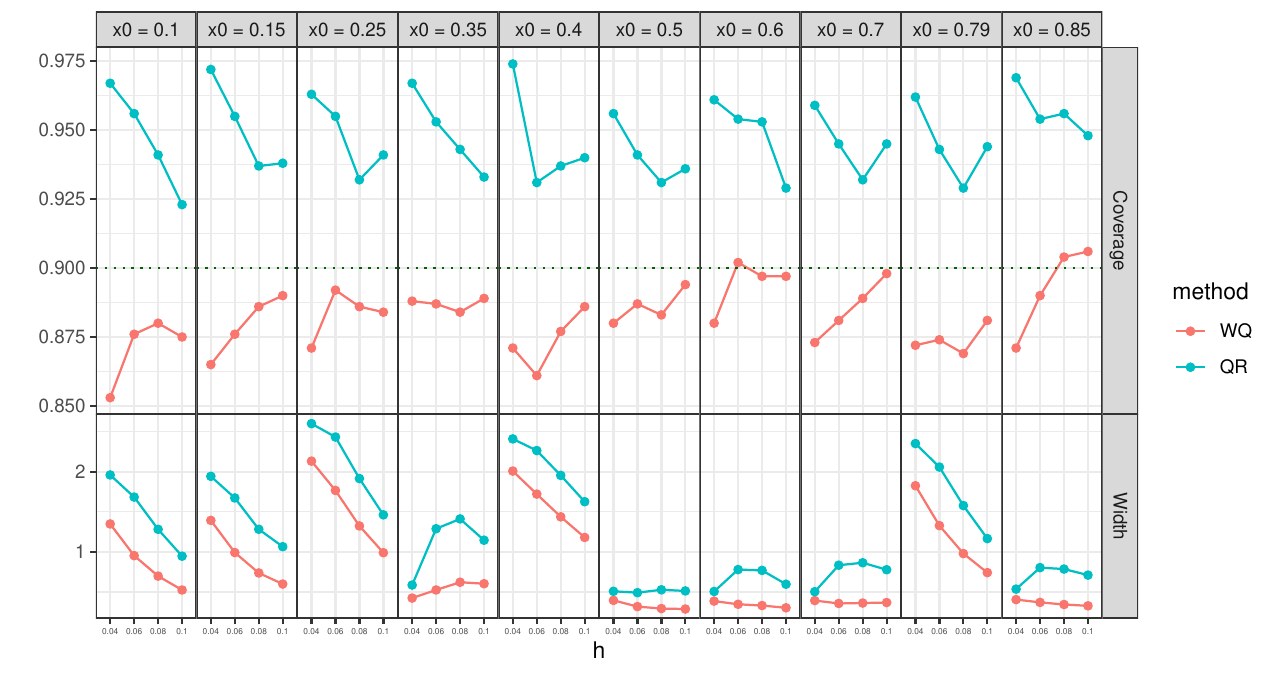}
    \caption{Coverage and width of $\theta_{0.7}$ for the Bump signal, setting 1.}
    \label{fig:bump-s1-q0.7-tri}
\end{figure}

\begin{figure}[H]
    \centering
    \includegraphics[width=\textwidth]{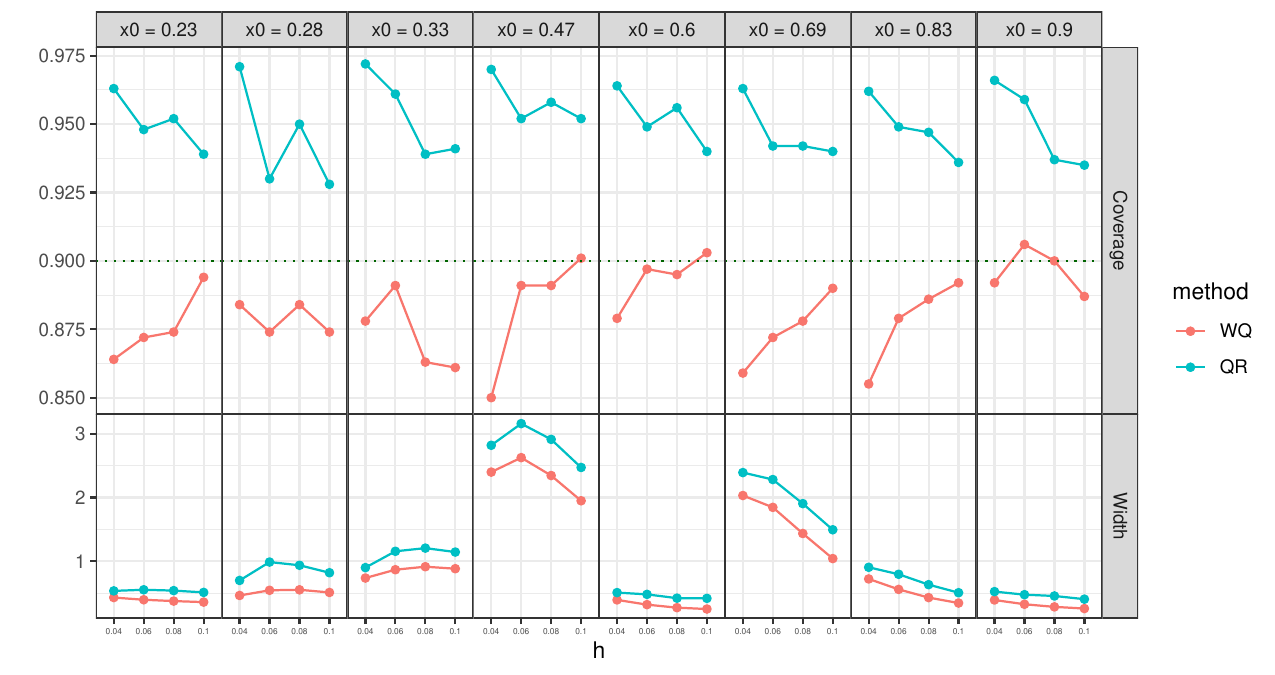}
    \caption{Coverage and width of $\theta_{0.7}$ for the Spikes signal, setting 1.}
    \label{fig:spikes-s1-q0.7-tri}
\end{figure}

\begin{figure}[H]
    \centering
    \includegraphics[width=\textwidth]{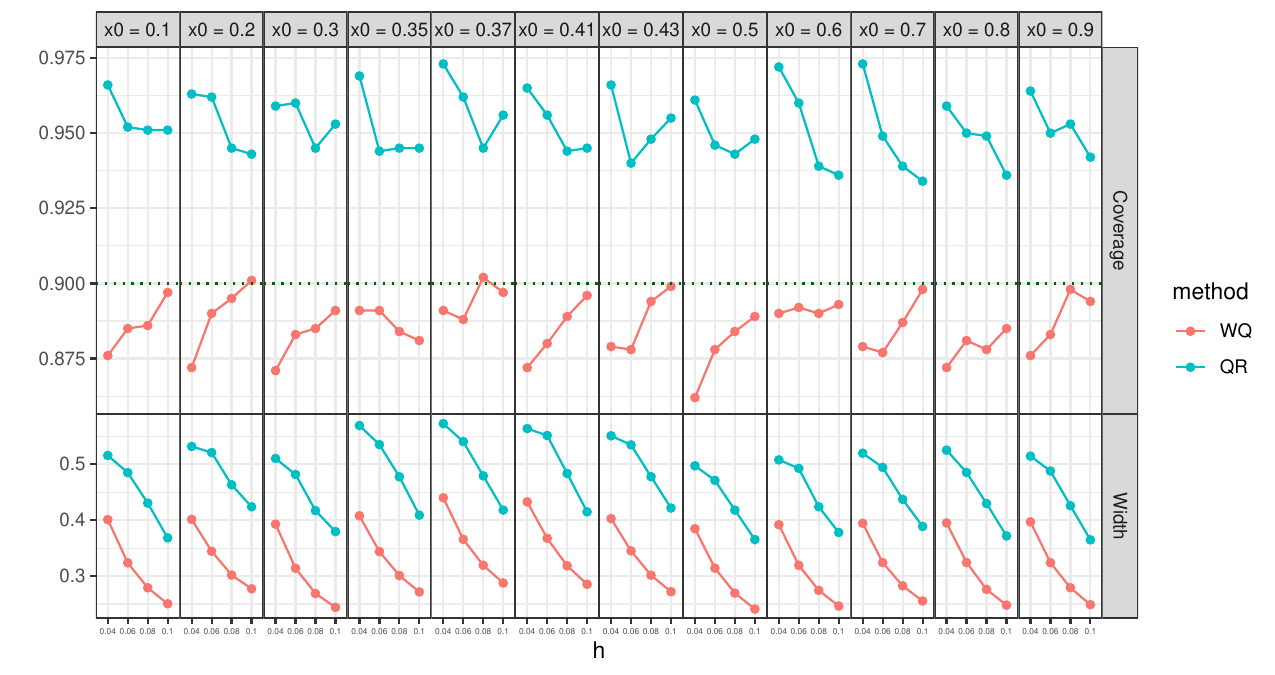}
    \caption{Coverage and width of $\theta_{0.7}$ for the Parabolas signal, setting 1.}
    \label{fig:parabolas-s1-q0.7-tri}
\end{figure}

\begin{figure}[H]
    \centering
    \includegraphics[width=\textwidth]{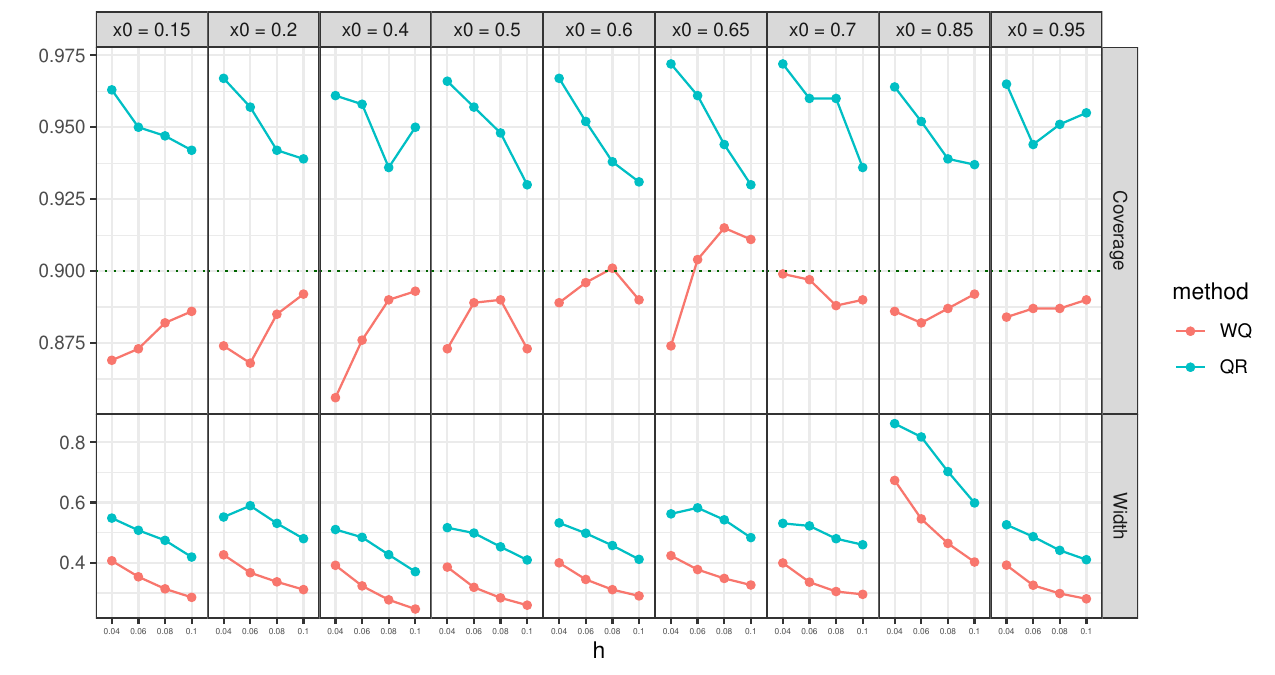}
    \caption{Coverage and width of $\theta_{0.7}$ for the Angles signal, setting 1.}
    \label{fig:angles-s1-q0.7-tri}
\end{figure}

\subsection{Heteroscedastic case}
\label{subsec:hetero}
In this section, we show the empirical width and coverage using the proposed method for $\theta_{0.5}$ when the noise distribution is heteroscedastic with the triangular kernel.
\subsubsection{Setting 2}
We consider the setting where the noise distribution is $\epsilon \mid X \sim  \mathcal{N}(0, (0.3(X^2+1))^2)$.
\begin{figure}[H]
    \centering
    \includegraphics[width=\textwidth]{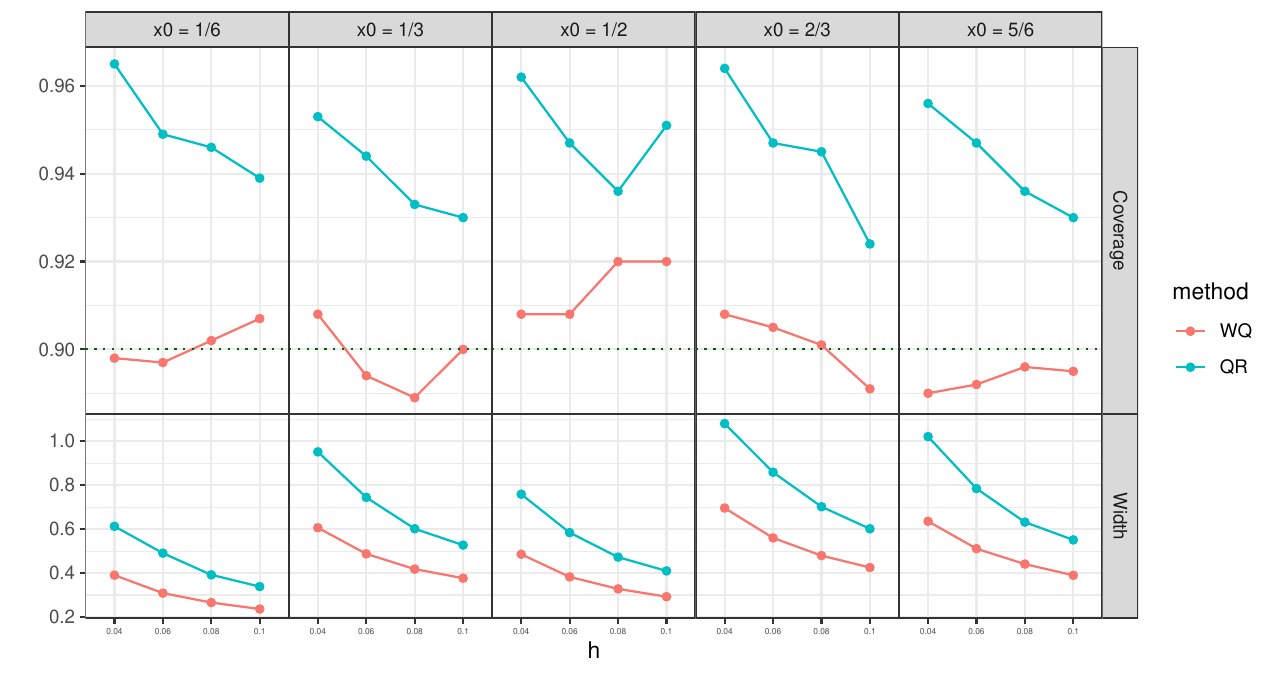}
    \caption{Coverage and width for the Step signal, setting 2.}
    \label{fig:step-s2-q0.5-tri}
\end{figure}

\begin{figure}[H]
    \centering
    \includegraphics[width=\textwidth]{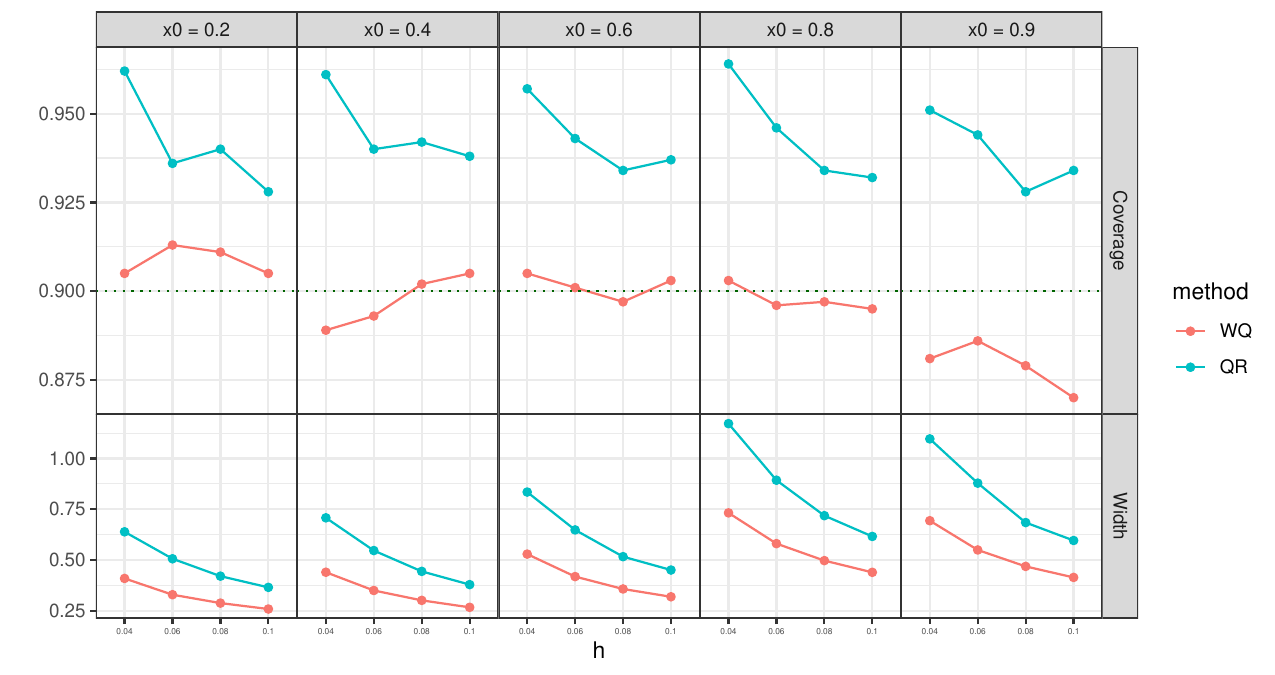}
    \caption{Coverage and width for the Blip signal, setting 2.}
    \label{fig:blip-s2-q0.5-tri}
\end{figure}

\begin{figure}[H]
    \centering
    \includegraphics[width=\textwidth]{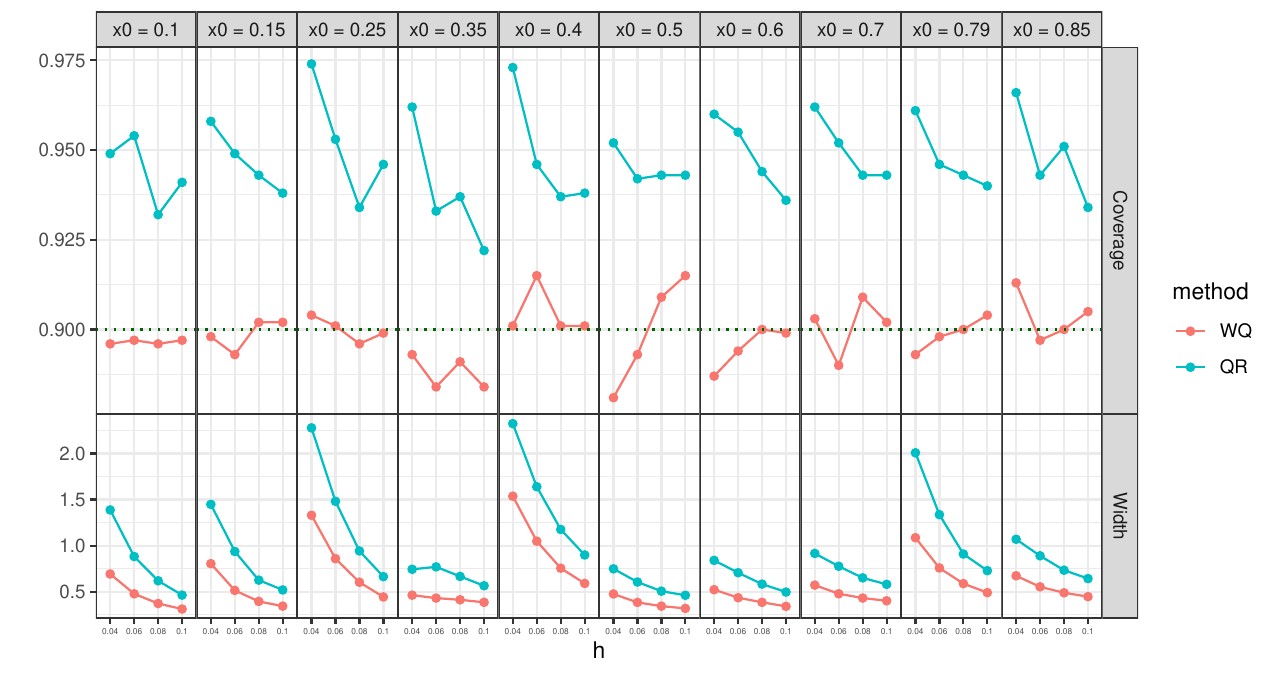}
    \caption{Coverage and width for the Bump signal, setting 2.}
    \label{fig:bump-s2-q0.5-tri}
\end{figure}

\begin{figure}[H]
    \centering
    \includegraphics[width=\textwidth]{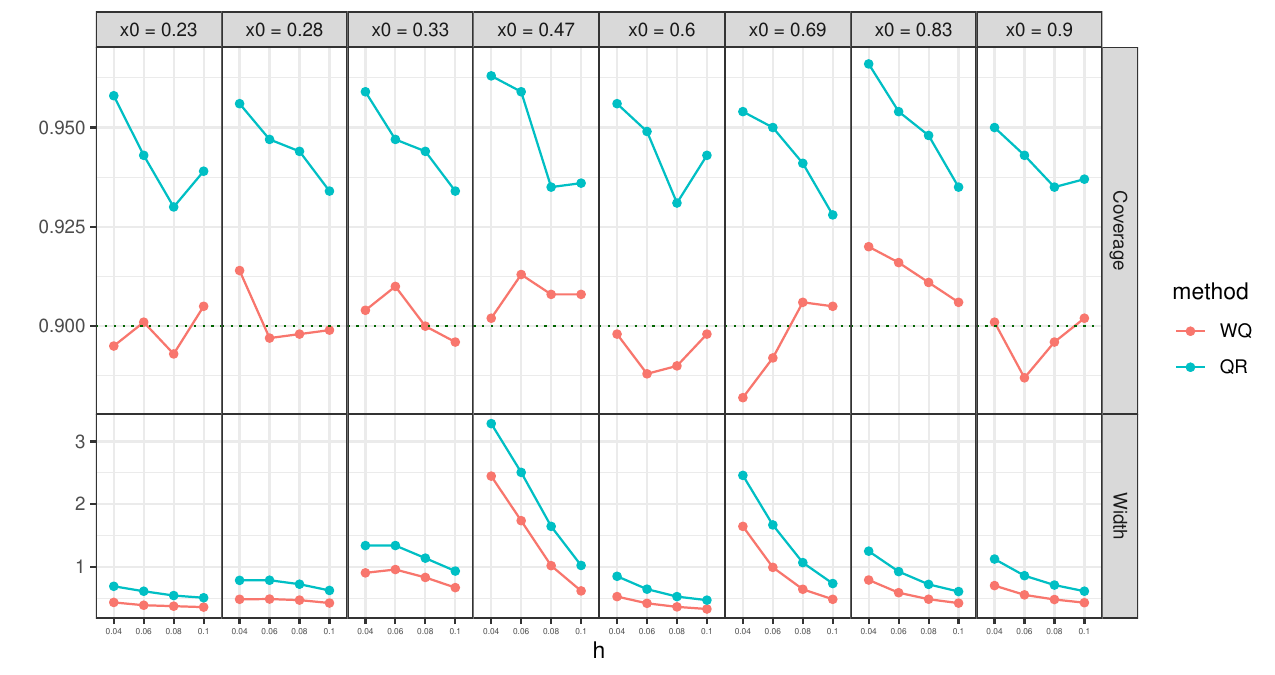}
    \caption{Coverage and width for the Spikes signal, setting 2.}
    \label{fig:spikes-s2-q0.5-tri}
\end{figure}

\begin{figure}[H]
    \centering
    \includegraphics[width=\textwidth]{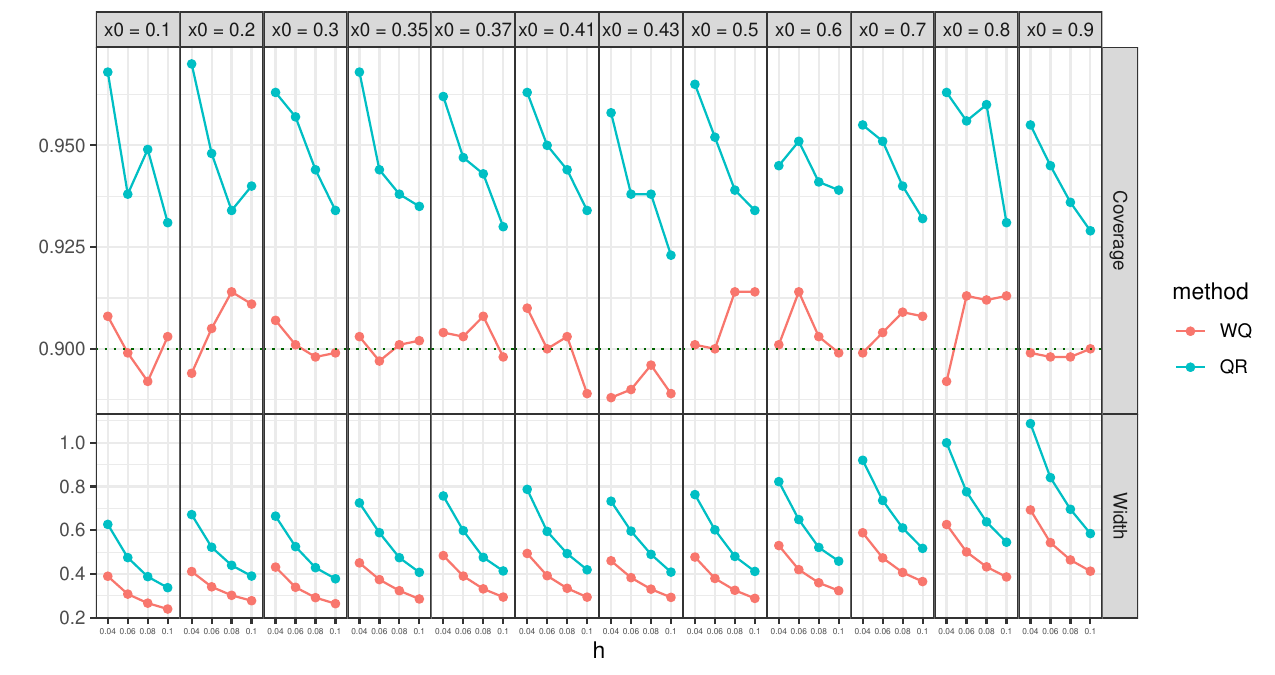}
    \caption{Coverage and width for the Parabolas signal, setting 2.}
    \label{fig:parabolas-s2-q0.5-tri}
\end{figure}

\begin{figure}[H]
    \centering
    \includegraphics[width=\textwidth]{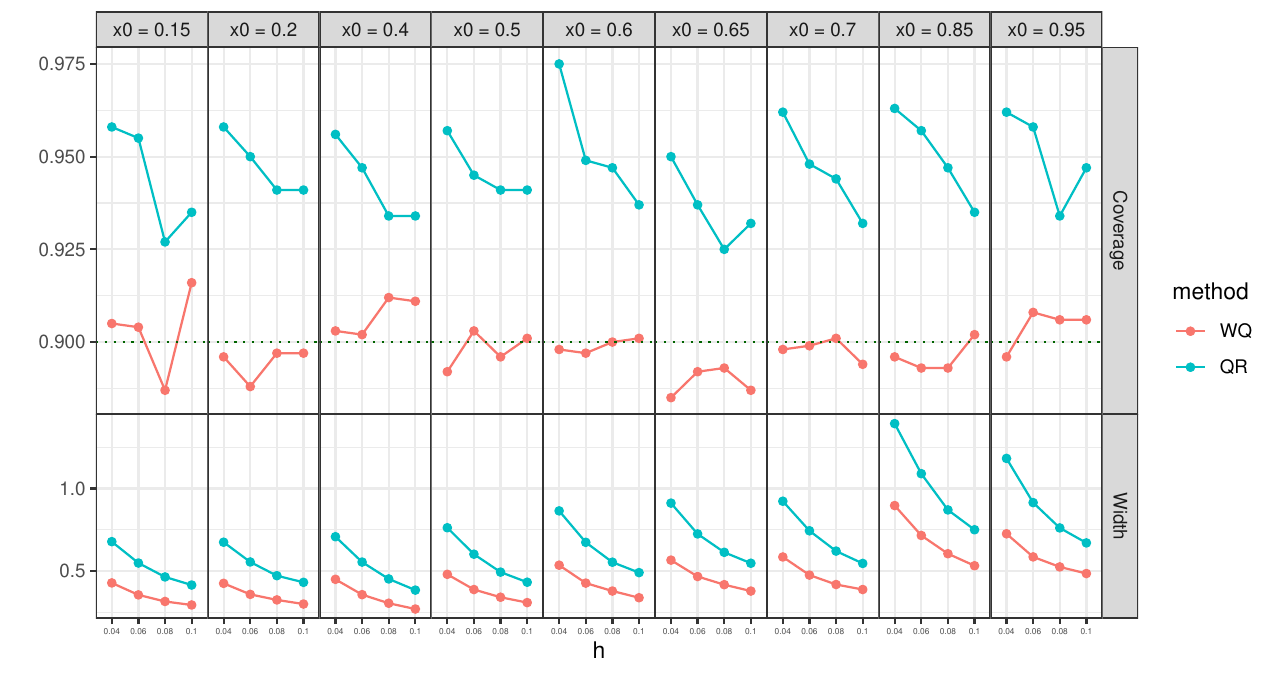}
    \caption{Coverage and width for the Angles signal, setting 2.}
    \label{fig:angles-s2-q0.5-tri}
\end{figure}

\subsubsection{Setting 3}
We consider the setting where the noise distribution is $\epsilon \mid X \sim  \mathcal{N}(0, (0.3(X^2-X+5/4))^2)$.
\begin{figure}[H]
    \centering
    \includegraphics[width=\textwidth]{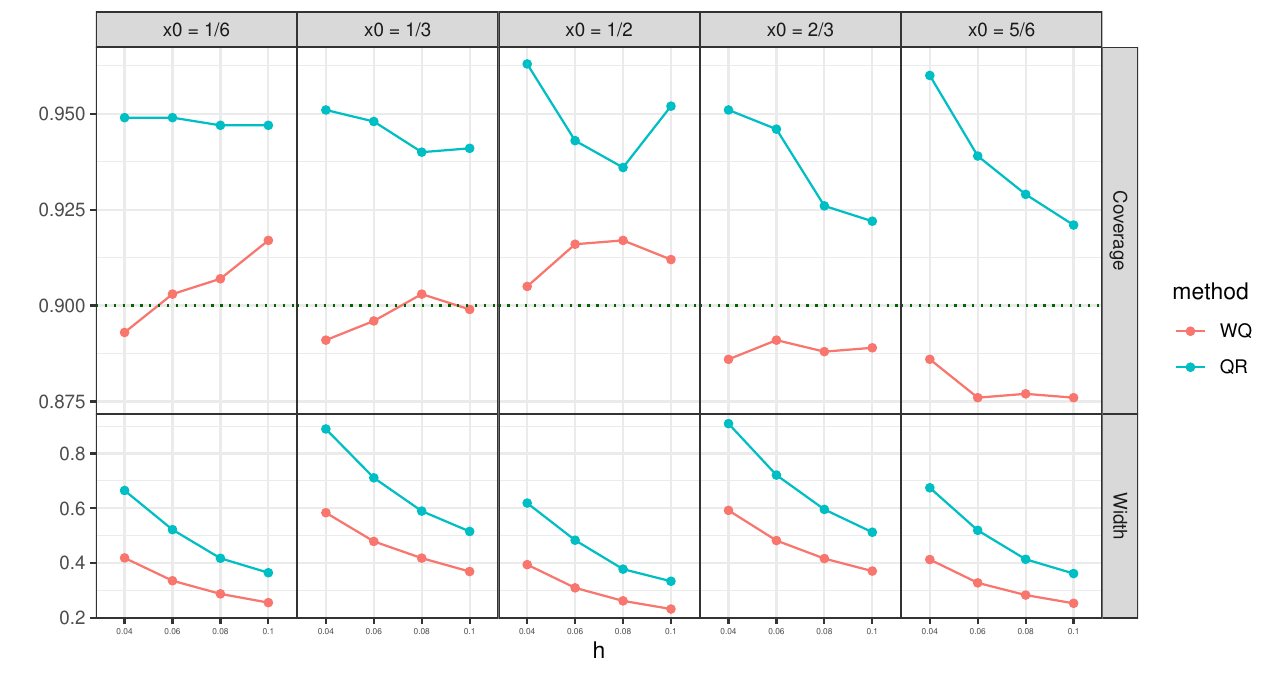}
    \caption{Coverage and width for the Step signal, setting 3.}
    \label{fig:step-s3-q0.5-tri}
\end{figure}

\begin{figure}[H]
    \centering
    \includegraphics[width=\textwidth]{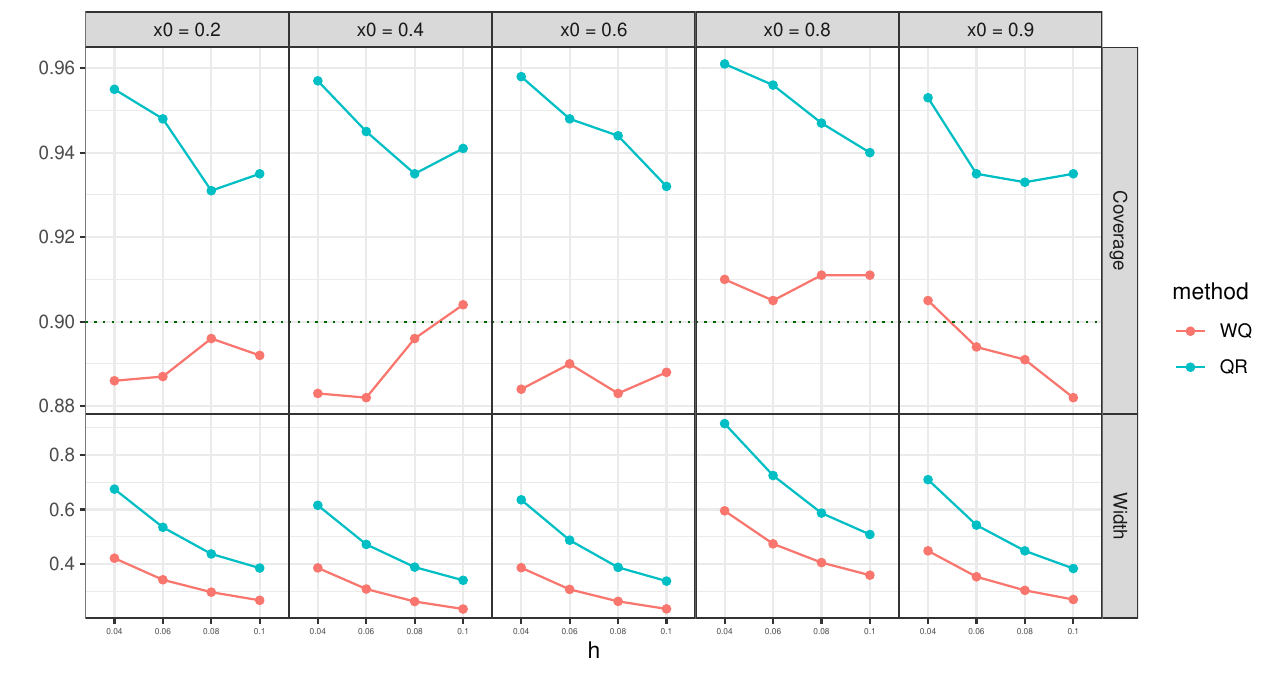}
    \caption{Coverage and width for the Blip signal, setting 3.}
    \label{fig:blip-s3-q0.5-tri}
\end{figure}

\begin{figure}[H]
    \centering
    \includegraphics[width=\textwidth]{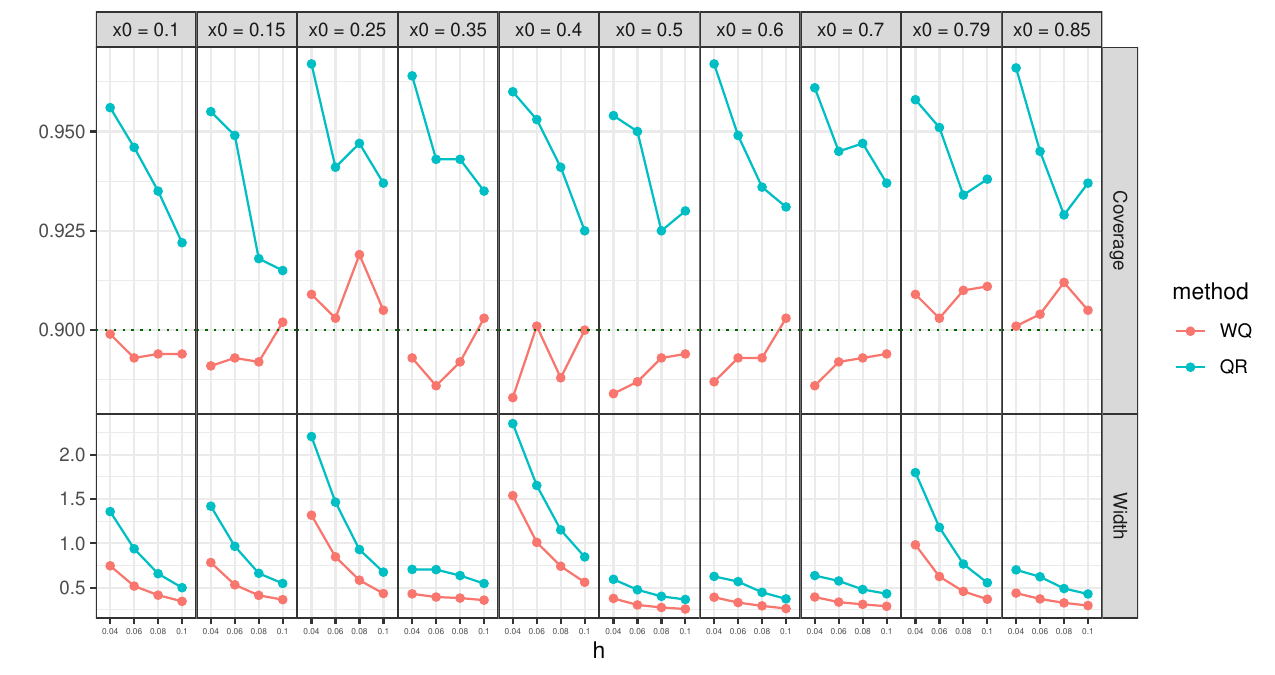}
    \caption{Coverage and width for the Bump signal, setting 3.}
    \label{fig:bump-s3-q0.5-tri}
\end{figure}

\begin{figure}[H]
    \centering
    \includegraphics[width=\textwidth]{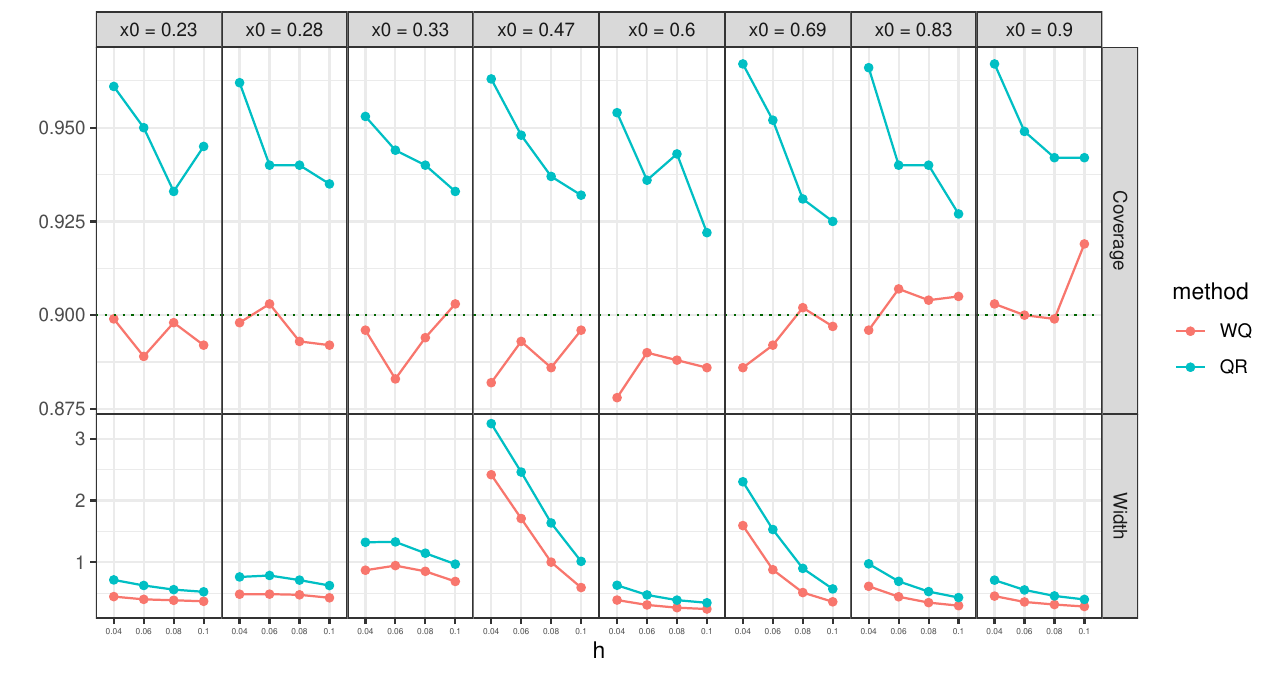}
    \caption{Coverage and width for the Spikes signal, setting 3.}
    \label{fig:spikes-s3-q0.5-tri}
\end{figure}

\begin{figure}[H]
    \centering
    \includegraphics[width=\textwidth]{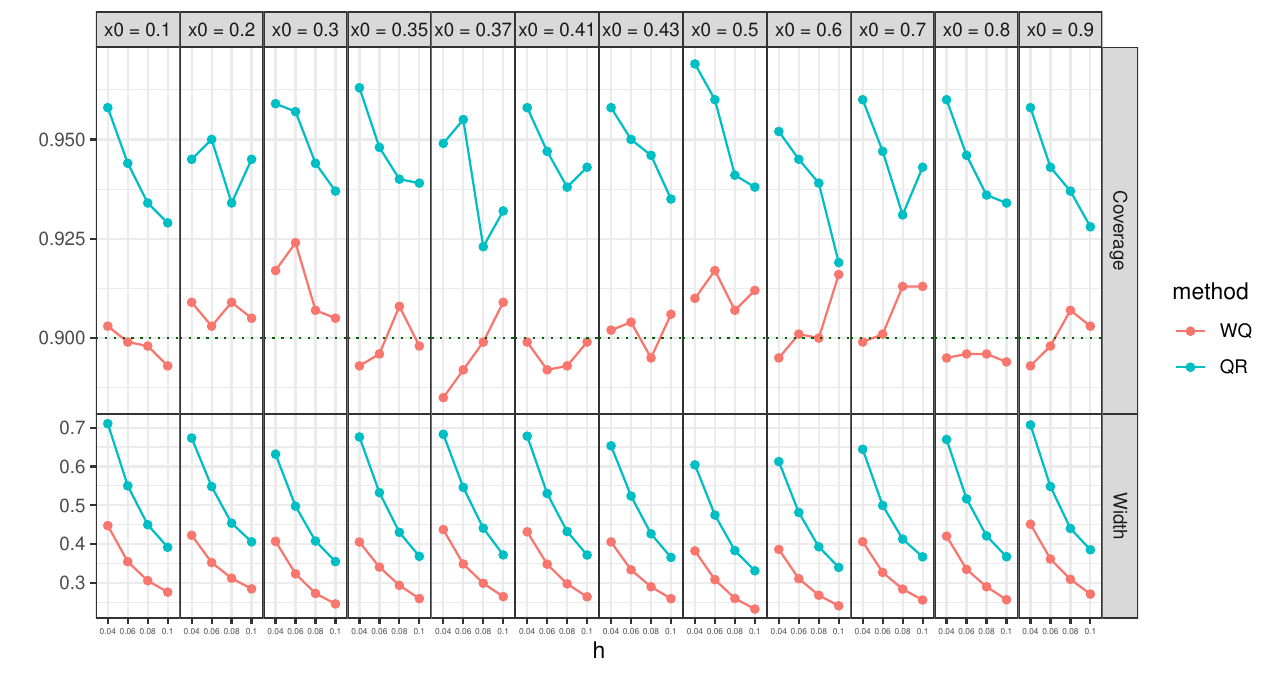}
    \caption{Coverage and width for the Parabolas signal, setting 3.}
    \label{fig:parabolas-s3-q0.5-tri}
\end{figure}

\begin{figure}[H]
    \centering
    \includegraphics[width=\textwidth]{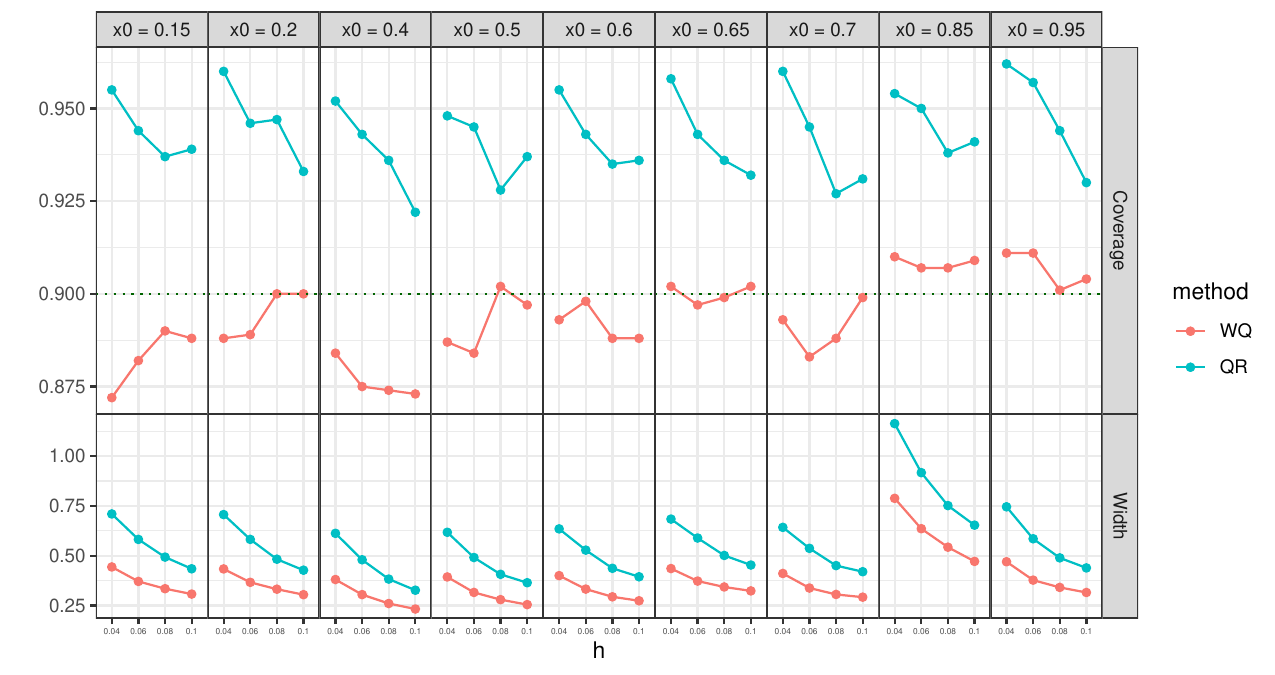}
    \caption{Coverage and width for the Angles signal, setting 3.}
    \label{fig:angles-s3-q0.5-tri}
\end{figure}

\subsection{Biweight kernel}
\label{subsec:biweight kernel}
In this section, we show the empirical width and coverage using the proposed method for $\theta_{0.5}$ with the biweight kernel. 

\begin{figure}[H]
    \centering
    \includegraphics[width=\textwidth]{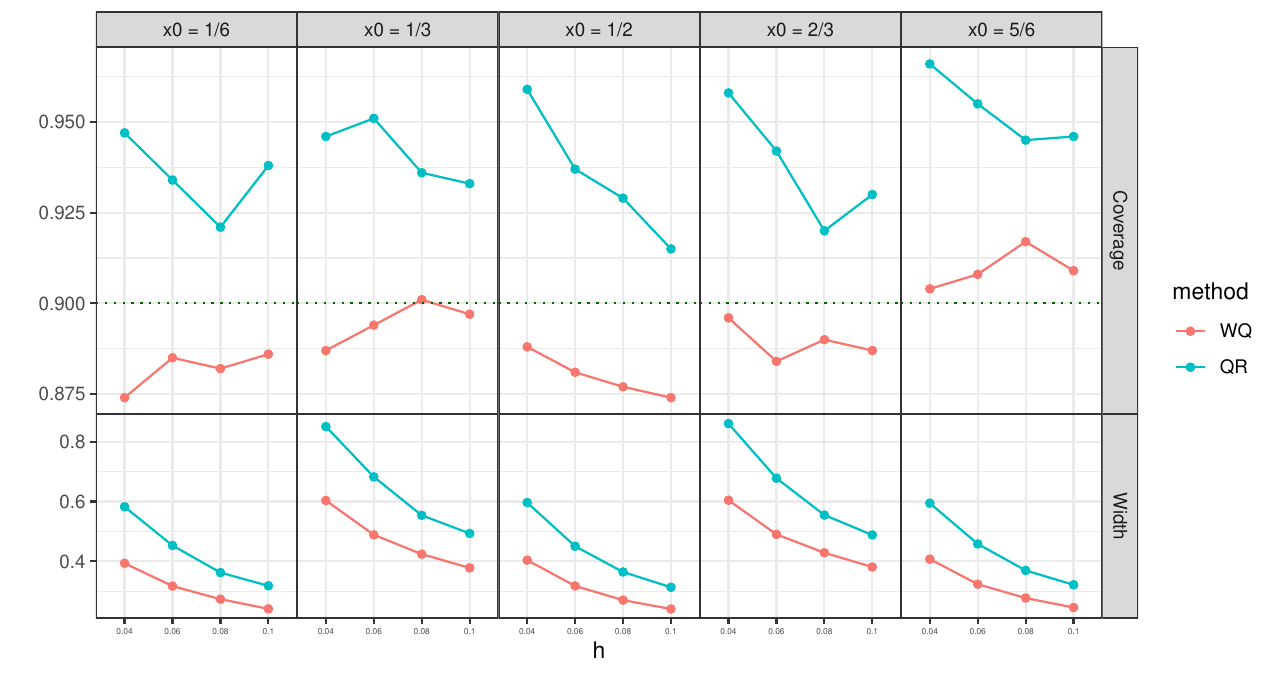}
    \caption{Coverage and width for the Step signal, setting 1.}
    \label{fig:step-s1-q0.5-biweight}
\end{figure}

\begin{figure}[H]
    \centering
    \includegraphics[width=\textwidth]{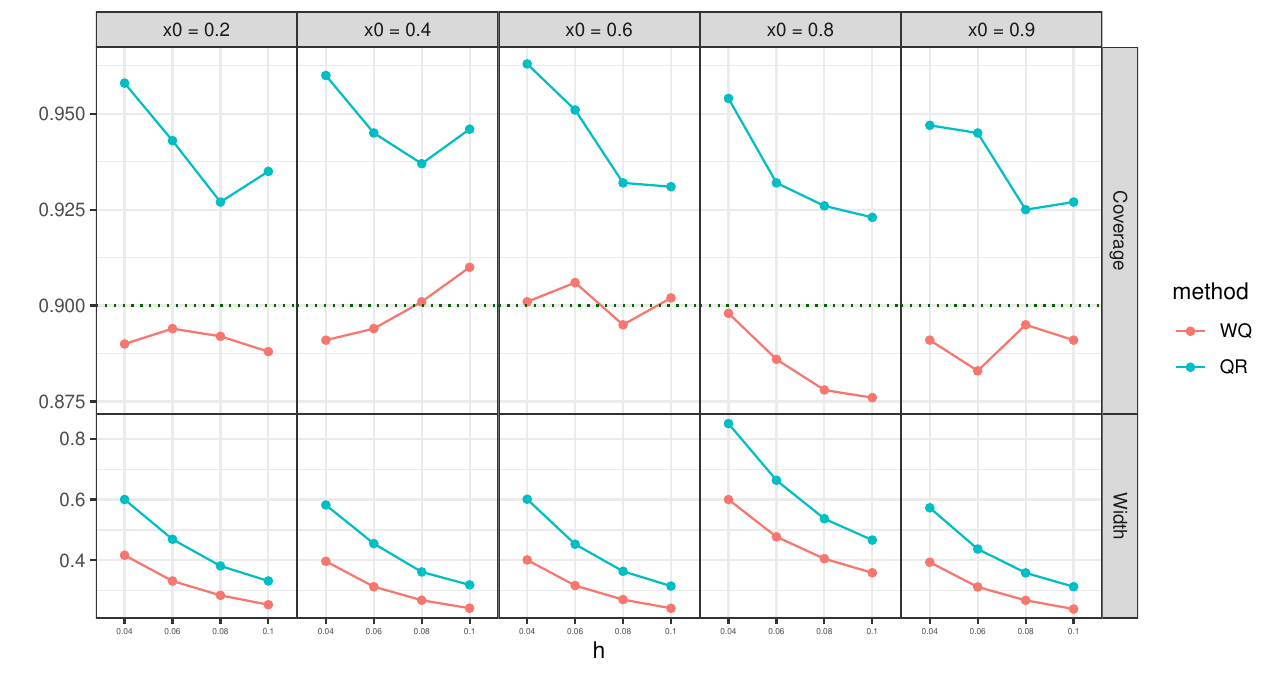}
    \caption{Coverage and width for the Blip signal, setting 1.}
    \label{fig:blip-s1-q0.5-biweight}
\end{figure}

\begin{figure}[H]
    \centering
    \includegraphics[width=\textwidth]{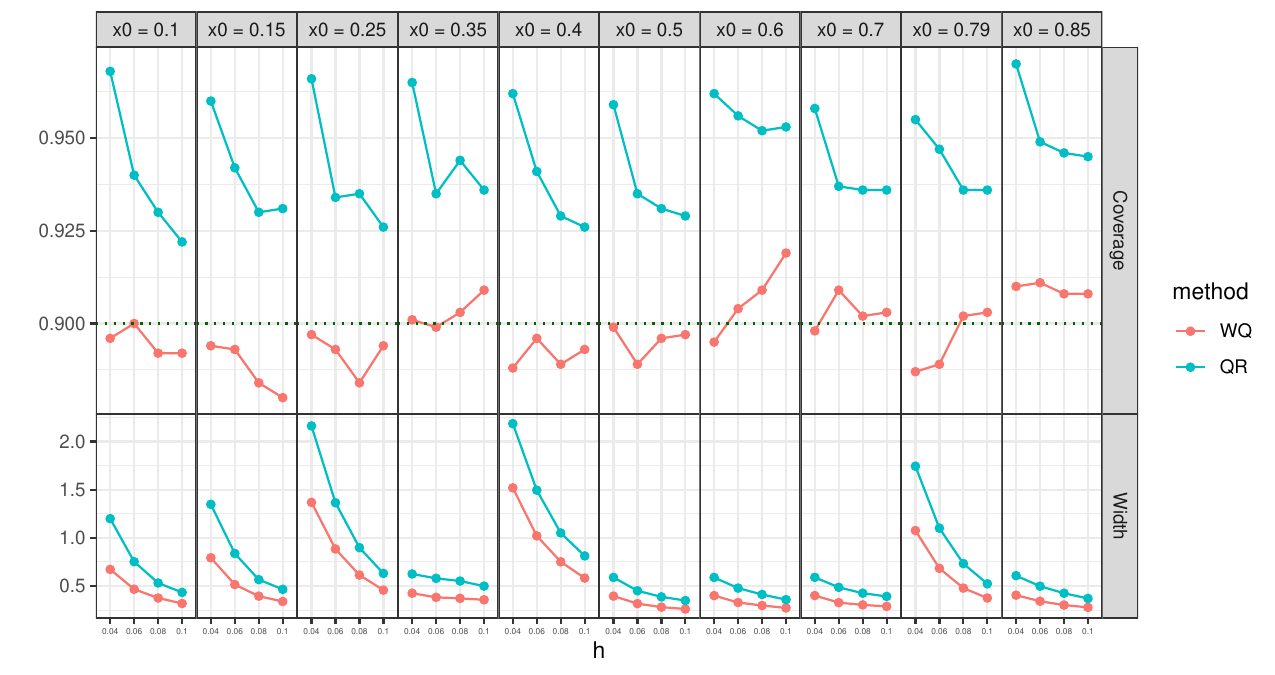}
    \caption{Coverage and width for the Bump signal, setting 1.}
    \label{fig:bump-s1-q0.5-biweight}
\end{figure}

\begin{figure}[H]
    \centering
    \includegraphics[width=\textwidth]{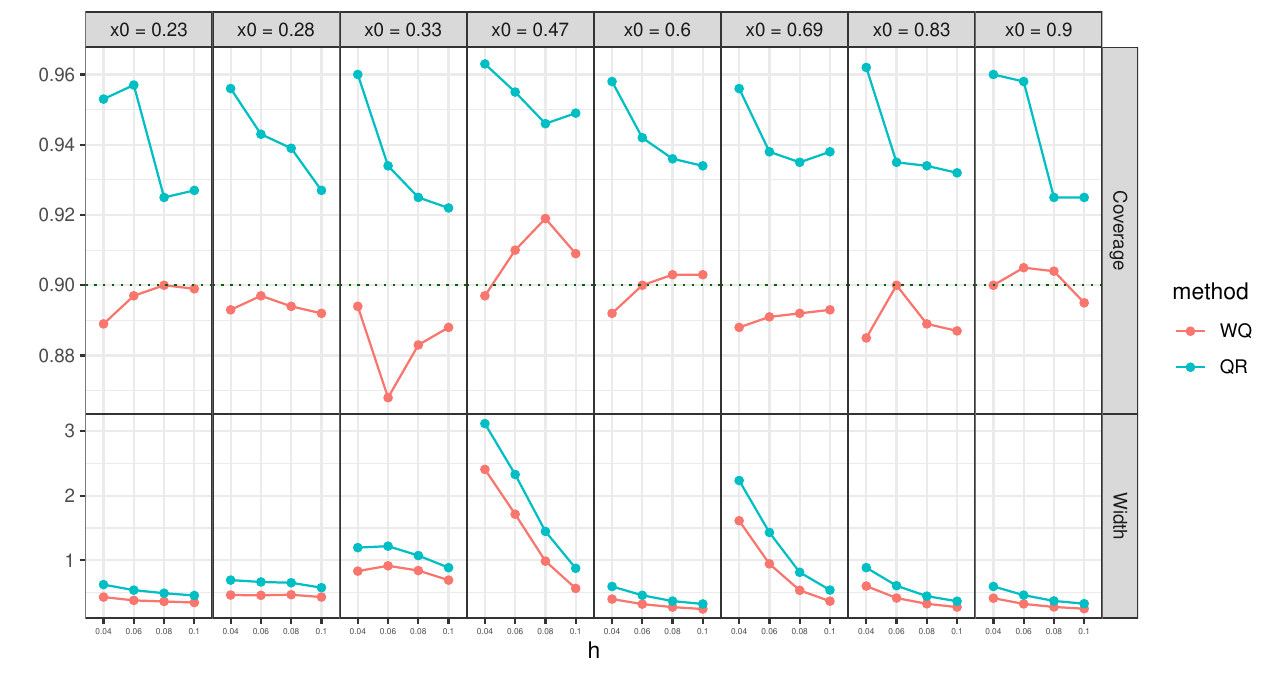}
    \caption{Coverage and width for the Spikes signal, setting 1.}
    \label{fig:spikes-s1-q0.5-biweight}
\end{figure}

\begin{figure}[H]
    \centering
    \includegraphics[width=\textwidth]{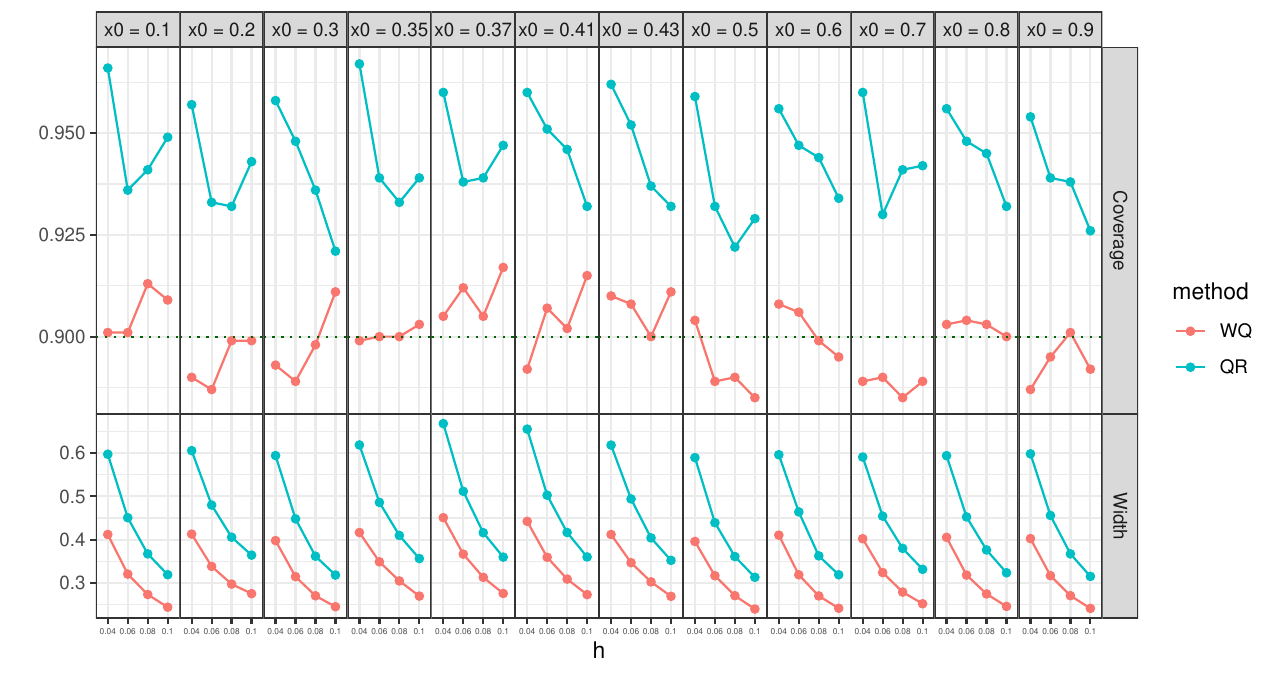}
    \caption{Coverage and width for the Parabolas signal, setting 1.}
    \label{fig:parabolas-s1-q0.5-biweight}
\end{figure}

\begin{figure}[H]
    \centering
    \includegraphics[width=\textwidth]{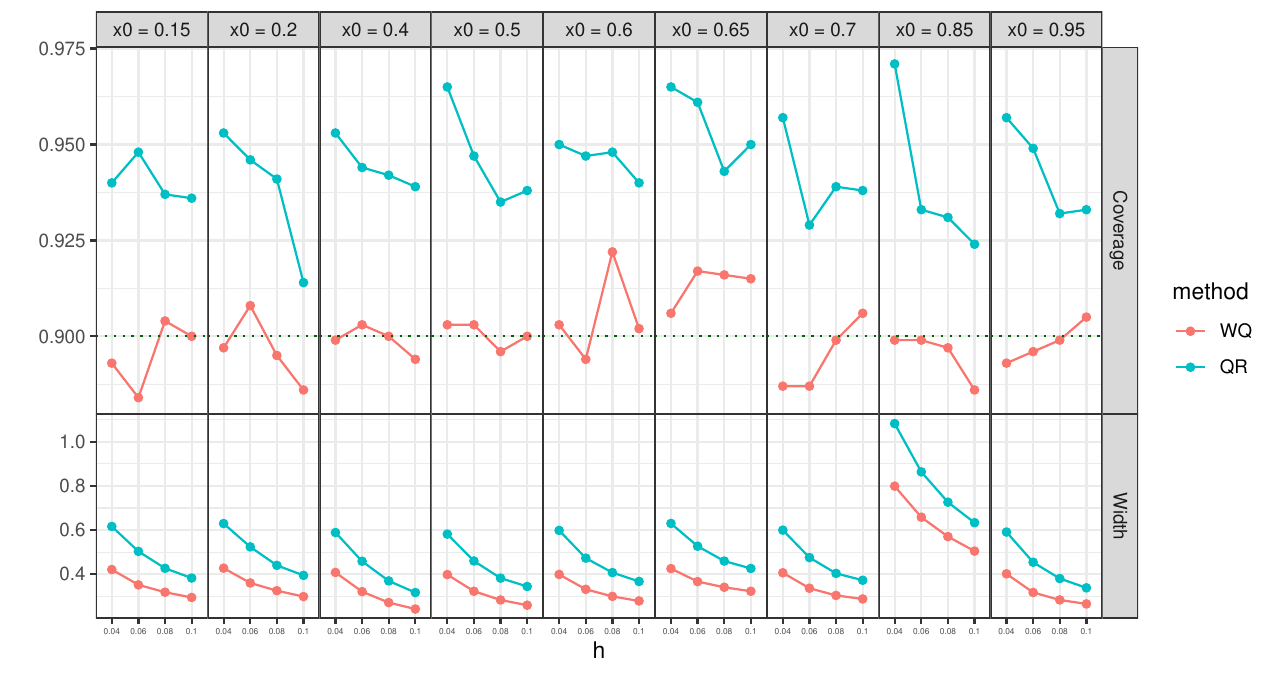}
    \caption{Coverage and width for the Angles signal, setting 1.}
    \label{fig:angles-s1-q0.5-biweight}
\end{figure}

	\bibliographystyle{apalike}
	\bibliography{conformal}
\end{document}